\title{\MakeUppercase{An Adversarial Approach to Structural Estimation}}
\author[1]{Tetsuya Kaji}%\thanks{tkaji@chicagobooth.edu}}
\author[2]{Elena Manresa}%\thanks{em1849@nyu.edu}}
\author[1]{Guillaume Pouliot}%\thanks{guillaumepouliot@uchicago.edu}}
\affil[1]{University of Chicago}
\affil[2]{New York University}
\date{\normalsize\today}
\newcommand\new@setfontsize[3]{%
    \ifx \protect \@typeset@protect \let \@currsize #1\fi \fontsize {#2}{#3}\selectfont
}
\let\orig@setfontsize\@setfontsize
\let\orig@cases\cases
\let\endorig@cases\endcases
\titleformat{\section}{\normalfont\centering}{\thesection}{1em}{\MakeUppercase}
\titleformat*{\subsection}{\itshape\centering}
\theoremstyle{plain}
\newtheorem{thm}{Theorem}
\newtheorem*{thm*}{Theorem}
\newtheorem{prop}[thm]{Proposition}
\newtheorem{lem}{Lemma}
\newtheorem{coro}[thm]{Corollary}
\theoremstyle{definition}
\newtheorem*{defn}{Definition}
\declaretheorem[name=Assumption,qed={\hfill$\square$}]{asm}
\declaretheorem[name=Example,qed={\hfill$\square$}]{exa}
\newtheorem*{exa*}{Example}
\theoremstyle{remark}
\newtheorem*{rem}{Remark}
\renewcommand\thmcontinues[1]{continued}
\crefname{thm}{Theorem}{Theorems}
\crefname{prop}{Proposition}{Propositions}
\crefname{lem}{Lemma}{Lemmas}
\crefname{coro}{Corollary}{Corollaries}
\crefname{add}{Addendum}{Addendums}
\crefname{asm}{Assumption}{Assumptions}
\crefname{alg}{Algorithm}{Algorithms}
\crefname{proc}{Procedure}{Procedures}
\crefname{exa}{Example}{Examples}
\crefname{section}{Section}{Sections}
\crefname{subsection}{Section}{Sections}
\crefname{appendix}{Appendix}{Appendices}
\DeclareMathOperator{\Var}{Var}
\DeclareMathOperator{\erfc}{erfc}
\DeclareMathOperator*{\conv}{\mathchoice{%
	\,\longrightarrow\,}{%		% displaystyle
	\rightarrow}{%			% textstyle
	\rightarrow}{%			% scriptstyle
	\rightarrow}%			% scriptscriptstyle
}
\DeclareMathOperator*{\maxg}{max\vphantom{g}} % max with bottom space
\def\argmax{\mathop{\arg\max}}	% argmax
\def\argmin{\mathop{\arg\min}}		% argmin
\def\blfootnote{\gdef\@thefnmark{}\@footnotetext}
\newcommand*\bigcdot{\mathpalette\bigcdot@{.5}}
\newcommand*\bigcdot@[2]{\mathbin{\vcenter{\hbox{\scalebox{#2}{$\m@th#1\bullet$}}}}}
\begin{document}

\defcitealias{dfj}{DFJ}

\maketitle

\begin{abstract}
We propose a new simulation\hyp{}based estimation method, adversarial estimation, for structural models. The estimator is formulated as the solution to a minimax problem between a generator (which generates simulated observations using the structural model) and a discriminator (which classifies whether an observation is simulated). The discriminator maximizes the accuracy of its classification while the generator minimizes it. We show that, with a sufficiently rich discriminator, the adversarial estimator attains parametric efficiency under correct specification and the parametric rate under misspecification. We advocate the use of a neural network as a discriminator that can exploit adaptivity properties and attain fast rates of convergence. We apply our method to the elderly's saving decision model and show that our estimator uncovers the bequest motive as an important source of saving across the wealth distribution, not only for the rich.
\bigskip

\vspace{4pt}
\textsc{JEL Codes:} C13, C45.

\vspace{4pt}
\textsc{Keywords:} structural estimation, generative adversarial networks, neural networks, simulated method of moments, indirect inference, efficient estimation.
\end{abstract}

\blfootnote{We thank Mariacristina De Nardi and John Jones for sharing the data and codes for the empirical application and for very helpful discussion. We also thank Isaiah Andrews, Manuel Arellano, Stephane Bonhomme, Aureo De Paula, Costas Meghir, Chris Hansen, Koen Jochmans, Whitney Newey, Luigi Pistaferri, Bernard Salanie, Dennis Kristensen, Anna Mikusheva, Zhenling Jiang, Xintong Han, and Daniel Waldinger, as well as numerous participants in conferences and venues for helpful discussion. Elsie Hoffet, Yijun Liu, Ignacio Ciggliutti, and Marcela Barrios provided superb research assistance. We gratefully acknowledge the support of the NSF by means of the Grant SES\hyp{}1824304 and the Richard N.\ Rosett Faculty Fellowship and the Liew Family Faculty Fellowship at the University of Chicago Booth School of Business.}

\section{Introduction}
Structural estimation is a useful tool to %
learn about the effects of policies that are yet to be implemented.
Structural models are naturally articulated as parametric models and, as such, may be estimated using maximum likelihood (MLE). However, likelihood functions are sometimes too complex to evaluate or may not exist in closed form.
This has spurred large literature on simulation\hyp{}based estimation methods.

A prominent example of such methods is the simulated method of moments (SMM) \citep{m1989}.
If we want identification and estimation of the parameters to rely on specific features, SMM is a natural tool as long as such features can be expressed as moments.
At the same time, a naive strategy to stack many moments is known to yield poor finite sample properties \citep{altonji1996small}. This tradeoff is especially pronounced in models with rich heterogeneity, where the number of moments may grow rapidly with the number of covariates. %
While this problem may be resolved if we can reduce the moments to a handful of informative ones, such a choice is often not obvious.

This paper proposes a new simulation\hyp{}based estimation method, which we call {\em adversarial estimation}.
It is inspired by the {\em generative adversarial networks (GAN)}, a machine learning algorithm developed by \citet{goodfellow2014generative} to generate realistic images.
We adopt their adversarial framework to estimate the structural parameters that generate realistic economic data.
The proposed estimator achieves efficiency under correct specification and the parametric rate under misspecification.
Thus, our method is useful in applications where the likelihood is not computable but simulation is feasible and it can be a more efficient alternative to SMM.

The generative adversarial estimation framework is a minimax game between two components---the {\em discriminator} and the {\em generator}---over classification accuracy:
\[
	\min_{\{\text{\em generator}\}}\max_{\{\text{\em discriminator}\}}\text{\em classification accuracy}.
\]
The generator is an algorithm that produces the simulated data; its objective is to find a data\hyp{}generating process that confuses the discriminator.
The discriminator is a classification algorithm that distinguishes the observed data from the simulated data; it takes an observation as input and classifies whether it comes from observed data or simulated data; its objective is to maximize the accuracy of its classification.

In the original GAN, both the discriminator and the generator are given as neural networks (hence the name).
In this paper, we take the generator to be the structural model we intend to estimate and the discriminator to be an arbitrary classification algorithm (while our primary choice is a neural network).
For classification accuracy, we employ the cross\hyp{}entropy loss, following \citet{goodfellow2014generative}.%
\footnote{There are also other losses considered in the literature.
In machine learning, they concern high\hyp{}dimensional data such as images, sounds, and texts, %
and the Wasserstein distance has gained huge popularity for its ability to measure the distance of disjoint probability distributions.
It is also used in economic applications \citep{imbensGANs}.}

Interestingly, our framework casts a bridge between SMM and MLE.
When we use a logistic discriminator, the resulting estimator is asymptotically equivalent to optimally\hyp{}weighted SMM (\cref{sec:logistic}).
When we use the oracle discriminator, the resulting estimator is equivalent to MLE under the condition that the simulation sample size increases faster than the actual sample size.
Of particular interest is the middle case, in which the oracle discriminator is not available but a sufficiently rich discriminator capable of approximating it is used.
Under some conditions, the resulting estimator enjoys the desirable properties of both SMM and MLE: the user has the flexibility to choose moments if desired, a closed\hyp{}form likelihood is not required, and the asymptotic efficiency is attained.

We illustrate the theoretical properties of our estimator in simulations using
simple models. We show that the curvature of the classification accuracy is comparable to that of the log likelihood function for a suitable
choice of discriminator. In addition, we show that the estimator can achieve
the parametric rate under misspecification, and finally that compared to SMM,
the adversarial estimator suffers less from the small sample bias. We also showcase
the implementation of the method using a Roy Model with two occupations over
two time periods.

Using the adversarial estimation framework, we revisit investigation of the elderly's saving motives in \citet{dfj}.
Understanding different channels of saving motives is vital in evaluating social insurances.
We aim to disentangle three reasons to save: survival risk, medical expense risk, and bequest motive.
The structural model is dynamic and agents face heterogeneous risk by gender, age, health status, and permanent income. %
We demonstrate the capacity of adversarial estimation with two specifications: first with the inputs representing similar identifying variation as \citet{dfj}, and second the inputs augmented with gender and health.

The rest is organized as follows.
\cref{sec:2} defines the adversarial estimation framework.
\cref{sec:toy} illustrates the estimator with simple examples.
\cref{statistics} develops the asymptotic properties.
\cref{sec:application} applies the method to \citet{dfj}.

\section{Adversarial Estimation Framework} \label{sec:2}

The adversarial estimation has two main components: simulation and discrimination.
The simulation component is the same as other simulation\hyp{}based estimation methods, such as SMM or indirect
inference, but the discriminator component is new. The essence of the adversarial
framework is to find a parameter value for which the corresponding simulated
data is indistinguishable from the real data according to the discriminator. We
now describe each component in turn.

Suppose we have data $\{X_i\}_{i=1}^n$ drawn i.i.d.\ from an unknown distribution $P_0$.
Suppose we have a fully parametric model $\{P_\theta:\theta\in\Theta\}$ for which the likelihood is not tractable but simulation is feasible.%
\footnote{This is the case for many structural models in economics involving dynamic optimal decision making.}
Our target is the parameter $\theta$ that best describes the distribution of the data $P_0$ through the model $P_\theta$.

We formalize the simulation process as follows: for a given $\theta$, and a given sample size $m$, we can obtain a sample of simulated observations, $\{X_{i,\theta}\}_{i=1}^m$, according to model $P_\theta$ by taking draws $\{\tilde{X}_i\}_{i=1}^m$ from a known distribution $\tilde{P}_0$ and applying a transformation $T_\theta$ to them, $X_{i,\theta}=T_\theta(\tilde{X}_i)$.%
\footnote{If $P_\theta$ is discrete, e.g., Bernoulli with parameter $\theta$, we can generate $\tilde{X}_i\sim U[0,1]$ and apply the inverse transform sampling, e.g., $X_{i,\theta}=\mathbbm{1}\{\tilde{X}_i\geq 1-\theta\}$. %
}

For illustration, take the example of a normal location model with known variance $1$ and unknown mean $\theta$, $P_\theta=N(\theta,1)$.
When we generate a simulated observation $X_{i,\theta}$ from $P_\theta$, we can generate a standard normal observation $\tilde{X}_i\sim N(0,1)$ and convert it into $P_\theta$ through $X_{i,\theta}=\theta+\tilde{X}_i$.
We now turn to the discriminator. The discriminator is the novelty in the estimation framework and is the key component in the construction of the objective function for the adversarial estimator.
For some $\theta$ and $x$, consider the problem of assessing whether $x$ is from $P_\theta$ or $P_0$.
If $P_\theta$ is very different from $P_0$, it should be easy to distinguish realizations of $P_\theta$ from those of $P_0$. If they are close, it should be harder.
The idea, therefore, is to pick a classification algorithm that takes a value $x$ and predicts which distribution it came from, and to search for the value of $\theta$ for which the algorithm can classify the least.

If we had access to the probability density functions corresponding to
$P_0$ and $P_\theta$, it would be easy to assign the provenance of $x$ according to the
likelihood of $x$ for each distribution. This suggests
an estimation strategy based on the search of $\theta$ for which the probability that
any draw $X_{i,\theta}$ is drawn from $P_0$ versus $P_\theta$ is $0.5$. Since we do not have access
to the probability distributions, this strategy is infeasible. However, we can take
advantage of the availability of samples $\{X_i\}_{i=1}^n$ and $\{X_{i,\theta}\}_{i=1}^m$ to estimate the
extent to which, for a given $\theta$, these two distributions are different. In particular,
we use the fitted predictions of a discrete choice model (called the discriminator), where the dependent variable is $1$ if the data is real and $0$ if it is
simulated, and the explanatory variables are $X_i$ if the data is real,
and $X_{i,\theta}$ if it is simulated. When $\theta$ is a poor candidate to describe the observed
data, the predictions will be either close to $1$ or close to $0$. However,
as $\theta$ becomes a better candidate to describe the real data, the distribution of
the prediction will concentrate around $1/2$.

Formally, classification is defined as a function $D:\mathcal{X}\to[0,1]$ such that $D(x)$ represents the likelihood of $x$ being an actual observation; $D(x)=1$ means that $x$ is classified as ``actual'' with certainty; $D(x)=0$ that $x$ is classified as ``simulated'' with certainty.
Denote by $\mathcal{D}_n$ the class of classification functions we consider.
The dependence on $n$ allows us to use a richer classification algorithm as the sample size gets larger.
The choice of $\mathcal{D}_n$ is an important one for the researcher as it impacts the properties of the estimator. While any class of binary choice models would work, certain choices will have attractive properties, as we discuss below.

The {\em adversarial estimator} is defined by the following minimax problem:%
\footnote{Minimization and maximization need not be solved exactly (\cref{asm:neyman,asm:discriminator}).}
\[
	\hat{\theta}=\argmin_{\theta\in\Theta}\maxg_{\vphantom{\theta}D\in\mathcal{D}_n} \frac{1}{n}\sum_{i=1}^n\log D(X_i)+\frac{1}{m}\sum_{i=1}^m\log(1-D(X_{i,\theta})).
\]
Since $D$ is between $0$ and $1$, both $\log D$ and $\log(1-D)$ are nonpositive.
If $\{X_i\}$ and $\{X_{i,\theta}\}$ are very different from each other, the discriminator may be able to find $D$ that assigns $1$ on the support of $\{X_i\}$ and $0$ on the support of $\{X_{i,\theta}\}$, in which case the inner maximization attains the value of zero.
Meanwhile, regardless of the values of $\{X_i\}$ and $\{X_{i,\theta}\}$, the discriminator can always attain the classification accuracy of $2\log(1/2)$ by setting $D\equiv1/2$.%
\footnote{This is of course provided that a constant function $1/2$ is in $\mathcal{D}_n$, which is usually the case.}
In general, therefore, the inner maximization will give a number between $2\log(1/2)$ and $0$, and the closer it is to $2\log(1/2)$, the less able the discriminator is to classify the observations.

When we let $n$ and $m$ grow, we obtain the population counterpart of the problem
\[
	\min_{\theta\in\Theta}\max_{\vphantom{\theta}D\in\mathcal{D}_n}\,\mathbb{E}_{X_i\sim P_0}[\log D(X_i)]+\mathbb{E}_{X_{i,\theta}\sim P_\theta}[\log(1-D(X_{i,\theta}))].
\]
If there is no restriction on $\mathcal{D}_n$ (so any function $D:\mathcal{X}\to[0,1]$ is allowed), the optimum classifier for the population inner maximization is known to be
\[
	D_\theta(x)\coloneqq\frac{p_0(x)}{p_0(x)+p_\theta(x)},
\]
where $p_0$ and $p_\theta$ are the densities of $P_0$ and $P_\theta$ with respect to some common dominating measure \citep[Proposition 1]{goodfellow2014generative}.
We call this $D_\theta$ the {\em oracle discriminator}.
If the model is correctly specified, then $\theta_0$ is the unique solution to the outer minimization \citep[Theorem 1]{goodfellow2014generative}.
In the normal location model, if we assume $P_0=N(0,1)$, the oracle discriminator is given by
\(
	D_\theta(x)=\Lambda(\frac{1}{2}\theta^2-\theta x)
	=\Lambda(-\theta(x-\frac{1}{2}\theta))
\).
Since $\Lambda$ is a standard logistic cdf, $\Lambda(0)=1/2$, $\lim_{t\to\infty}\Lambda(t)\to 1$, and $\lim_{t\to-\infty}\Lambda(t)\to 0$. Therefore, if $\theta<0$, positive deviation of $x$ from $\theta/2$ is classified as more likely an actual observation, and negative deviation as less likely; if $\theta=0$, whatever value of $x$ has an equal chance of being actual.

The choice of $\mathcal{D}_n$ gives rise to a few special cases.
First, if we use the oracle discriminator $D_\theta$ in lieu of maximization, the resulting estimator for $\theta$ becomes efficient under correct specification and $m\gg n$ \citep[Proposition 1]{kmp2021}.
In the normal location model, we see that as $m\to\infty$, the oracle estimator solves
\[
	\hat{\theta}=\argmin_{\theta\in\Theta}\frac{1}{n}\sum_{i=1}^n\log\Lambda\Bigl(\frac{1}{2}\theta^2-\theta X_i\Bigr)+\mathbb{E}_\theta\Bigl[\log\Bigl(1-\Lambda\Bigl(\frac{1}{2}\theta^2-\theta X_{i,\theta}\Bigr)\Bigr)\Bigr].
\]
The FOC combined with the first\hyp{}order Taylor expansion of $\Lambda$ around $0$ yields
\[
	0=\frac{1}{n}\sum_{i=1}^n(\theta-X_i)\Bigl[1-\Lambda\Bigl(\frac{\theta^2}{2}-\theta X_i\Bigr)\Bigr]-\mathbb{E}_\theta\Bigl[(\theta-X_{i,\theta})\Lambda\Bigl(\frac{\theta^2}{2}-\theta X_{i,\theta}\Bigr)\Bigr]
	\approx\frac{1}{2n}\sum_{i=1}^n(\theta-X_i).
\]
Therefore, $\hat{\theta}$ is approximately the sample average, which is the MLE.

Second, if we use the logistic discriminator, the cross\hyp{}entropy loss can be interpreted as the log likelihood of the logistic regression where the actual observations are labeled $1$ and the simulated $0$.%
\footnote{When $n\neq m$, the two sets of observations are weighted differently.}
The resulting estimator for $\theta$ is then asymptotically equivalent to the optimally\hyp{}weighted SMM with moments $\mathbb{E}[X_i]$ under $m\gtrsim n$ (\cref{sec:logistic}).
In practice, we may use a sieve of discriminators that can represent oracle $D_\theta$ asymptotically, e.g., the sieve of neural networks or the sieve of logistic discriminators with an increasing number of polynomials of $X$.
In fact, we can regard $D_\theta$ as the nuisance parameter estimated in the inner maximization.
\cref{statistics} presents conditions under which the estimation of $D_\theta$ via nonparametric estimation makes the adversarial estimator efficient.

The estimation algorithm is given in \Cref{alg:adv}.
As is customary in simulation\hyp{}based methods, we use the same shocks $\{\tilde{X}_i\}_{i=1}^m$ to generate $\{X_{i,\theta}\}_{i=1}^m$ across different $\theta$.
Note that since the optimal discriminator depends on $\theta$, we need to solve the inner maximization for each candidate value of $\theta$.
When the transformation $T_\theta$ is differentiable in $\theta$, line \ref{alg:ln:6} can be performed by gradient descent.

\begin{algorithm}[t]
\caption{Adversarial estimation} \label{alg:adv}
\hspace*{\algorithmicindent} \textbf{Input:} Actual data $\{X_i\}_{i=1}^n$, distribution of random shocks $\tilde{P}_0$, structural transformation map $T_\theta$, simulation sample size $m$, discriminator $\mathcal{D}_n$ \\
\hspace*{\algorithmicindent} \textbf{Output:} Estimate $\hat{\theta}$
\begin{algorithmic}[1]
\State Sample $\tilde{X}_i\sim\tilde{P}_0$ for $i=1,\dots,m$ \Comment{$\{\tilde{X}_i\}_{i=1}^m$ is drawn this time only}
\State $\hat{\theta}\leftarrow$ initial value %
\State $X_{i,\hat{\theta}}\leftarrow T_{\hat{\theta}}(\tilde{X}_i)$ for $i=1,\dots,m$
\State $\mathbb{M}_{\hat{\theta}}\leftarrow \max_{D\in\mathcal{D}_n}\frac{1}{n}\sum_{i=1}^n\log D(X_i)+\frac{1}{m}\sum_{i=1}^m\log(1-D(X_{i,\hat{\theta}}))$.
\Repeat
	\State $\tilde{\theta}\leftarrow$ new candidate value \Comment{e.g., by gradient descent or simplex method} \label{alg:ln:6}
	\State $X_{i,\tilde{\theta}}\leftarrow T_{\tilde{\theta}}(\tilde{X}_i)$ for $i=1,\dots,m$ \Comment{same $\{\tilde{X}_i\}_{i=1}^m$ as above}
	\State $\mathbb{M}_{\tilde{\theta}}\leftarrow \max_{D\in\mathcal{D}_n}\frac{1}{n}\sum_{i=1}^n\log D(X_i)+\frac{1}{m}\sum_{i=1}^m\log(1-D(X_{i,\tilde{\theta}}))$.
	\If{$\mathbb{M}_{\tilde{\theta}}<\mathbb{M}_{\hat{\theta}}$} \label{alg:ln:9}
		\State $\hat{\theta}\leftarrow \tilde{\theta}$
		\State $\mathbb{M}_{\hat{\theta}}\leftarrow\mathbb{M}_{\tilde{\theta}}$
	\EndIf
\Until{$\hat{\theta}$ converges}
\end{algorithmic}
\end{algorithm}

The asymptotic distribution of the adversarial estimator depends on the choice of $\mathcal{D}_n$.
If the discriminator is logistic, the asymptotic variance of the adversarial estimator coincides with SMM (\cref{sec:logistic}).
If $\mathcal{D}_n$ is a nonparametric discriminator, under some conditions, the asymptotic variance will be a function of the score and Hessian of the likelihood (\cref{thm:theta:dist,asm:spec}).
When the likelihood is intractable, estimating this asymptotic variance formula is not an easy task; we recommend using bootstrap in which we resample both $\{X_i\}_{i=1}^n$ and $\{\tilde{X}_i\}_{i=1}^m$ with replacement, even though it might be computationally costly.

\section{Illustration with Simple Examples} \label{sec:toy}

We overview our estimator with simple examples to provide insights.
For the case where our method is of practical interest, see %
\cref{sec:application}.
The first example we consider is a logistic location model in which the mean is unknown and the variance is known.
We illustrate three points using this example:
(1) the adversarial estimator achieves parametric efficiency under correct specification;
(2) the adversarial estimator is asymptotically normal under model misspecification;
(3) the adversarial estimator is less sensitive to the curse of dimensionality compared to SMM. %
Next, we consider a Roy model with two occupations over two periods of time.
This example illustrates the whole procedure of estimation and inference in a case when the likelihood is intractable.

We write $\mathbb{L}_\theta\coloneqq-\frac{1}{2n}\sum_{i=1}^n\log p_\theta(X_i)$ for minus half the log likelihood and $\mathbb{M}_\theta(D)\coloneqq\frac{1}{n}\sum_{i=1}^n\log D(X_i)+\frac{1}{m}\sum_{i=1}^m\log(1-D(X_{i,\theta}))$ for the sample objective function.
Also, we let %
$\phi(x)\coloneqq\frac{1}{\sqrt{2\pi}}\exp(-x^2/2)$
be the standard normal pdf and $\Phi(x)\coloneqq\int_{-\infty}^x\phi(t)dt$ the standard normal cdf.

\subsection{Logistic Location Model}

\subsubsection{Efficiency} \label{sec:ill1}

Suppose we have $n$ i.i.d.\ observations $X_1,\dots,X_n$ from the standard logistic distribution with pdf $p_0(x)=\Lambda(x)(1-\Lambda(x))$.
Our structural model is the logistic distribution with unit scaling, i.e., $p_\theta(x)=\Lambda(x-\theta)(1-\Lambda(x-\theta))$.
The oracle discriminator is given by
\(
	D_\theta(x)
	=\Lambda(-\theta-2\log(1+e^{-x})+2\log(1+e^{-(x-\theta)}))
\).
The synthetic data is generated as $X_{i,\theta}=T_\theta(\tilde{X}_i)\coloneqq\theta+\tilde{X}_i$ where $\tilde{X}_i$ follows the standard logistic distribution.
We set $n=m=300$ and run 500 replications.

To yield a discriminator capable of representing the oracle, we consider
\(
	D_\lambda(x)=\Lambda(\lambda_0-2\log(1+e^{-x})+2\log(1+e^{-x+\lambda_1}))
\)
parameterized by $\lambda\in\mathbb{R}^2$.
This class of discriminator is ``correctly specified'' in the sense that the oracle discriminator is given by $\lambda_\theta\coloneqq(-\theta,\theta)^\top$; thus, it allows us to obliterate the approximation error of the nonparametric estimator for $D_\theta$ and focus on the aspects conducive to efficiency.
Nonetheless, we also present results with a nonparametric estimator, a shallow neural network, at the end of this section.

\begin{figure}
\centering
\begin{subfigure}[t]{0.32\textwidth}
\centering
\includegraphics[page=1]{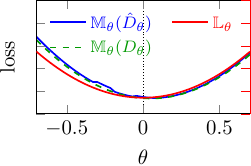}
\caption{Curvature of cross\hyp{}entropy loss and log likelihood.}
\label{fig:logistic:ortho}
\end{subfigure}
\ 
\begin{subfigure}[t]{0.32\textwidth}
\centering
\includegraphics[page=1]{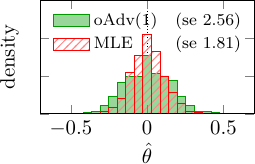}
\caption{Oracle adversarial estimator and MLE.}
\label{fig:logistic:1}
\end{subfigure}
\ 
\begin{subfigure}[t]{0.32\textwidth}
\centering
\includegraphics[page=2]{fig_logistic.pdf}
\caption{Adversarial estimator and MLE.}
\label{fig:logistic:2}
\end{subfigure}
\caption{The logistic location model. The curvature of oracle and estimated cross\hyp{}entropy losses matches the log likelihood (\subref{fig:logistic:ortho}). This makes the adversarial estimator comparable with MLE (\subref{fig:logistic:2}) and as good as the oracle estimator (\subref{fig:logistic:1}). The standard errors (se) are multiplied by $\sqrt{n}$. The vertical dots indicate the true parameter $\theta_0$.}
\label{fig:logistic}
\end{figure}

An intuition behind efficiency is that the curvature of $\mathbb{M}_\theta(\hat{D}_\theta)$ at $\theta_0$ is proportional to the Fisher information.
\Cref{fig:logistic:ortho} illustrates this point.
First, the curvature of $\mathbb{L}_\theta$ is a quarter of the Fisher information, and so is the curvature of the oracle loss $\mathbb{M}_\theta(D_\theta)$ (\cref{lem:lan:misspec}).
Second, the estimated loss $\mathbb{M}_\theta(\hat{D}_\theta)$ traces $\mathbb{M}_\theta(D_\theta)$ very well.
As a result, the curvature of $\mathbb{M}_\theta(\hat{D}_\theta)$ also becomes a quarter of the Fisher.
This is somewhat surprising given that $\hat{D}_\theta$ is estimated separately for each $\theta$ (\Cref{alg:adv}, line 8); the plot of $\mathbb{M}_\theta(\hat{D}_\theta)$ could have been zigzag if maximization was noisy each time.

An important practice that effects ``smooth'' $\mathbb{M}_\theta(\hat{D}_\theta)$ is to use a deterministic algorithm for the inner maximization.
Here, we use Matlab's \texttt{fminsearch} for maximization, which employs a deterministic algorithm.
However, if some stochastic optimization is to be used, we advise that the random seed be reset to the same value each time maximization is carried out.
For a logistic discriminator with differentiable $T_\theta$, \cref{sec:asm:neyman} shows that the estimated loss $\mathbb{M}_\theta(\hat{D}_\theta)$ will be smooth in $\theta$ if $\{\tilde{X}_i\}$ are fixed and the exact maximum is attained at the inner step for each $\theta$.

With the curvature of $\mathbb{M}_\theta(\hat{D}_\theta)$ matching $\mathbb{M}_\theta(D_\theta)$, the asymptotic variance of the adversarial estimator is $1+n/m$ times the inverse Fisher (\cref{thm:theta:efficiency}, \cref{sec:dist}).
In this example, the theoretical asymptotic standard deviation of MLE is 1.73 while of the adversarial estimator is 2.45, which are closely reproduced in \Cref{fig:logistic:1,fig:logistic:2}.

Similar results hold when $m$ is increased (figures omitted); the curvatures of $\mathbb{M}_\theta(D_\theta)$ and $\mathbb{M}_\theta(\hat{D}_\theta)$ match closely with $\mathbb{L}_\theta$, and the adversarial estimator gets closer to MLE.
For example, when $m=3{,}000$ (so $m=10 n$), the standard error of the adversarial estimator decreases to 2.00 (theoretically 1.94).

\begin{figure}
\centering
\begin{subfigure}[t]{0.32\textwidth}
\centering
\includegraphics[page=2]{fig_logistic_ortho.pdf}
\caption{$m=n$. The curve of $\mathbb{M}_\theta(\hat{D}_\theta)$ matches the oracle.}
\label{fig:logistic2:ortho1}
\end{subfigure}
\ 
\begin{subfigure}[t]{0.32\textwidth}
\centering
\includegraphics[page=1]{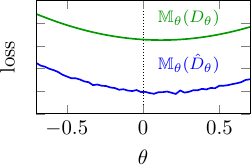}
\caption{$m=2n$. The level is off, but the curvature is right.}
\label{fig:logistic2:ortho2}
\end{subfigure}
\ 
\begin{subfigure}[t]{0.32\textwidth}
\centering
\includegraphics[page=1]{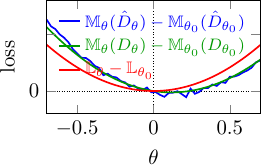}
\caption{$m=2n$. Demeaned (\subref{fig:logistic2:ortho2}) to highlight the curvature.}
\label{fig:logistic2:ortho3}
\end{subfigure}
\caption{Use of a neural network discriminator on the logistic location model.}
\label{fig:logistic2}
\end{figure}

To see how a nonparametric discriminator fares, we also try a shallow neural network discriminator.
The input is a one\hyp{}dimensional observation $X$; there are three nodes in one hidden layer with a hyperbolic tangent activation function; the output is a sigmoid function.
The neural network discriminator is trained for each $\theta$ using Matlab's \texttt{train} function, which is deterministic.
\Cref{fig:logistic2:ortho1} shows that the estimated loss $\mathbb{M}_\theta(\hat{D}_\theta)$ still gives a good approximation to $\mathbb{M}_\theta(D_\theta)$.
It is notable that as we increase $m$, the {\em level} of $\mathbb{M}_\theta(\hat{D}_\theta)$ becomes off from $\mathbb{M}_\theta(D_\theta)$, but the {\em curvature} is still correctly estimated (\Cref{fig:logistic2:ortho2}).
If we adjust the level, it becomes clear the curvature matches that of the log likelihood (\Cref{fig:logistic2:ortho3}).
According to our theory, the quality of the adversarial estimator hinges on the curvature of $\mathbb{M}_\theta(\hat{D}_\theta)$ but {\em not} on the level of $\mathbb{M}_\theta(\hat{D}_\theta)$ being close to $\mathbb{M}_\theta(D_\theta)$.
Thus, the resulting estimator is very close to the oracle (figures omitted).

We also examine if bootstrap works on the adversarial estimator.
The bootstrap consists of 500 replications with resampling both $\{X_i\}_{i=1}^n$ and $\{\tilde{X}_i\}_{i=1}^m$ with replacement but holding fixed the specification of the discriminator.
The bootstrap standard error for the logistic discriminator is $2.29$ and for the neural network discriminator $2.52$, which are close to the theoretical limit $2.45$.

\subsubsection{Normality under Misspecification} \label{sec:toy:misspec}

We now move to explore how the adversarial estimator behaves under misspecification.
Suppose we misspecify the model to be a normal location family with unit variance, $p_\theta(x)=\frac{1}{\sqrt{2\pi}}\exp(-\frac{(x-\theta)^2}{2})$, while the true distribution is still the standard logistic distribution that has variance $\pi^2/3\approx 3.3$.
The oracle discriminator is
\(
	D_\theta(x)
	=\Lambda(\log\sqrt{2\pi}-x+\tfrac{1}{2}(x-\theta)^2-2\log(1+e^{-x}))
\).
Here, we use the correctly specified discriminator
\(
	D_\lambda(x)=\Lambda(\lambda_0+\lambda_1 x+\lambda_2 x^2+\lambda_3\log(1+e^{-x}))
\)
parameterized by $\lambda\in\mathbb{R}^4$.

\begin{figure}
\centering
\begin{subfigure}[t]{0.32\textwidth}
\centering
\includegraphics[page=1]{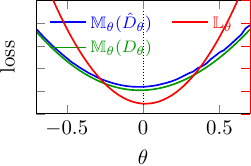}
\caption{Loss and quasi\hyp{}log likelihood.}
\label{fig:misspec:ortho}
\end{subfigure}
\ 
\begin{subfigure}[t]{0.32\textwidth}
\centering
\includegraphics[page=1]{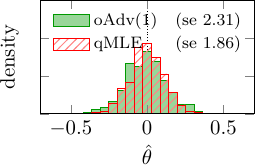}
\caption{Oracle adversarial estimator and quasi\hyp{}MLE.}
\label{fig:misspec:1}
\end{subfigure}
\ 
\begin{subfigure}[t]{0.32\textwidth}
\centering
\includegraphics[page=3]{fig_misspec.pdf}
\caption{Adversarial estimator and quasi\hyp{}MLE.}
\label{fig:misspec:2}
\end{subfigure}
\caption{The normally\hyp{}misspecified logistic location model. The adversarial estimator is comparable with quasi\hyp{}MLE.}
\label{fig:misspec}
\end{figure}

\Cref{fig:misspec:ortho} shows that the curvature of $\mathbb{L}_\theta$ is much steeper than $\mathbb{M}_\theta(D_\theta)$ due to misspecification (particularly to misspecification of variance).
However, the estimated loss $\mathbb{M}_\theta(\hat{D}_\theta)$ still estimates the curvature of the oracle loss correctly.
\Cref{fig:misspec:1} shows that the oracle adversarial estimator is approximately normal and comparable with quasi\hyp{}MLE.
A slight inflation of the variance is due to the fact that the adversarial estimator uses the synthetic data and gets affected by their randomness while quasi\hyp{}MLE does not.
\Cref{fig:misspec:2} shows that the adversarial estimator is very close to the oracle one.
The theoretical asymptotic standard deviation of the adversarial estimator implied by \cref{thm:theta:dist} is 2.27 while of quasi\hyp{}MLE is 1.81.
The observations for the increased synthetic sample size $m$ and for the neural network discriminator are analogous to \cref{sec:ill1} and hence omitted for space.

\subsubsection{Comparison with SMM} \label{sec:toy:smm}

Finally, we compare the adversarial estimator with SMM. As discussed, the adversarial estimator with a logistic discriminator is asymptotically equivalent to SMM.
However,
it is known that stacking up many moments yields poor finite\hyp{}sample performance of SMM.
To compare our estimator in this regard, the logistic location model is a particularly interesting one.
Unlike the normal distribution, the sample average is not a sufficient statistic for the mean of a logistic distribution.
Indeed, the collection of order statistics is known to be a minimal sufficient statistic.
Technically speaking, therefore, the higher\hyp{}order moments $\mathbb{E}[X_i^2]$, $\mathbb{E}[X_i^3]$, $\dots$ do contribute in identifying the mean.
In light of this, we carry out the following exercise.

For SMM, we consider matching (1) three moments $\mathbb{E}[X_i]$, $\mathbb{E}[X_i^2]$, $\mathbb{E}[X_i^3]$, (2) seven moments $\mathbb{E}[X_i],\dots,\mathbb{E}[X_i^7]$, and (3) eleven moments $\mathbb{E}[X_i],\dots,\mathbb{E}[X_i^{11}]$.
Since the optimally\hyp{}weighted SMM beats the unweighted SMM in all cases in our simulation, we only present the optimally\hyp{}weighted SMM for comparison; the weights are estimated with the real data.
For the adversarial estimator, we use the same set of moments as the inputs to the discriminator.
In particular, the discriminator is the logistic classifier of the form
\(
	D_\lambda(x)=\Lambda(\lambda_0+\lambda_1 x+\cdots+\lambda_d x^d)
\)
for $d=3,7,11$ parameterized by $\lambda\in\mathbb{R}^{1+d}$.
In contrast to the one in \cref{sec:ill1}, this discriminator is ``misspecified'' but is good enough to yield a reasonable estimator for $\theta$.
As discussed in \cref{sec:logistic}, the optimally\hyp{}weighted SMM is asymptotically equivalent to the adversarial estimator with this choice of the discriminator.
However, their finite\hyp{}sample properties are subject to debate.
For this exercise, we decrease the sample sizes to $n=m=200$ to emphasize the finite\hyp{}sample performance.

\begin{figure}
\centering
\begin{subfigure}[t]{0.32\textwidth}
\centering
\includegraphics[page=2]{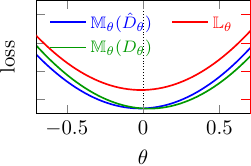}
\caption{Loss with 3 moments.}
\label{fig:curse:ortho:1}
\end{subfigure}
\ 
\begin{subfigure}[t]{0.32\textwidth}
\centering
\includegraphics[page=4]{fig_curse_ortho.pdf}
\caption{Loss with 7 moments.}
\label{fig:curse:ortho:2}
\end{subfigure}
\ 
\begin{subfigure}[t]{0.32\textwidth}
\centering
\includegraphics[page=6]{fig_curse_ortho.pdf}
\caption{Loss with 11 moments.}
\label{fig:curse:ortho:3}
\end{subfigure}
\caption{The logistic location model with increasing numbers of inputs. The curvature of the cross\hyp{}entropy loss is very close to the log likelihood up to 7 moments and is still good for 11 moments.}
\label{fig:curse:ortho}
\end{figure}

\Cref{fig:curse:ortho} shows the plots of the cross\hyp{}entropy loss and the log likelihood for varying numbers of inputs.
It is noteworthy that the curvature of the estimated loss $\mathbb{M}_\theta(\hat{D}_\theta)$ is very close to the oracle one up to seven moments.
We see nonnegligible deviation of the curvature for eleven moments, but as we see below, it is still sharp enough to yield a much better estimator than SMM.

The first row of \Cref{fig:curse} shows the histogram of the optimally\hyp{}weighted SMM.
The horizontal scales of the figures are adjusted to match the distribution of SMM; MLE is the same for all figures and serves as the reference point.
We see that the precision of SMM deteriorates quickly as the number of moments increases.
For eleven moments, the standard error is eight times as large as MLE.
The second row of \Cref{fig:curse} presents the adversarial estimator.
Even for seven inputs, the adversarial estimator is as tight as MLE, and for eleven moments, it is still comparable (three\hyp{}times larger standard error).
This shows that the adversarial estimator is less sensitive to the number of moments compared to SMM.
This can especially be an advantage when we do not know which moments to match.

\begin{figure}
\centering
\begin{subfigure}[t]{0.32\textwidth}
\centering
\includegraphics[page=8]{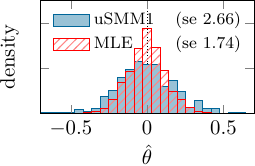}
\caption{SMM with 3 moments.}
\label{fig:curse:1}
\end{subfigure}
\ 
\begin{subfigure}[t]{0.32\textwidth}
\centering
\includegraphics[page=10]{fig_curse.pdf}
\caption{SMM with 7 moments.}
\label{fig:curse:2}
\end{subfigure}
\ 
\begin{subfigure}[t]{0.32\textwidth}
\centering
\includegraphics[page=12]{fig_curse.pdf}
\caption{SMM with 11 moments.}
\label{fig:curse:3}
\end{subfigure}
\\
\begin{subfigure}[t]{0.32\textwidth}
\centering
\includegraphics[page=14]{fig_curse.pdf}
\caption{Adversarial estimator with 3 moments.}
\label{fig:curse:1}
\end{subfigure}
\ 
\begin{subfigure}[t]{0.32\textwidth}
\centering
\includegraphics[page=16]{fig_curse.pdf}
\caption{Adversarial estimator with 7 moments.}
\label{fig:curse:2}
\end{subfigure}
\ 
\begin{subfigure}[t]{0.32\textwidth}
\centering
\includegraphics[page=18]{fig_curse.pdf}
\caption{Adversarial estimator with 11 moments.}
\label{fig:curse:3}
\end{subfigure}
\caption{The logistic location model with increasing numbers of inputs. Precision of the optimally\hyp{}weighted SMM rapidly deteriorates as the number of moments increases. The adversarial estimator is much less sensitive. The standard errors (se) are multiplied by $\sqrt{n}$.}
\label{fig:curse}
\end{figure}

We also note that since the moments are highly correlated, the estimation of the discriminator gives warnings of multicollinearity, but it does not impair the quality of the subsequent estimator $\hat{\theta}$.
This is insightful for a more general neural network discriminator since neural network weights are not identified uniquely.
This observation is in line with our theory that depends on the quality of the estimator $\hat{D}_\theta$ for $D_\theta$ but not on the quality of the estimator $\hat{\lambda}_\theta$ for $\lambda_\theta$.

The improvement of our method relative to SMM is analogous to the improvement of empirical likelihood to GMM \citep{i2002}.
SMM, like GMM, suffers from substantial bias when the number of moments is large; our method, like empirical likelihood, has better finite\hyp{}sample and large\hyp{}sample properties at the expense of computational cost.
The idea of both comes from treating the nuisance component as a kind of a nonparametric maximum likelihood problem.
Meanwhile, both SMM and GMM retain the advantage of simplicity to easily accommodate time series settings.

\subsection{The Roy Model}

We consider the following model of comparative advantage, for which the likelihood is not available under some configurations of the parameter values.
Suppose there are two sectors and two periods.
In each period, an agent chooses the sector to work in to maximize her present and discounted future expected wages.
The wage $w_{i1s}$ for agent $i$ in period 1 in sector $s$ is determined by $\log w_{i1s}=\mu_s+\varepsilon_{i1s}$, and the wage $w_{i2s}$ for agent $i$ in period 2 in sector $s$ by $\log w_{i2s}=\mu_s+\gamma_s\mathbbm{1}\{d_{i1}=s\}+\varepsilon_{i2s}$ where $d_{i1}$ is the sector choice of agent $i$ in period 1.
The parameter $\mu_s$ represents the base wage in sector $s$ and $\gamma_s$ the returns to experience in sector $s$.
The error terms are observable to the agent in respective periods (so she observes $\varepsilon_{i1\bigcdot}$ in period 1 and $\varepsilon_{i2\bigcdot}$ in period 2) and distribute as
\[
	\begin{bsmallmatrix}\vphantom{0^0_0}\varepsilon_{i11}\\\vphantom{0^0_0}\varepsilon_{i12}\\\vphantom{0^0_0}\varepsilon_{i21}\\\vphantom{0^0_0}\varepsilon_{i22}\end{bsmallmatrix}=N\!\left(
	\begin{bsmallmatrix}\vphantom{0^0_0}0\\\vphantom{0^0_0}0\\\vphantom{0^0_0}0\\\vphantom{0^0_0}0\end{bsmallmatrix},
	\begin{bsmallmatrix}
	\sigma_1^2&\rho_s\sigma_1\sigma_2&\rho_t\sigma_1^2&\rho_s\rho_t\sigma_1\sigma_2\\
	\rho_s\sigma_1\sigma_2&\sigma_2^2&\rho_s\rho_t\sigma_1\sigma_2&\rho_t\sigma_2^2\\
	\rho_t\sigma_1^2&\rho_s\rho_t\sigma_1\sigma_2&\sigma_1^2&\rho_s\sigma_1\sigma_2\\
	\rho_s\rho_t\sigma_1\sigma_2&\rho_t\sigma_2^2&\rho_s\sigma_1\sigma_2&\sigma_2^2
	\end{bsmallmatrix}\right)\!.
\]
Observable to us is the quartet $X_i=(\log w_{i1},d_{i1},\log w_{i2},d_{i2})$ of realized log wages and sector choices in both periods.
They are functions of above variables by $w_{i1}=w_{i1d_{i1}}$, $d_{i1}=\argmax_{s\in\{1,2\}}w_{i1s}+\beta\mathbb{E}[w_{i2}\mid d_{i1}=s]$, $w_{i2}=w_{i2d_{i2}}$, and $d_{i2}=\argmax_{s\in\{1,2\}}w_{i2s}$ where $\beta$ is the discount factor.
We fix $\beta=0.9$, so $\beta$ is not a free parameter.

\subsubsection{Comparison with MLE} \label{sec:toy:mle}

As a first exercise, we show that the adversarial estimator has a computational advantage over MLE. To this end,
we fix $\rho_t=0$ to have a tractable likelihood.
Thus, the parameter of interest is $\theta=(\mu_1,\mu_2,\gamma_1,\gamma_2,\sigma_1,\sigma_2,\rho_s)$.
The true value is $\theta_0=(1.8,2,0.5,0,1,1,0.5)$.
We set the sample sizes at $n=m=300$.

Although the likelihood is available, the correct functional form of $D_\theta$ is not easy to derive.
So, we skip the correctly specified discriminator and use the neural network discriminator for the feasible adversarial estimator.
The neural network has one hidden layer with 10 nodes with a hyperbolic tangent activation function.
The input is $X_i$ without transformation. The output layer uses a sigmoid function.

\begin{figure}
\centering
\begin{subfigure}[t]{0.32\textwidth}
\centering
\includegraphics[page=1]{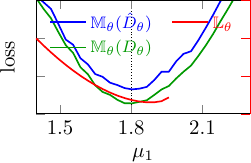}
\caption{Loss against $\mu_1$.}
\label{fig:roy:ortho:mu1}
\end{subfigure}
\ 
\begin{subfigure}[t]{0.32\textwidth}
\centering
\includegraphics[page=2]{fig_roy_ortho.pdf}
\caption{Loss against $\mu_2$.}
\label{fig:roy:ortho:mu2}
\end{subfigure}
\ 
\begin{subfigure}[t]{0.32\textwidth}
\centering
\includegraphics[page=3]{fig_roy_ortho.pdf}
\caption{Loss against $\gamma_1$.}
\label{fig:roy:ortho:gam1}
\end{subfigure}
\\
\begin{subfigure}[t]{0.32\textwidth}
\centering
\includegraphics[page=4]{fig_roy_ortho.pdf}
\caption{Loss against $\gamma_2$.}
\label{fig:roy:ortho:gam2}
\end{subfigure}
\ 
\begin{subfigure}[t]{0.32\textwidth}
\centering
\includegraphics[page=5]{fig_roy_ortho.pdf}
\caption{Loss against $\sigma_1$.}
\label{fig:roy:ortho:sig1}
\end{subfigure}
\ 
\begin{subfigure}[t]{0.32\textwidth}
\centering
\includegraphics[page=6]{fig_roy_ortho.pdf}
\caption{Loss against $\sigma_2$.}
\label{fig:roy:ortho:sig2}
\end{subfigure}
\caption{The loss for the Roy model. In the region where $\mathbb{L}_\theta$ is not plotted, the real data $X$ is not supported on the corresponding model $P_\theta$, so $\mathbb{L}_\theta=\infty$. The figure for $\rho_s$ is omitted.}
\label{fig:roy:ortho}
\end{figure}

Note that if $w_{i11}+\beta\mathbb{E}[w_{i2}\mid d_{i1}=1]<\beta\mathbb{E}[w_{i2}\mid d_{i1}=2]$, there is no way that agent $i$ chooses sector 1 in period 1.
Therefore, if we see a pair $(w_{i1},d_{i1})=(w_{i11},1)$ that satisfies this inequality for a particular $\theta$, this observation is not supported by $P_\theta$.
This is indeed a common phenomenon.
\Cref{fig:roy:ortho} plots the loss and the log likelihood against each parameter, holding all other parameters to the truth.
The range of the figures reflects the range of MLE and the adversarial estimator.
In this ``relevant'' region, we see that $\mathbb{L}_\theta$ sometimes breaks off; this is because the discontinued part does not support the real data so $\mathbb{L}_\theta$ is infinity.

Aside from possible inefficiency, this is not a problem for MLE insofar as the likelihood maximizer can be found.
However, there may be a trouble when the initial value of $\theta$ does not support the real data.
In fact, if we do not pick the initial value carefully, Matlab's \texttt{fminsearch} wanders around the unsupported region and returns a meaningless value after the evaluation count hits the limit. %
Meanwhile, \Cref{fig:roy:ortho} indicates that such a problem does not occur for the cross\hyp{}entropy loss; indeed, $\mathbb{M}_\theta(D_\theta)$ extends a nice curve throughout the ``unsupported'' region.
The key is in the robustness of the sample Jensen--Shannon divergence
\[
	\frac{1}{2}\mathbb{M}_\theta(D_\theta)=\frac{1}{2n}\sum_{i=1}^n\log\frac{p_0(X_i)}{p_0(X_i)+p_\theta(X_i)}+\frac{1}{2m}\sum_{i=1}^m\log\frac{p_\theta(X_{i,\theta})}{p_0(X_{i,\theta})+p_\theta(X_{i,\theta})}.
\]
When a single observation $X_i$ is not on the support of $p_\theta$, the corresponding fraction is $1$, which does not ruin the sum so we can still calculate a meaningful distance using remaining observations; hence the curve continues.
Moreover, even if the entire observations $\{X_i\}$ go outside the support, the divergence still works as long as (some of) synthetic data are on the support of $p_0$ and the second sum is informative.
It is only when both the entire real sample $\{X_i\}$ and the synthetic sample $\{X_{i,\theta}\}$ are outside the supports of $p_\theta$ and $p_0$ respectively that the Jensen--Shannon divergence gets fixated at $0$ and loses guidance on $\theta_0$.%
\footnote{If the supports of $p_0$ and $\{p_\theta\}$ are fully disjoint, the Jensen--Shannon projection $\theta_0$ is not defined.}
This is the intuition why the adversarial estimator does not suffer from the support issue in the Roy model.
We can also see this as a virtue of estimating the likelihood ratio as opposed to the raw likelihood.

This is not to say, however, that the cross\hyp{}entropy loss works for any kind of disjointly supported distributions.
When GAN is used to generate high\hyp{}dimensional data like images, the cross\hyp{}entropy loss is known to be very hard to train, partly because of the severe disjoint support problem.
An alternative is the Wasserstein loss that puts a nice gradation on the distance between completely disjoint distributions.

\Cref{fig:roy:ortho} also illustrates that, despite having discrete observables (sector choices), the objective functions are very smooth thanks to continuous observables (wages), so there is no need for smoothing in contrast to \cref{sec:toy:misspec}.
The resulting estimators are comparable with MLE just as in the previous examples (\Cref{fig:roy}).

\begin{figure}
\centering
\begin{subfigure}[t]{0.32\textwidth}
\centering
\includegraphics[page=1]{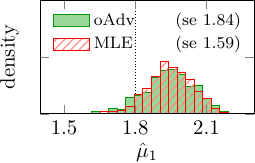}
\caption{Oracle adversarial estimator for $\mu_1$.}
\label{fig:roy:o:mu1}
\end{subfigure}
\ 
\begin{subfigure}[t]{0.32\textwidth}
\centering
\includegraphics[page=3]{fig_roy.pdf}
\caption{Oracle adversarial estimator for $\mu_2$.}
\label{fig:roy:o:mu2}
\end{subfigure}
\ 
\begin{subfigure}[t]{0.32\textwidth}
\centering
\includegraphics[page=5]{fig_roy.pdf}
\caption{Oracle adversarial estimator for $\gamma_1$.}
\label{fig:roy:o:gam1}
\end{subfigure}
\\
\begin{subfigure}[t]{0.32\textwidth}
\centering
\includegraphics[page=2]{fig_roy.pdf}
\caption{Adversarial estimator for $\mu_1$.}
\label{fig:roy:mu1}
\end{subfigure}
\ 
\begin{subfigure}[t]{0.32\textwidth}
\centering
\includegraphics[page=4]{fig_roy.pdf}
\caption{Adversarial estimator for $\mu_2$.}
\label{fig:roy:mu2}
\end{subfigure}
\ 
\begin{subfigure}[t]{0.32\textwidth}
\centering
\includegraphics[page=6]{fig_roy.pdf}
\caption{Adversarial estimator for $\gamma_1$.}
\label{fig:roy:gam1}
\end{subfigure}
\caption{The oracle adversarial estimator and the adversarial estimator for the Roy model with $\rho_t=0$. Figures for other parameters are omitted.}
\label{fig:roy}
\end{figure}

\subsubsection{Case with Intractable Likelihood} \label{sec:toy:case}

\begin{figure}
\centering
\begin{subfigure}[t]{0.244\textwidth}
\centering
\includegraphics[page=1]{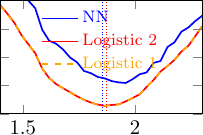}
\caption{Loss against $\mu_1$.}
\label{fig:case3:ortho:mu1}
\end{subfigure}
\begin{subfigure}[t]{0.244\textwidth}
\centering
\includegraphics[page=2]{fig_case3_ortho.pdf}
\caption{Loss against $\mu_2$.}
\label{fig:case3:ortho:mu2}
\end{subfigure}
\begin{subfigure}[t]{0.244\textwidth}
\centering
\includegraphics[page=3]{fig_case3_ortho.pdf}
\caption{Loss against $\gamma_1$.}
\label{fig:case3:ortho:gam1}
\end{subfigure}
\begin{subfigure}[t]{0.244\textwidth}
\centering
\includegraphics[page=4]{fig_case3_ortho.pdf}
\caption{Loss against $\gamma_2$.}
\label{fig:case3:ortho:gam2}
\end{subfigure}
\\
\begin{subfigure}[t]{0.244\textwidth}
\centering
\includegraphics[page=5]{fig_case3_ortho.pdf}
\caption{Loss against $\sigma_1$.}
\label{fig:case3:ortho:sig1}
\end{subfigure}
\begin{subfigure}[t]{0.244\textwidth}
\centering
\includegraphics[page=6]{fig_case3_ortho.pdf}
\caption{Loss against $\sigma_2$.}
\label{fig:case3:ortho:sig2}
\end{subfigure}
\begin{subfigure}[t]{0.244\textwidth}
\centering
\includegraphics[page=7]{fig_case3_ortho.pdf}
\caption{Loss against $\rho_t$.}
\label{fig:case3:ortho:rhot}
\end{subfigure}
\begin{subfigure}[t]{0.244\textwidth}
\centering
\includegraphics[page=8]{fig_case3_ortho.pdf}
\caption{Loss against $\rho_s$.}
\label{fig:case3:ortho:rhos}
\end{subfigure}
\caption{The first logistic loss does not identify $\rho_t$, while the second logistic does. The neural network loss indicates orthogonality, albeit a bit rough.}
\label{fig:case3:ortho}
\end{figure}

Now, we illustrate the whole procedure of estimation and inference using the Roy model with intractable likelihood.
Let us consider the same model as \cref{sec:toy:mle} without assuming $\rho_t=0$%
, so the parameter of interest is $\theta=(\mu_1,\mu_2,\gamma_1,\gamma_2,\sigma_1,\sigma_2,\rho_t,\rho_s)$.
The true values are the same as before.
We first pre\hyp{}estimate the model with a logistic discriminator and then estimate it with a neural network discriminator using the logistic estimator as the initial value.
Since it is naturally speculated that identification comes from the moments of the log wages, 
we consider the logistic discriminator of the form $D_\lambda(\log w_1,d_1,\log w_2,d_2)=\Lambda(\lambda_0+\lambda_1\log w_1+\lambda_2 d_1+\lambda_3\log w_2+\lambda_4 d_2+\lambda_5(\log w_1)^2+\lambda_6(\log w_2)^2+\lambda_7\log w_1\log w_2)$.

As the curvature of the second logistic loss is quite sharp, we may in practice stop here and go with the logistic estimator.
For illustration, we move on to the neural network discriminator with the same configuration as \cref{sec:toy:mle}.
The loss is plotted as the blue line in \Cref{fig:case3:ortho}.
The vertical blue dotted lines indicate the neural network estimator.%
\footnote{Note that the global minimizer is not the same as the local minimizers of the figures since the parameters are fixed at the logistic estimator.} %

\begin{figure}
\centering
\begin{subfigure}[t]{0.244\textwidth}
\centering
\includegraphics[page=1]{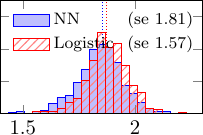}
\caption{$\hat{\mu}_1$}
\label{fig:case3:mu1}
\end{subfigure}
\begin{subfigure}[t]{0.244\textwidth}
\centering
\includegraphics[page=2]{fig_case3.pdf}
\caption{$\hat{\mu}_2$}
\label{fig:case3:mu2}
\end{subfigure}
\begin{subfigure}[t]{0.244\textwidth}
\centering
\includegraphics[page=3]{fig_case3.pdf}
\caption{$\hat{\gamma}_1$}
\label{fig:case3:gam1}
\end{subfigure}
\begin{subfigure}[t]{0.244\textwidth}
\centering
\includegraphics[page=4]{fig_case3.pdf}
\caption{$\hat{\gamma}_2$}
\label{fig:case3:gam2}
\end{subfigure}
\\
\begin{subfigure}[t]{0.244\textwidth}
\centering
\includegraphics[page=5]{fig_case3.pdf}
\caption{$\hat{\sigma}_1$}
\label{fig:case3:sig1}
\end{subfigure}
\begin{subfigure}[t]{0.244\textwidth}
\centering
\includegraphics[page=6]{fig_case3.pdf}
\caption{$\hat{\sigma}_2$}
\label{fig:case3:sig2}
\end{subfigure}
\begin{subfigure}[t]{0.244\textwidth}
\centering
\includegraphics[page=7]{fig_case3.pdf}
\caption{$\hat{\rho}_t$}
\label{fig:case3:rhot}
\end{subfigure}
\begin{subfigure}[t]{0.244\textwidth}
\centering
\includegraphics[page=8]{fig_case3.pdf}
\caption{$\hat{\rho}_s$}
\label{fig:case3:rhos}
\end{subfigure}
\caption{The bootstrap samples and bootstrap standard errors (multiplied by $\sqrt{n}$).}
\label{fig:case3}
\end{figure}

Next, we use bootstrap to compute the standard errors.
We resample both the actual data $\{X_i\}_{i=1}^n$ and the simulation shocks $\{\tilde{X}_i\}_{i=1}^m$ with replacement, pre\hyp{}estimate the model with the second logistic discriminator, and then estimate the model with the neural network discriminator.
\Cref{fig:case3} shows the bootstrap samples of the logistic estimator (red) and the neural network estimator (blue).
Due to some degree of roughness of the neural network loss and difficulty in identifying the global minimum, the neural network sample is overall more dispersed than the logistic estimator.
However, the neural network estimator is still comparable and sometimes produces even tighter estimates (for $\sigma_1$ and $\sigma_2$).
Note that the neural network takes as inputs the raw quartet but not the higher\hyp{}order moments.
So, the neural network with one hidden layer of 10 nodes ``figures out'' the correct moments to match and produces an estimator comparable with (and partly superior to) the logistic discriminator whose inputs were deliberately chosen.

\Cref{tbl:case3} presents the estimates and the standard errors (not multiplied by $\sqrt{n}$).
Along with the adversarial estimator, we present the results of SMM.
SMM matches the same seven moments as the inputs to the second logistic discriminator: $\mathbb{E}[\log w_{i1}]$, $\mathbb{E}[d_{i1}]$, $\mathbb{E}[\log w_{i2}]$, $\mathbb{E}[d_{i2}]$, $\mathbb{E}[(\log w_{i1})^2]$, $\mathbb{E}[(\log w_{i2})^2]$, and $\mathbb{E}[\log w_{i1}\log w_{i2}]$.
The optimal weights are estimated with the actual data.
We see that the adversarial estimator with the logistic discriminator is slightly more precise than SMM.
\begin{table}
\caption{Estimates and bootstrap standard errors for the Roy model for one replication.}
\label{tbl:case3}
\centering
\small
\begin{tabular}{lcccccccc}
\toprule \midrule
& $\mu_1$ & $\mu_2$ & $\gamma_1$ & $\gamma_2$ & $\sigma_1$ & $\sigma_2$ & $\rho_t$ & $\rho_s$ \\
\midrule
Logistic $D$ & 1.87 & 1.82 & 0.31 & 0.09 & 1.05 & 0.94 & $\mathllap{-}$0.06 & 0.49 \\
& (0.09)& (0.11) & (0.11) & (0.13) & (0.09) & (0.14) & (0.09) & (0.12) \\
Neural network $D$ & 1.86 & 1.81 & 0.35 & 0.01 & 1.04 & 1.28 & 0.07 & 0.50 \\
& (0.10) & (0.15) & (0.15) & (0.18) & (0.08) & (0.13) & (0.14) & (0.14) \\
SMM & 1.88 & 1.81 & 0.31 & 0.09 & 1.04 & 0.95 & $\mathllap{-}$0.05 & 0.49 \\
& (0.10) & (0.13) & (0.13) & (0.17) & (0.11) & (0.14) & (0.09) & (0.15) \\
\midrule
Truth & 1.80 & 2.00 & 0.50 & 0.00 & 1.00 & 1.00 & 0.00 & 0.50 \\
\bottomrule
\end{tabular}
\end{table}

\subsection{Challenges of the Adversarial Estimator}

Not every aspect of our method is superior to alternatives.
First, the theoretical results in this paper do not cover time series data.
The Roy model has a dynamic choice of individuals, but we have i.i.d.\ observations of individuals.
This is not to say that the adversarial framework cannot be extended thereto, but it would require a careful design of the discriminator to incorporate the structure of serial correlation.

Second, the adversarial estimator can be time\hyp{}consuming.
A logistic discriminator is as fast as II, but a neural network discriminator can take a long time to train.
In the logistic location model, both MLE and the adversarial estimator with a logistic discriminator take less than a second, while the adversarial estimator with a neural network discriminator takes about 30 seconds on a laptop without a GPU or parallelization.
For this, we recommend pre\hyp{}estimation with a logistic discriminator or other existing methods to start with a good initial value.

The third drawback is a possible roughness of the loss surface.
As seen in \cref{sec:ill1}, a logistic discriminator tends to yield a very smooth objective function (\Cref{fig:logistic}) while a neural network discriminator may sometimes get bumpy and have spurious local minima (\Cref{fig:logistic2}).
Some degree of roughness can be smoothed with the choice of a training method or an increased number of iterations; additionally, we can estimate the discriminator several times and take their average and/or use an optimization method tailored for noisy functions.
If the initial value is good enough, we may also employ grid search in the neighborhood to skip estimation of the gradient.
At any rate, we recommend plotting the loss surface before outer minimization.

Fourth, being comparable with MLE, the asymptotic variance of the adversarial estimator depends on the score and Hessian (\cref{thm:theta:dist}), which is not easy to compute given intractable likelihood.
Therefore, we may resort to resampling methods like bootstrap to obtain a variance estimator, which can cost additional time.

\section{Statistical Properties} \label{statistics}

This section derives the asymptotic properties of the adversarial estimator.
For more general results, we refer the reader to our earlier version \citep{kmp2022}.

Let $\tilde{X}_i\sim\tilde{P}_0$ be a common random shock used in simulation.
The simulated observation $X_{i,\theta}\sim P_\theta$ is then constructed by transforming $\tilde{X}_i$ through a map, $X_{i,\theta}=T_\theta(\tilde{X}_i)$.
For a function $f$, the sample averages of $f(X_i)$ and $f(X_{i,\theta})$ are denoted by $\mathbb{P}_0 f\coloneqq\frac{1}{n}\sum_{i=1}^n f(X_i)$ and $\mathbb{P}_\theta f\coloneqq\frac{1}{m}\sum_{i=1}^m f(X_{i,\theta})$.
Their population counterparts are denoted as $P_0 f\coloneqq\int f(x)dP_0$ and $P_\theta f\coloneqq\int f(x)dP_\theta$.
We denote the population objective function by $M_\theta(D)\coloneqq P_0\log D+P_\theta\log(1-D)$ as well as the previously defined sample objective function $\mathbb{M}_\theta(D)\coloneqq\mathbb{P}_0\log D+\mathbb{P}_\theta\log(1-D)$.
We also define the distance on $\Theta$ by $h(\theta_1,\theta_2)\coloneqq\sqrt{\int(\sqrt{p_{\theta_1}}-\sqrt{p_{\theta_2}})^2}$.

Suppose that observables can be written as $X_i=(y_i,x_i)$ where $\theta$ affects only the conditional distribution of $y_i$ given $x_i$.
Such $x_i$ is called the covariate.
In the maximum likelihood literature, it is known that an efficient estimator is obtained by maximizing the conditional likelihood of $y_i$ given $x_i$, so the marginal distribution of $x_i$ can be left unspecified.
The same observation holds true in the adversarial framework.
Namely, the oracle discriminator $D_\theta$ does not depend on the marginal distribution of $x_i$, so the distributions $P_0$ and $P_\theta$ can be regarded as specifying only the conditional distribution of $y_i$ given $x_i$.
In our theory, we save notational complexity by allowing this implicitly.
One possible complication this might bring is the method to draw covariates for the simulated data.
In \cref{sec:toy:misspec}, we set $n=m$ and use the same sets of covariates in the actual data. %
Another possibility is to bootstrap the covariates.

\subsection{Consistency}

The adversarial estimator is consistent if the estimated loss $\mathbb{M}_\theta(\hat{D}_\theta)$ converges uniformly to the oracle loss $\mathbb{M}_\theta(D_\theta)$ and $\hat{\theta}$ finds a global minimizer.
As the maximized cross\hyp{}entropy loss is effectively bounded between $2\log(1/2)$ and $0$, uniform convergence on $\Theta$ is not an unreasonable assumption.

\begin{thm}[Consistency of generator] \label{thm:theta:consistency}
Suppose that
for every open $G\subset\Theta$ containing $\theta_0$, we have $\inf_{\theta\notin G}M_\theta(D_\theta)>M_{\theta_0}(D_{\theta_0})$, that
$\{\log D_\theta:\theta\in\Theta\}$ and $\{\log(1-D_\theta)\circ T_\theta:\theta\in\Theta\}$ are $P_0$- and $\tilde{P}_0$\hyp{}Glivenko--Cantelli respectively, that $\sup_{\theta\in\Theta}|\mathbb{M}_\theta(\hat{D}_\theta)-\mathbb{M}_\theta(D_\theta)|\to 0$ in probability, and that $\hat{\theta}$ satisfies
\(
	\mathbb{M}_{\hat{\theta}}(\hat{D}_{\hat{\theta}})\leq\inf_{\theta\in\Theta}\mathbb{M}_\theta(\hat{D}_\theta)+o_P^\ast(1)
\).
Then, %
$h(\hat{\theta},\theta_0)\to0$ in probability.
\end{thm}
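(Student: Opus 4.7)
The plan is to run the standard argmin-consistency argument (van der Vaart, Theorem~5.7) on the profile criterion $\theta \mapsto \mathbb{M}_\theta(\hat D_\theta)$, treating it as a noisy estimator of the population profile $\theta \mapsto M_\theta(D_\theta)$.

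First, I would upgrade the Glivenko--Cantelli hypotheses to uniform convergence of the oracle-plugged sample criterion. Writing
\[
\mathbb{M}_\theta(D_\theta) = \mathbb{P}_0 \log D_\theta + \tilde{\mathbb{P}}_0\bigl(\log(1-D_\theta)\circ T_\theta\bigr),
\]
and analogously for $M_\theta(D_\theta)$ with $\mathbb{P}_0,\tilde{\mathbb{P}}_0$ replaced by $P_0,\tilde P_0$, the two Glivenko--Cantelli hypotheses give $\sup_\theta |\mathbb{M}_\theta(D_\theta) - M_\theta(D_\theta)| \to 0$ in outer probability. Combining this with the hypothesis $\sup_\theta |\mathbb{M}_\theta(\hat D_\theta) - \mathbb{M}_\theta(D_\theta)| = o_P^*(1)$ via the triangle inequality yields $\sup_\theta |\mathbb{M}_\theta(\hat D_\theta) - M_\theta(D_\theta)| = o_P^*(1)$.

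Next I invoke the approximate-minimization property: from $\mathbb{M}_{\hat\theta}(\hat D_{\hat\theta}) \leq \mathbb{M}_{\theta_0}(\hat D_{\theta_0}) + o_P^*(1)$, transferring both sides to the population criterion by the uniform convergence just established gives
\[
M_{\hat\theta}(D_{\hat\theta}) \leq M_{\theta_0}(D_{\theta_0}) + o_P^*(1).
\]
Fixing $\epsilon>0$ and setting $G_\epsilon = \{\theta : h(\theta,\theta_0)<\epsilon\}$, the well-separated minimum hypothesis delivers a fixed gap $\delta := \inf_{\theta\notin G_\epsilon} M_\theta(D_\theta) - M_{\theta_0}(D_{\theta_0}) > 0$. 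On the event $\{h(\hat\theta,\theta_0)\geq\epsilon\}$ one would have $M_{\hat\theta}(D_{\hat\theta}) - M_{\theta_0}(D_{\theta_0}) \geq \delta$, contradicting the display above with outer probability tending to one. Hence $h(\hat\theta,\theta_0)\to 0$ in outer probability.

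The main obstacle is essentially bookkeeping rather than substance. First, one must read the well-separation hypothesis with respect to a topology for which the Hellinger pseudo-balls around $\theta_0$ are genuinely open; the cleanest route is to interpret ``open $G$'' in the pseudo-topology induced by $h$ on $\Theta$, which is the convention in the maximum-likelihood consistency literature. Second, since $\hat\theta$ need not be Borel measurable and the suprema over $\Theta$ need not be either, the probability manipulations above must be carried out with outer expectations throughout, consistent with the way the Glivenko--Cantelli hypothesis is stated. Neither issue requires additional structural assumptions.
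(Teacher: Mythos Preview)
Your proposal is correct and follows essentially the same route as the paper. The paper first shows $\mathbb{M}_{\hat\theta}(D_{\hat\theta})\leq\inf_{\theta}\mathbb{M}_\theta(D_\theta)+o_P^\ast(1)$ via the same triangle-inequality bookkeeping and then invokes \citet[Corollary 3.2.3]{vw1996}, whereas you combine the two uniform convergences first and spell out the argmin argument by hand; these are cosmetic reorderings of the same proof.
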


This theorem does not assume that the generative model is parametric, so it also applies to possibly ``nonparametric'' generators.

\subsection{Rate of Convergence} \label{sec:rate}

To obtain a rate of convergence of the generator, we assume that the structural model is parametric.

\begin{asm}[Parametric generative model] \label{asm:maximal}
$\Theta$ is (a subset of) a Euclidean space;
$p_\theta$ is differentiable in $\theta$ at every $\theta\in\Theta$ for every $x\in\mathcal{X}$ with the derivative continuous in both $x$ and $\theta$;
the maximum eigenvalue of the Fisher information $I_\theta=P_\theta\dot{\ell}_\theta\dot{\ell}_\theta^\top$ is bounded uniformly in $\theta\in\Theta$;
the minimum eigenvalue of $I_\theta$ is bounded away from $0$ uniformly in $\theta\in\Theta$.
The same is assumed for the ``inverted'' structural model $\tilde{\mathcal{P}}_\theta=\{((p_0/p_\theta)\circ T_\theta)\tilde{p}_0:\theta\in\Theta\}$.
\end{asm}

To establish asymptotic results in terms of $n$, we next assume that the synthetic sample size $m$ grows as fast as $n$.
It is allowed (but not required) that $m$ diverges faster than $n$, in which case $n/m$ converges to $0$.

\begin{asm}[Growing synthetic sample size] \label{asm:m}
$n/m$ converges.
\end{asm}

The next assumption ensures that the estimation procedure finds a good minimum and that the derivative of the estimated loss converges to that of the oracle.
The first property hinges on the estimation procedure employed, the tolerance level, etc.
The second property is used in semiparametric $M$\hyp{}estimation to obtain a regular estimator orthogonal to nuisance estimation \citep[e.g.,][]{ks1993}. %
We revisit the plausibility of this condition in \cref{sec:asm:neyman}.

\begin{asm}[Approximately minimizing generator and orthogonality] \label{asm:neyman}
There exists a sequence of open balls $G_n\coloneqq\{\theta\in\Theta:h(\theta,\theta_0)<\eta_n\}$ such that $\eta_n\sqrt{n}\to\infty$,
\(
	\mathbb{M}_{\hat{\theta}}(\hat{D}_{\hat{\theta}})\leq\inf_{\theta\in G_n}\mathbb{M}_\theta(\hat{D}_\theta)+o_P^\ast(n^{-1})
\),
and
\(
	\inf_{\theta\in G_n}[\mathbb{M}_{\hat{\theta}}(\hat{D}_{\hat{\theta}})-\mathbb{M}_\theta(\hat{D}_\theta)]-[\mathbb{M}_{\hat{\theta}}(D_{\hat{\theta}})-\mathbb{M}_\theta(D_\theta)]=o_P^\ast(n^{-1})
\).
\end{asm}

The next assumption consists of three parts.
The first is a stronger identification condition than in \cref{thm:theta:consistency} that ensures a quadratic curvature at $\theta_0$; this is implied by the positive definiteness of $\tilde{I}_{\theta_0}$ in \cref{asm:dqm}.
The second assumes a degree of smoothness needed for $T_\theta$; this is trivial with $\tau_n\equiv 0$ if $n/m\to 0$ or \cref{asm:spec} holds; otherwise, if $T_\theta$ and $D_{\theta_0}$ are differentiable in $\theta$ and $x$ respectively and \cref{asm:dqm} holds, there is a closed\hyp{}form expression for $\tau_n$, which we derive in \cref{sec:asm:misspec}.
Third, we need that $P_0$ is ``close enough'' to $P_{\theta_0}$ in the sense that
convergence of $\theta$ to $\theta_0$
takes place on the support of $P_0$; this is also trivial under \cref{asm:spec}.

\begin{asm}[Smooth synthetic data generation and overlapping support] \label{asm:misspec}
There exists open $G\subset\Theta\subset\mathbb{R}^k$ containing $\theta_0$ in which $M_\theta(D_\theta)-M_{\theta_0}(D_{\theta_0})\gtrsim h(\theta,\theta_0)^2$.
There exists a sequence of $\mathbb{R}^k$\hyp{}valued random variables $\tau_n$ such that for every compact $K\subset\Theta$,
\(
	\sqrt{\tfrac{n}{m}}\sup_{h\in K}|\sqrt{m}(\tilde{\mathbb{P}}_0-\tilde{P}_0)(\sqrt{n}[\log(1-D_{\theta_0})\circ T_{\theta+h/\sqrt{n}}-\log(1-D_{\theta_0})\circ T_{\theta_0}]-h^\top\tau_n)|
	=o_P^\ast(1+\tfrac{n}{m})
\).
Also,
\(
	h(\theta,\theta_0)^2=O(\int D_{\theta_0}(\sqrt{p_{\theta_0}}-\sqrt{p_\theta})^2)
\)
as $\theta\to\theta_0$.
\end{asm}

\begin{thm}[Rate of convergence of generator] \label{thm:5}
Under \cref{asm:m,asm:neyman,asm:maximal,asm:misspec},
$h(\hat{\theta},\theta_0)=O_P^\ast(n^{-1/2})$.
\end{thm}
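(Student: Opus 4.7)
The plan is to cast \cref{thm:5} as a standard nonparametric $M$\hyp{}estimation rate problem and invoke the machinery of Theorem 3.2.5 of van der Vaart and Wellner (1996), after first using the Neyman\hyp{}orthogonality clause of \cref{asm:neyman} to replace the estimated loss $\mathbb{M}_\theta(\hat{D}_\theta)$ by the oracle loss $\mathbb{M}_\theta(D_\theta)$. Because $\eta_n\sqrt{n}\to\infty$, the true parameter $\theta_0$ lies in $G_n$ for all sufficiently large $n$; pairing the approximate\hyp{}minimization inequality of \cref{asm:neyman} evaluated at $\theta=\theta_0$ with its orthogonality clause (at the same $\theta_0$) yields
\[
\mathbb{M}_{\hat{\theta}}(D_{\hat{\theta}})-\mathbb{M}_{\theta_0}(D_{\theta_0})\leq o_P^\ast(n^{-1}),
\]
so that the sieve $\mathcal{D}_n$ drops out of the analysis.

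Next I would decompose
\[
\mathbb{M}_\theta(D_\theta)-\mathbb{M}_{\theta_0}(D_{\theta_0})=[M_\theta(D_\theta)-M_{\theta_0}(D_{\theta_0})]+(\mathbb{P}_0-P_0)(\log D_\theta-\log D_{\theta_0})+(\tilde{\mathbb{P}}_0-\tilde{P}_0)[\log(1-D_\theta)\circ T_\theta-\log(1-D_{\theta_0})\circ T_{\theta_0}].
\]
The first bracket is $\gtrsim h(\theta,\theta_0)^2$ inside the neighborhood $G$ supplied by \cref{asm:misspec}; \cref{thm:theta:consistency} (whose hypotheses are implied here) places $\hat{\theta}$ in $G$ with probability tending to one. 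To control the two empirical process terms uniformly over $\{h(\theta,\theta_0)\leq\delta\}$, I would write $\log D_\theta-\log D_{\theta_0}=\log\bigl((p_0+p_{\theta_0})/(p_0+p_\theta)\bigr)$, use pointwise Lipschitzness in $p_\theta$ to produce an $L_2(P_0)$\hyp{}envelope of order $h(\theta,\theta_0)$, and similarly for the $T_\theta$\hyp{}composed class relative to $\tilde{h}$. The polynomial bracketing bounds in \cref{asm:maximal} on $\mathcal{P}_\delta$ and $\tilde{\mathcal{P}}_\delta$ then transfer to $\log N_{[]}(\varepsilon,\cdot,L_2)\lesssim r\log(\delta/\varepsilon)$ on the two log\hyp{}odds classes, so the bracketing integral is $J_{[]}(\delta)\lesssim\delta$. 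Applying Lemma 3.4.2 of van der Vaart and Wellner to each piece, and combining with $n/m=O(1)$ from \cref{asm:m} and $\tilde{h}(\theta_0,\theta)=O(h(\theta_0,\theta))$ from \cref{asm:maximal}, gives
\[
E^\ast\sup_{h(\theta,\theta_0)\leq\delta}\bigl|(\mathbb{P}_0-P_0)(\log D_\theta-\log D_{\theta_0})+(\tilde{\mathbb{P}}_0-\tilde{P}_0)[\log(1-D_\theta)\circ T_\theta-\log(1-D_{\theta_0})\circ T_{\theta_0}]\bigr|\lesssim\delta/\sqrt{n}.
\]
With modulus $\phi_n(\delta)\asymp\delta$ and curvature exponent $2$, the rate equation $r_n^2\phi_n(r_n^{-1})\leq\sqrt{n}$ of Theorem 3.2.5 of van der Vaart and Wellner yields $r_n\asymp\sqrt{n}$, so $h(\hat{\theta},\theta_0)=O_P^\ast(n^{-1/2})$.

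The principal technical hurdle is the bracketing transfer in the previous display: starting from bounds on the density classes $\mathcal{P}_\delta$ and $\tilde{\mathcal{P}}_\delta$, one must produce matching brackets on the log\hyp{}odds classes with $L_2$\hyp{}envelopes of order $\delta$. This requires $p_0+p_\theta$ to be bounded below by a positive multiple of $p_0$ on the support of $P_0$, essentially the overlapping\hyp{}support assertion built into \cref{asm:misspec}; otherwise the Lipschitz constant of the logarithm blows up and the envelope control deteriorates. The secondary delicate point is exploiting the orthogonality clause of \cref{asm:neyman} carefully enough that the $o_P^\ast(n^{-1})$ slack in Step 1 is genuinely of second order relative to the $\delta^2$ curvature and the $\delta/\sqrt{n}$ empirical process bound that drive the peeling argument.
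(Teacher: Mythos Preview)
Your overall architecture is right and matches the paper: use \cref{asm:neyman} to pass from $\mathbb{M}_\theta(\hat{D}_\theta)$ to the oracle $\mathbb{M}_\theta(D_\theta)$, then invoke Theorem 3.2.5 of van der Vaart--Wellner with the quadratic curvature from \cref{asm:misspec} and a maximal inequality on the two empirical\hyp{}process pieces. The decomposition you write is exactly the one the paper uses.

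The gap is in the bracketing transfer. You propose to control $\log\frac{p_0+p_{\theta_0}}{p_0+p_\theta}$ in $L_2(P_0)$ via Lipschitzness of the logarithm, and you correctly flag that this needs a lower bound on the denominator. But the ``overlapping support'' clause of \cref{asm:misspec} does not supply this: it only says $h(\theta,\theta_0)^2\lesssim\int D_{\theta_0}(\sqrt{p_{\theta_0}}-\sqrt{p_\theta})^2$, which is a comparability statement for two Hellinger\hyp{}type quantities, not a pointwise bound of the form $p_\theta/p_0\leq C$. Concretely, your $L_2$ route leads to $\int(\log\tfrac{p_0+u}{p_0+\ell})^2p_0\leq\int\tfrac{(\sqrt{u}+\sqrt{\ell})^2}{p_0}(\sqrt{u}-\sqrt{\ell})^2$, and the factor $(\sqrt{u}+\sqrt{\ell})^2/p_0$ is not controlled under the stated assumptions. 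So Lemma 3.4.2 of van der Vaart--Wellner is not the right tool here.

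The paper's fix is to work in the Bernstein ``norm'' $\|\cdot\|_{P_0,B}$ and use Lemma 3.4.3 instead. The key algebraic step (their \cref{lem:distance} and the bracket calculation in the proof) is that for $u\geq\ell\geq0$,
\[
\Bigl\|\log\tfrac{p_0+u}{p_0+\ell}\Bigr\|_{P_0,B}^2\leq4\int\tfrac{p_0}{p_0+\ell}(\sqrt{p_0+u}-\sqrt{p_0+\ell})^2\leq4\int(\sqrt{u}-\sqrt{\ell})^2=4h(u,\ell)^2,
\]
using $e^{|x|}-1-|x|\leq2(e^{x/2}-1)^2$ and $p_0/(p_0+\ell)\leq1$. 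This transfers $h$\hyp{}brackets on $\mathcal{P}_\delta$ to Bernstein\hyp{}norm brackets on the log\hyp{}odds class with no density\hyp{}ratio boundedness whatsoever, yielding $J_{[]}(\delta)\lesssim\delta$ and hence the modulus $\phi_n(\delta)\lesssim\delta+n^{-1/2}$ you want. Once you swap Lemma 3.4.2 for 3.4.3 and use this inequality, the rest of your argument goes through as written.
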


\subsubsection{On \cref{asm:neyman}} \label{sec:asm:neyman}

The second condition of \cref{asm:neyman}, which we call orthogonality, is essential in the rate of convergence for $\hat{\theta}$ in \cref{thm:5}.
Even in the best scenario, we can only expect $\mathbb{M}_\theta(\hat{D}_\theta)-\mathbb{M}_\theta(D_\theta)=O_P(n^{-1})$, so the convergence of $\hat{D}_\theta$ alone does not grant orthogonality.
The key to satisfying it is, therefore, some extent of the convergence of the {\em derivative} of $\hat{D}_\theta$ with respect to $\theta$ to that of $D_\theta$.
Note that this is different from the derivative of $\hat{D}_\theta$ with respect to $x$, so it does not follow from the convergence of the derivative of a nonparametrically estimated function.
Rather, %
it is the structure of the nested optimization that brings about orthogonality.

Take the logistic discriminator $D_\lambda(x)=\Lambda(x^\top\lambda)$ as considered in \cref{sec:toy}.
We can check that orthogonality holds if the following conditions are met. %
Let $\mathbb{E}_n f(X)\coloneqq\frac{1}{n}\sum_{i=1}^n f(X_i)$ and $\mathbb{E}_m f(X_\theta)\coloneqq\frac{1}{m}\sum_{i=1}^m f(X_{i,\theta})$ and denote the differentiation with respect to a row vector $\theta^\top$ by a dot, e.g., $\dot{\lambda}_\theta=\frac{\partial}{\partial\theta^\top}\lambda_\theta$.
\begin{enumerate}[noitemsep]
	\item (Smooth model) $T_\theta$ is continuously differentiable in $\theta$ for every $x\in\tilde{\mathcal{X}}$, so $X_\theta$ is continuously differentiable in $\theta$.
	\item (Finite moments) $\mathbb{E}[XX^\top]$ is positive definite;
	$\mathbb{E}[\|X\|^4]$, $\mathbb{E}[\|X_\theta\|^4]$, $\mathbb{E}[\|\dot{X}_\theta\|^2]$, and $\mathbb{E}[\|X_\theta\|^2\|\dot{X}_\theta\|^2]$ are bounded uniformly over $\theta$;
	$\mathbb{E}_m[\|X_\theta\|^2]$, $\mathbb{E}_m[\|\dot{X}_\theta\|]$, and $\mathbb{E}_m[\|X_\theta\|\|\dot{X}_\theta\|]$ converge uniformly in $\theta$.
	\item (Smooth discriminator) $\lambda_\theta$ is continuously differentiable in $\theta$.
	\item (Exact maximizer) $\hat{\lambda}_\theta$ is the exact maximizer of $\mathbb{M}_\theta(D_\lambda)$ in that the FOC for $\hat{\lambda}_\theta$ is exactly zero for every $\theta\in\Theta$. \label{asm:ortho:foc}
	\item (Uniform convergence rate of discriminator) $\sup_\theta\|\hat{\lambda}_\theta-\lambda_\theta\|=O_P(n^{-1/2})$.
\end{enumerate}

For ease of notation, we assume that $\lambda$ and $\theta$ are one\hyp{}dimensional; however, the argument below applies equally to the vector case.
The FOC for $\hat{\lambda}_\theta$ yields
\(
	\mathbb{E}_n[(1-\Lambda(X\hat{\lambda}_\theta))X]-\mathbb{E}_m[\Lambda(X_\theta\hat{\lambda}_\theta)X_\theta]=0
\).
This holds for every $\theta$, so we may differentiate both sides by $\theta$,
which can be solved for the derivative of $\hat{\lambda}_\theta$ with respect to $\theta$,
\begin{multline*}
	\dot{\hat{\lambda}}_\theta=-(\mathbb{E}_n[\Lambda(1-\Lambda)(X\hat{\lambda}_\theta)X^2]+\mathbb{E}_m[\Lambda(1-\Lambda)(X_\theta\hat{\lambda}_\theta)X_\theta^2])^{-1}\\
	(\mathbb{E}_m[\Lambda(1-\Lambda)(X_\theta\hat{\lambda}_\theta)X_\theta\dot{X}_\theta]\hat{\lambda}_\theta+\mathbb{E}_m[\Lambda(X_\theta\hat{\lambda}_\theta)\dot{X}_\theta]).
\end{multline*}
Note that $\lambda_\theta$ satisfies the population FOC, which leads to the population counterpart of the same expression, so $\dot{\hat{\lambda}}_\theta$ is consistent for $\dot{\lambda}_\theta$.
Moreover, by the uniform convergence assumptions, we deduce $\sup_\theta\|\dot{\hat{\lambda}}_\theta-\dot{\lambda}_\theta\|=O_P(n^{-1/2})$.
Thus, the derivative of the discriminator converges.

To derive orthogonality, we first Taylor\hyp{}expand it in $\lambda$ around $\hat{\lambda}_\theta$. %
In doing so, the first\hyp{}order term can be ignored thanks to Condition \ref{asm:ortho:foc}.
For arbitrary $\theta$,
\begin{align*}
	&\mathbb{M}_\theta(D_{\lambda_\theta})-\mathbb{M}_\theta(D_{\hat{\lambda}_\theta})\\
	&=[\mathbb{E}_n\log\Lambda(X\lambda_\theta)+\mathbb{E}_m\log(1-\Lambda)(X_\theta\lambda_\theta)]-[\mathbb{E}_n\log\Lambda(X\hat{\lambda}_\theta)+\mathbb{E}_m\log(1-\Lambda)(X_\theta\hat{\lambda}_\theta)]\\
	&=\tfrac{1}{2}(\hat{\lambda}_\theta-\lambda_\theta)^2[-\mathbb{E}_n\Lambda(1-\Lambda)(X\hat{\lambda}_\theta)X^2+\mathbb{E}_m\Lambda(1-\Lambda)(X_\theta\hat{\lambda}_\theta)X_\theta^2]+o_P((\hat{\lambda}_\theta-\lambda_\theta)^2).
\end{align*}
Next, we expand it further in $\theta$ around $\hat{\theta}$.
\begin{align*}
	&[\mathbb{M}_\theta(D_{\lambda_\theta})-\mathbb{M}_\theta(D_{\hat{\lambda}_\theta})]-[\mathbb{M}_{\hat{\theta}}(D_{\lambda_{\hat{\theta}}})-\mathbb{M}_{\hat{\theta}}(D_{\hat{\lambda}_{\hat{\theta}}})]\\
	&=-(\hat{\lambda}_{\hat{\theta}}-\lambda_{\hat{\theta}})(\dot{\hat{\lambda}}_{\hat{\theta}}-\dot{\lambda}_{\hat{\theta}})(\theta-\hat{\theta})[\mathbb{E}_n\Lambda(1-\Lambda)(X\hat{\lambda}_\theta)X^2-\mathbb{E}_m\Lambda(1-\Lambda)(X_\theta\hat{\lambda}_\theta)X_\theta^2]\\
	&\hphantom{={}}-\tfrac{1}{2}(\hat{\lambda}_\theta-\lambda_\theta)^2(\theta-\hat{\theta})\mathbb{E}_n\Lambda(1-\Lambda)(1-2\Lambda)(X\hat{\lambda}_{\hat{\theta}})X^3\dot{\hat{\lambda}}_{\hat{\theta}}\\
	&\hphantom{={}}+\tfrac{1}{2}(\hat{\lambda}_\theta-\lambda_\theta)^2(\theta-\hat{\theta})\mathbb{E}_m\Lambda(1-\Lambda)(1-2\Lambda)(X_{\hat{\theta}}\hat{\lambda}_{\hat{\theta}})X_{\hat{\theta}}^3\dot{\hat{\lambda}}_{\hat{\theta}}\\
	&\hphantom{={}}+\tfrac{1}{2}(\hat{\lambda}_\theta-\lambda_\theta)^2(\theta-\hat{\theta})\mathbb{E}_m\Lambda(1-\Lambda)(1-2\Lambda)(X_{\hat{\theta}}\hat{\lambda}_{\hat{\theta}})X_{\hat{\theta}}^2\dot{X}_{\hat{\theta}}\hat{\lambda}_{\hat{\theta}}\\
	&\hphantom{={}}+(\hat{\lambda}_\theta-\lambda_\theta)^2(\theta-\hat{\theta})\mathbb{E}_m\Lambda(1-\Lambda)(X_{\hat{\theta}}\hat{\lambda}_{\hat{\theta}})X_{\hat{\theta}}\dot{X}_{\hat{\theta}}
	+o_P((\hat{\lambda}_\theta-\lambda_\theta)^2(1+|\hat{\theta}-\theta|)).
\end{align*}
This is $O_P(n^{-1})$ for fixed $\theta$, so we can take a shrinking neighborhood of $\theta$ around $\theta_0$ that contains $\hat{\theta}$ to make the supremum of this $o_P(n^{-1})$, yielding orthogonality.
If the neighborhood shrinks only slightly slower than $n^{-1/2}$, then convergence of $\hat{\lambda}_\theta$ and $\dot{\hat{\lambda}}_\theta$ can be relaxed to as slow as $o_P(n^{-1/4})$ if possibly a few more degrees of differentiability and finite moments are granted.
It is also straightforward to relax the exact FOC condition to allow for errors of negligible order and to allow for nonlinear but parametric logistic discriminators, such as small neural networks.
An interesting conclusion of this is that the curvature of the estimated loss converges faster than the level, as observed throughout \cref{sec:toy}.

For a general nonparametric discriminator, it is not trivial to obtain a similar low\hyp{}level condition.
\cref{supp:theory} develops conditions for $\hat{D}_\theta$ to converge faster than $n^{-1/4}$ (pointwise in $\theta$), which seems necessary but is not sufficient to derive orthogonality.%
\footnote{In a similar situation where the derivative of the nuisance parameter identifies $\theta$, \citet{ks1993} exploit the structure of a kernel density estimator to show the convergence of the derivative, whereby obtaining a corresponding orthogonality condition.}
In \cref{sec:toy}, the plots of $\mathbb{M}_\theta(\hat{D}_\theta)$ confirm orthogonality in examples with or without differentiability.

\subsubsection{On \cref{asm:misspec}} \label{sec:asm:misspec}

We may derive a closed\hyp{}form expression for $\tau_n$ in \cref{asm:misspec} when $T_\theta$ is differentiable in $\theta$ and $D_{\theta_0}$ in $x$.
Suppose that $\mathcal{X}$ and $\tilde{\mathcal{X}}$ are Euclidean spaces; denote the differentiation with respect to an argument by a prime and with respect to a subscript by a dot, e.g., $f_\theta'(x)=\frac{\partial}{\partial x}f_\theta(x)$ and $\dot{f}_\theta(x)=\frac{\partial}{\partial\theta^\top}f_\theta(x)$. %
Observe that
\begin{align*}
	n(\mathbb{P}_\theta-\mathbb{P}_{\theta_0})\log(1-D_{\theta_0})
	&=n\tilde{\mathbb{P}}_0[\log(1-D_{\theta_0})\circ T_\theta-\log(1-D_{\theta_0})\circ T_{\theta_0}]\\
	&\approx
	n(\theta-\theta_0)^\top\tilde{\mathbb{P}}_0\bigl[\dot{T}_{\theta_0}^\top\bigl(D_{\theta_0}\bigl[\tfrac{p_{\theta_0}'}{p_{\theta_0}}-\tfrac{p_0'}{p_0}\bigr]\circ T_{\theta_0}\bigr)\bigr].
\end{align*}
Thus, we find
\(
	\tau_n=\dot{T}_{\theta_0}^\top(D_{\theta_0}[p_{\theta_0}'/p_{\theta_0}-p_0'/p_0]\circ T_{\theta_0})
\).

Note that this assumption only matters for the misspecified case.
\Cref{fig:lemma4} verifies this assumption for the misspecified model in \cref{sec:toy:misspec} in the regions relevant for the asymptotic distribution.
The red dashed lines plot the LHS of \cref{asm:misspec},
\(
	n(\tilde{\mathbb{P}}_0-\tilde{P}_0)(\log(1-D_{\theta_0})\circ T_{\theta+h/\sqrt{n}}-\log(1-D_{\theta_0})\circ T_{\theta_0})
\).
For the normally\hyp{}misspecified logistic location model in \cref{sec:toy:misspec}, we see that this line is already very linear and its slope corresponds to $\tau_n$. %

\begin{figure}
\centering
\includegraphics[page=1]{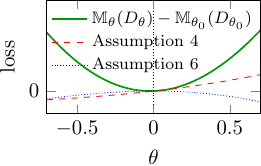}
\caption{\cref{asm:misspec,asm:model:smooth} for the normally\hyp{}misspecified logistic location model in \cref{sec:toy:misspec}. $n=m=300$.}
\label{fig:lemma4}
\end{figure}

\subsection{Asymptotic Distribution} \label{sec:dist}

To derive the asymptotic distribution of the adversarial estimator, we need the structural model to be differentiable as in maximum likelihood.

\begin{asm}[Twice differentiability] \label{asm:dqm}
The parameter space $\Theta$ is (a subset of) a Euclidean space $\mathbb{R}^k$.
The structural model $\{P_\theta:\theta\in\Theta\}$ has a likelihood that is twice differentiable in $\theta$ at $\theta_0$ for every $x\in\mathcal{X}$ with the derivatives continuous in both $x$ and $\theta$.
The Fisher information matrix $I_{\theta_0}\coloneqq P_{\theta_0}\dot{\ell}_{\theta_0}\dot{\ell}_{\theta_0}^\top=-P_{\theta_0}\ddot{\ell}_{\theta_0}$ and the matrix $\tilde{I}_{\theta_0}\coloneqq 2 P_{\theta_0}(D_{\theta_0}\dot{\ell}_{\theta_0}\dot{\ell}_{\theta_0}^\top+(\ddot{\ell}_{\theta_0}+\dot{\ell}_{\theta_0}\dot{\ell}_{\theta_0}^\top)\log(1-D_{\theta_0}))$ are positive definite.
\end{asm}

\begin{rem}
Under \cref{asm:spec}, the annoying term $(\mathbb{P}_\theta-\mathbb{P}_{\theta_0})\log(1-D_{\theta_0})$ in \cref{lem:lan:misspec} goes away, rendering twice differentiability unnecessary.
\end{rem}

We further impose a mild smoothness condition on $T_\theta$.

\begin{asm}[Smooth synthetic data generation] \label{asm:model:smooth}
For every compact $K\subset\Theta$,
\(
	\sqrt{\tfrac{n}{m}}\sup_{h\in K}\|\sqrt{m}[(\mathbb{P}_{\theta_0+h/\sqrt{n}}-P_{\theta_0+h/\sqrt{n}})-(\mathbb{P}_{\theta_0}-P_{\theta_0})]D_{\theta_0}\dot{\ell}_{\theta_0}\|=o_P^\ast(1)
\).
\end{asm}

Similarly to \cref{asm:misspec}, \cref{asm:model:smooth} is trivial if $n/m\to 0$.
Moreover, under the low\hyp{}level conditions (and notation) in \cref{sec:asm:misspec},
\begin{align*}
	\sqrt{n}(\mathbb{P}_\theta-\mathbb{P}_{\theta_0})D_{\theta_0}\dot{\ell}_{\theta_0}
	&=\sqrt{n}\tilde{\mathbb{P}}_0(D_{\theta_0}\dot{\ell}_{\theta_0}\circ T_\theta-D_{\theta_0}\dot{\ell}_{\theta_0}\circ T_{\theta_0})\\
	&\approx\tilde{\mathbb{P}}_0[\dot{T}_{\theta_0}^\top(D_{\theta_0}'\dot{\ell}_{\theta_0}+D_{\theta_0}\dot{\ell}_{\theta_0}')\circ T_{\theta_0}]\sqrt{n}(\theta-\theta_0).
\end{align*}
Therefore, if $\tilde{\mathbb{P}}_0[\dot{T}_{\theta_0}^\top(D_{\theta_0}'\dot{\ell}_{\theta_0}+D_{\theta_0}\dot{\ell}_{\theta_0}')\circ T_{\theta_0}]$ is consistent for $\tilde{P}_0[\dot{T}_{\theta_0}^\top(D_{\theta_0}'\dot{\ell}_{\theta_0}+D_{\theta_0}\dot{\ell}_{\theta_0}')\circ T_{\theta_0}]$, \cref{asm:model:smooth} holds even if $m\sim n$.

\Cref{fig:lemma4} verifies \cref{asm:model:smooth} for the normally\hyp{}misspecified model in \cref{sec:toy:misspec}.
The blue dotted lines plot the LHS of \cref{asm:model:smooth},
\(
	\sqrt{n}[(\mathbb{P}_{\theta_0+h/\sqrt{n}}-P_{\theta_0+h/\sqrt{n}})-(\mathbb{P}_{\theta_0}-P_{\theta_0})]D_{\theta_0}\dot{\ell}_{\theta_0}
\).

\begin{thm}[Asymptotic distribution of generator] \label{thm:theta:dist}
Under the conclusion of \cref{thm:5,asm:m,asm:neyman,asm:misspec,asm:model:smooth,asm:dqm},
\[
	\sqrt{n}(\hat{\theta}-\theta_0)=2\tilde{I}_{\theta_0}^{-1}\sqrt{n}[\mathbb{P}_0(1-D_{\theta_0})\dot{\ell}_{\theta_0}-\mathbb{P}_{\theta_0}D_{\theta_0}\dot{\ell}_{\theta_0}-\tilde{\mathbb{P}}_0\tau_n]+o_P^\ast(1)
	\leadsto N(0,\tilde{I}_{\theta_0}^{-1}V\tilde{I}_{\theta_0}^{-1}).
\]
where $V\coloneqq\lim_{n\to\infty}4[(P_{\theta_0}+\frac{n}{m}P_0)D_{\theta_0}(1-D_{\theta_0})\dot{\ell}_{\theta_0}\dot{\ell}_{\theta_0}^\top+\frac{n}{m}\tilde{P}_0[(D_{\theta_0}\dot{\ell}_{\theta_0}\circ T_{\theta_0})\tau_n^\top+\tau_n(D_{\theta_0}\dot{\ell}_{\theta_0}^\top\circ T_{\theta_0})+\tau_n\tau_n^\top]]$.
\end{thm}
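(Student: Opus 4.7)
The plan is the classical $M$-estimator linearization: work in the $n^{-1/2}$-neighborhood of $\theta_0$ in which \cref{thm:5} localizes $\hat\theta$, replace the nuisance $\hat D_\theta$ by the oracle $D_\theta$ using the orthogonality clause of \cref{asm:neyman}, expand the oracle criterion to a quadratic in $h=\sqrt n(\theta-\theta_0)$ whose Hessian is $\tfrac12\tilde I_{\theta_0}$ and whose linear term is a sum of empirical processes, invert, and apply a CLT. From \cref{thm:5} and the quadratic lower bound of \cref{asm:misspec} combined with the DQM of \cref{asm:dqm}, $\|\hat\theta-\theta_0\|=O_P^\ast(n^{-1/2})$, so with probability tending to one $\hat\theta$ lies in the ball $G_n$ of \cref{asm:neyman}; the orthogonality condition then lets us replace $\hat D_\theta$ by $D_\theta$ in every difference $\mathbb{M}_\theta(\hat D_\theta)-\mathbb{M}_{\hat\theta}(\hat D_{\hat\theta})$ at cost $o_P^\ast(n^{-1})$.

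For the quadratic expansion of the oracle criterion, decompose
\[
\mathbb{M}_\theta(D_\theta)-\mathbb{M}_{\theta_0}(D_{\theta_0})=[M_\theta(D_\theta)-M_{\theta_0}(D_{\theta_0})]+(\mathbb{P}_0-P_0)[\log D_\theta-\log D_{\theta_0}]+[(\mathbb{P}_\theta-P_\theta)\log(1-D_\theta)-(\mathbb{P}_{\theta_0}-P_{\theta_0})\log(1-D_{\theta_0})].
\]
The deterministic piece is handled by \cref{lem:lan:misspec}, which under the twice-DQM of \cref{asm:dqm}, the scoring identities $\partial_\theta\log D_\theta=-(1-D_\theta)\dot\ell_\theta$ and $\partial_\theta\log(1-D_\theta)=D_\theta\dot\ell_\theta$, and the saddle identity $P_0(1-D_{\theta_0})\dot\ell_{\theta_0}=P_{\theta_0}D_{\theta_0}\dot\ell_{\theta_0}$ forced by $D_{\theta_0}=p_0/(p_0+p_{\theta_0})$, gives $\tfrac14(\theta-\theta_0)'\tilde I_{\theta_0}(\theta-\theta_0)+o(\|\theta-\theta_0\|^2)$, the $\ddot\ell_{\theta_0}\log(1-D_{\theta_0})$ contribution entering through twice-DQM and accounting for misspecification. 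For the first empirical bracket, DQM together with a standard stochastic equicontinuity argument yields $-h'(\mathbb{P}_0-P_0)(1-D_{\theta_0})\dot\ell_{\theta_0}/\sqrt n+o_P^\ast(n^{-1})$ uniformly in $\|h\|\le C$. For the synthetic-sample bracket, after pulling back via $\mathbb{P}_\theta=\tilde{\mathbb{P}}_0\circ T_\theta^{-1}$, \cref{asm:model:smooth} controls the centered piece to give $h'(\mathbb{P}_{\theta_0}-P_{\theta_0})D_{\theta_0}\dot\ell_{\theta_0}/\sqrt n+o_P^\ast(n^{-1})$, while the drift $P_\theta\log(1-D_{\theta_0})-P_{\theta_0}\log(1-D_{\theta_0})$ (surviving only under misspecification) is linearized by \cref{asm:misspec} into $h'\tilde{\mathbb{P}}_0\tau_n/\sqrt n$ with error of the same order. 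Replacing $D_\theta$ by $D_{\theta_0}$ inside these brackets costs only $o(\|\theta-\theta_0\|^2)$, absorbed by the Hessian, because at the oracle the first variation in $D$ vanishes. Assembling these pieces yields, uniformly in $\|h\|\le C$,
\[
n\bigl[\mathbb{M}_{\theta_0+h/\sqrt n}(\hat D_{\theta_0+h/\sqrt n})-\mathbb{M}_{\theta_0}(\hat D_{\theta_0})\bigr]=-h'S_n+\tfrac14 h'\tilde I_{\theta_0}h+o_P^\ast(1),
\]
with $S_n:=\sqrt n[\mathbb{P}_0(1-D_{\theta_0})\dot\ell_{\theta_0}-\mathbb{P}_{\theta_0}D_{\theta_0}\dot\ell_{\theta_0}-\tilde{\mathbb{P}}_0\tau_n]$.

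The approximate minimization clause of \cref{asm:neyman} together with the argmin continuous mapping theorem (e.g.\ van der Vaart and Wellner, Theorem~3.2.2) delivers $\sqrt n(\hat\theta-\theta_0)=2\tilde I_{\theta_0}^{-1}S_n+o_P^\ast(1)$. Since $\{X_i\}$ and $\{\tilde X_i\}$ are independent while $\mathbb{P}_{\theta_0}=\tilde{\mathbb{P}}_0\circ T_{\theta_0}^{-1}$, the vector $S_n$ is a sum of two independent centered empirical means whose second moments are finite by \cref{asm:dqm}; the cross-covariances appearing in $V$ are exactly those produced by $\mathbb{P}_{\theta_0}D_{\theta_0}\dot\ell_{\theta_0}$ and $\tilde{\mathbb{P}}_0\tau_n$ sharing the $\tilde{\mathbb{P}}_0$ sample. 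With \cref{asm:m}, the multivariate Lindeberg CLT then gives $S_n\leadsto N(0,V/4)$ (equivalently $2S_n\leadsto N(0,V)$), and Slutsky yields the conclusion.

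The main obstacle is the uniform linearization of the synthetic-sample contribution $\mathbb{P}_\theta\log(1-D_\theta)$ at scale $n^{-1/2}$, which requires simultaneously (i) extracting the misspecification drift $\tilde{\mathbb{P}}_0\tau_n$ via \cref{asm:misspec}, (ii) controlling the empirical-process fluctuation of $\log(1-D_{\theta_0})\circ T_\theta-\log(1-D_{\theta_0})\circ T_{\theta_0}$ via \cref{asm:model:smooth}, and (iii) arguing that perturbing $D_\theta$ away from $D_{\theta_0}$ inside this term only generates second-order errors, where the oracle property of $D_{\theta_0}$ and the orthogonality of \cref{asm:neyman} do the heavy lifting.
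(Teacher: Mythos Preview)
Your proof is essentially the paper's: localize via \cref{thm:5}, pass from $\hat D$ to the oracle $D$ via the orthogonality clause of \cref{asm:neyman}, invoke \cref{lem:lan:misspec} for the uniform quadratic expansion with linear term $S_n$ and curvature $\tfrac14\tilde I_{\theta_0}$, and invert (the paper compares $\mathbb{M}$ at $\hat h$ and at the explicit minimizer $\breve h=2\tilde I_{\theta_0}^{-1}S_n$ directly rather than citing an argmin theorem, but this is cosmetic). One small correction: the $\tau_n$ contribution does not arise from the deterministic drift $P_\theta\log(1-D_{\theta_0})-P_{\theta_0}\log(1-D_{\theta_0})$---that quantity is second order in $h$ and is already absorbed into $\tilde I_{\theta_0}$---but from the \emph{centered} empirical increment $(\tilde{\mathbb{P}}_0-\tilde P_0)[\log(1-D_{\theta_0})\circ T_\theta-\log(1-D_{\theta_0})\circ T_{\theta_0}]$, which is precisely what \cref{asm:misspec} linearizes; your displayed $S_n$ is nonetheless correct.
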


A stronger efficiency result holds if the structural model is correctly specified.

\begin{asm}[Correct specification] \label{asm:spec}
The synthetic model $\{P_\theta:\theta\in\Theta\}$ is correctly specified, that is, $P_{\theta_0}=P_0$ and $D_{\theta_0}\equiv1/2$.
\end{asm}

\begin{coro}[Efficiency of generator] \label{thm:theta:efficiency}
Under the conclusion of \cref{thm:theta:dist,asm:spec},
\(
	\sqrt{n}(\hat{\theta}-\theta_0)
	=I_{\theta_0}^{-1}\sqrt{n}(\mathbb{P}_0-\mathbb{P}_{\theta_0})\dot{\ell}_{\theta_0}+o_P^\ast(1)
	\leadsto N(0,[1+\lim_{n\to\infty}\tfrac{n}{m}]I_{\theta_0}^{-1})
\).
\end{coro}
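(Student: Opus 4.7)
The plan is to obtain \cref{thm:theta:efficiency} by direct specialization of \cref{thm:theta:dist} to the correctly specified case, where $D_{\theta_0}\equiv 1/2$ and $P_0=P_{\theta_0}$ make every misspecification-related piece collapse.

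The key algebraic step is to show $\tilde{I}_{\theta_0}=I_{\theta_0}$. With $D_{\theta_0}\equiv 1/2$, so $\log(1-D_{\theta_0})\equiv\log(1/2)$, the definition reduces to
\[
\tilde{I}_{\theta_0} = P_{\theta_0}\dot{\ell}_{\theta_0}\dot{\ell}_{\theta_0}' + 2\log(1/2)\,P_{\theta_0}(\ddot{\ell}_{\theta_0}+\dot{\ell}_{\theta_0}\dot{\ell}_{\theta_0}') = I_{\theta_0} + 2\log(1/2)(-I_{\theta_0}+I_{\theta_0}) = I_{\theta_0},
\]
where I use the information identity $P_{\theta_0}\ddot{\ell}_{\theta_0}=-I_{\theta_0}$ supplied by \cref{asm:dqm}. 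Next, I would invoke the remark following \cref{asm:misspec} that $\tau_n\equiv 0$ is a valid choice under \cref{asm:spec} (also visible from the closed form $\tau_n=\dot{T}_{\theta_0}^\top(D_{\theta_0}[p_{\theta_0}'/p_{\theta_0}-p_0'/p_0]\circ T_{\theta_0})$ vanishing when $p_0=p_{\theta_0}$), together with $1-D_{\theta_0}=D_{\theta_0}=1/2$, to rewrite
\[
\mathbb{P}_0(1-D_{\theta_0})\dot{\ell}_{\theta_0}-\mathbb{P}_{\theta_0}D_{\theta_0}\dot{\ell}_{\theta_0}-\tilde{\mathbb{P}}_0\tau_n = \tfrac{1}{2}(\mathbb{P}_0-\mathbb{P}_{\theta_0})\dot{\ell}_{\theta_0}.
\]
Plugging these into the linear representation of \cref{thm:theta:dist} immediately yields the first equality $\sqrt{n}(\hat{\theta}-\theta_0)=I_{\theta_0}^{-1}\sqrt{n}(\mathbb{P}_0-\mathbb{P}_{\theta_0})\dot{\ell}_{\theta_0}+o_P^\ast(1)$.

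For the limit law, the cleanest route is to specialize the sandwich formula: $D_{\theta_0}(1-D_{\theta_0})=1/4$ and $\tau_n=0$ collapse $V$ to $(1+\lim_{n\to\infty}n/m)I_{\theta_0}$, so $\tilde{I}_{\theta_0}^{-1}V\tilde{I}_{\theta_0}^{-1}=(1+\lim_{n\to\infty}n/m)I_{\theta_0}^{-1}$. Equivalently, since $P_0\dot{\ell}_{\theta_0}=P_{\theta_0}\dot{\ell}_{\theta_0}=0$ and the two samples are independent, the decomposition $\sqrt{n}(\mathbb{P}_0-\mathbb{P}_{\theta_0})\dot{\ell}_{\theta_0}=\sqrt{n}(\mathbb{P}_0-P_0)\dot{\ell}_{\theta_0}-\sqrt{n/m}\,\sqrt{m}(\mathbb{P}_{\theta_0}-P_{\theta_0})\dot{\ell}_{\theta_0}$ combined with the multivariate CLT (with finite second moments inherited from \cref{asm:dqm}) gives convergence to $N(0,(1+\lim_{n\to\infty}n/m)I_{\theta_0})$, and pre-multiplying by $I_{\theta_0}^{-1}$ produces the stated normal limit. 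There is no substantive obstacle: the proof is bookkeeping, and the only delicate moment is the cancellation inside $\tilde{I}_{\theta_0}$, which relies essentially on the Fisher-information identity embedded in the twice-differentiability-in-quadratic-mean condition.
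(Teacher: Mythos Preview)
Your proposal is correct and is exactly the intended derivation: the paper states this as a corollary with no separate proof, and the specialization you carry out---using $D_{\theta_0}\equiv 1/2$, $\tau_n\equiv 0$ (noted in the text of \cref{asm:misspec}), and the information identity $P_{\theta_0}\ddot{\ell}_{\theta_0}=-I_{\theta_0}$ to collapse $\tilde{I}_{\theta_0}$ to $I_{\theta_0}$ and $V$ to $(1+\lim n/m)I_{\theta_0}$---is precisely what the paper leaves to the reader.
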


Thus, if $n/m\to0$, the adversarial estimator attains parametric efficiency.

\subsection{What If $\mathcal{D}$ Is Not Rich Enough?}

Our theory assumes that $\mathcal{D}$ is a sieve that eventually is capable of representing $D_\theta$.
In finite samples, however, we do not know how well $\mathcal{D}$ approximates $D_\theta$.
Therefore, it is interesting to see what happens when $\mathcal{D}$ is not a sieve but a fixed class of functions.
Although the complete treatment of this case is beyond our scope, we examine what happens to the population problem as we enrich $\mathcal{D}$, e.g., by gradually adding nodes and layers to the neural network.%
\footnote{The case where $\mathcal{D}$ is fixed to be the class of logistic discriminators is analyzed in \cref{sec:logistic}.}

For simplicity, we maintain \cref{asm:sieve:bound,asm:dqm,asm:spec} and assume that $\mathcal{D}$ contains a constant function $1/2$.
Let $\tilde{D}_\theta$ be the population maximizer of $M_\theta(D)$ in $\mathcal{D}$.
Since $M_\theta(D)-M_\theta(D_\theta)=-2d_\theta(D,D_\theta)^2+o(d_\theta(D,D_\theta)^2)$ by \cref{thm:obj:rate}, $\tilde{D}_\theta$ is equivalent to a minimizer of $d_\theta(D,D_\theta)^2$ in $\mathcal{D}$ up to $o(d_\theta(D,D_\theta)^2)$.
Under \cref{asm:spec}, $\tilde{D}_{\theta_0}=D_{\theta_0}\equiv1/2$ and $M_{\theta_0}(1/2)=M_\theta(1/2)$.
By \cref{thm:obj:rate},
\begin{multline*}
	M_{\theta_0}(\tilde{D}_{\theta_0})-M_\theta(\tilde{D}_\theta)=M_\theta(D_{\theta_0})-M_\theta(D_\theta)+M_\theta(D_\theta)-M_\theta(\tilde{D}_\theta)\\
	=-2d_\theta(D_{\theta_0},D_\theta)^2+2d_\theta(\tilde{D}_\theta,D_\theta)^2+o(d_\theta(D_{\theta_0},D_\theta)^2)+o(d_\theta(\tilde{D}_\theta,D_\theta)^2).
\end{multline*}
Note that by \cref{lem:average},
\(
	d_\theta(D_{\theta_0},D_\theta)^2
	=\tfrac{1}{2}\int\tfrac{p_0}{p_0+p_0}(\sqrt{p_0}-\sqrt{p_\theta})^2+\tfrac{1}{2}\int\tfrac{p_\theta}{p_\theta+p_\theta}(\sqrt{p_0}-\sqrt{p_\theta})^2+o(h(p_0,p_\theta)^2)
	=\tfrac{1}{2}h(p_0,p_\theta)^2+o(h(p_0,p_\theta)^2)
\).
Thus, we obtain
\[
	M_{\theta_0}(\tilde{D}_{\theta_0})-M_\theta(\tilde{D}_\theta)=-h(p_0,p_\theta)^2+2d_\theta(\tilde{D}_\theta,D_\theta)^2+o(h(p_0,p_\theta)^2).
\]
If $\mathcal{D}$ contains $D_\theta$, the second term is zero and the Hellinger curvature allows us to estimate $\theta$ efficiently;
if $\mathcal{D}$ is a singleton set that contains only $1/2$, the first and second terms cancel and the objective function becomes completely flat, rendering estimation of $\theta$ impossible.
Therefore, the second term represents the loss in efficiency due to the limited capacity of $\mathcal{D}$.
For the regular logit case, we know that $\mathcal{D}$ is already rich enough that the curvature admits $\sqrt{n}$\hyp{}estimation.
Then, as we enrich $\mathcal{D}$, it becomes more capable of minimizing $d_\theta(\tilde{D}_\theta,D_\theta)^2$, getting closer to efficiency.

\section{Empirical Application: ``Why Do the Elderly Save?''} \label{sec:application}

Using the adversarial framework, we examine the elderly's saving, following \citet{dfj} (henceforth \citetalias{dfj}).
The elderly save for various reasons---uncertainty on survival, bequest motive, or ever\hyp{}rising medical expenses as they age. Different motives for saving yield different implications on policy evaluation such as Medicaid and Medicare.
Hence, it is an important and active area of research.

The risk the elderly face is highly heterogeneous, depending on their gender, age, health status, and permanent income.
This implies potentially large heterogeneity in the saving motive across individuals; not accounting for this can bias the estimates of utility.
For example, the rich live several years more than the poor on average.
Failure to reflect this difference can make the rich look thriftier than they are.
On the other hand, existing estimation methods such as SMM may suffer from severe lack of precision when various heterogeneity is introduced.
This motivates adversarial estimation 
as a more efficient alternative to SMM.
Indeed, our adversarial estimates, using the same model and the same data as in \citetalias{dfj}, will see considerable gains in precision.

\subsection{Agent's Problem}

We focus on the behavior of single, retired individuals of age 70 and older.
In each period, a surviving single retired agent receives utility $u(c)$ from consumption $c$ and, if they die in that period, additional utility $\phi(e)$ from leaving estate $e$, where
\[
	u(c)\coloneqq\frac{c^{1-\nu}}{1-\nu}, \qquad\quad
	\phi(e)\coloneqq\vartheta\frac{(e+k)^{1-\nu}}{1-\nu},
\]
and $\nu$ is the relative risk aversion and $\vartheta$ and $k$ are the intensity and curvature of the bequest motive.
Each individual is associated with gender $g$ and permanent income $I$, and carries six state variables: age $t$, asset $a_t$, nonasset income $y_t$, health status $h_t$, medical expense shock $\zeta_t$, and survival $s_t$.
Health and survival are binary, where $h_t=1$ means they are healthy at age $t$, and $s_t=1$ they survive to the next period.

They face three channels of uncertainty: health, survival, and medical expenses.
Heath and survival evolve as Markov chains. We denote
\[
	\pi_{H}(g,h_t,I,t)\coloneqq\Pr(h_{t+1}=1\mid g,h_t,I,t), \quad
	\pi_{S}(g,h_t,I,t)\coloneqq\Pr(s_{t+1}=1\mid g,h_t,I,t).
\]
The medical expenses they incur are given by
\(
	\log m_{t}=m(g,h_t,I,t)+\sigma(g,h_t,I,t)\psi_{t}
\),
where $m$ and $\sigma$ are deterministic functions, %
$\psi_{t}=\zeta_{t}+\xi_{t}$, $\xi_{t}\sim N(0,\sigma_{\xi}^{2})$, $\zeta_{t}=\rho\zeta_{t-1}+\epsilon_{t}$, and $\epsilon_{t}\sim N(0,\sigma_{\epsilon}^{2})$.
The nonasset income evolves deterministically as $y_t=y(g,I,t)$.
The asset evolves as
\(
	a_{t+1}=a_{t}+y_{n}(ra_{t}+y_{t},\tau)+b_{t}-m_{t}-c_t
\),
where $b_{t}\geq0$ is the {\em government transfer}, $r$ the {\em risk\hyp{}free pretax rate of return}, $y_{n}(\cdot,\tau)$ the {\em posttax income}, and $\tau$ the {\em tax structure}.
The agent faces a borrowing constraint $a_t\geq0$ while social insurance guarantees minimum consumption $c_t\geq\underline{c}$; government transfer $b_t$ is positive only when both constraints cannot be satisfied without it.

The timing in each period is given as follows. Heath $h_t$ and medical expenses $m_t$ realize; then the individual chooses consumption $c_t$; then survival $s_t$ realizes; if $s_t=0$, they leave the remaining assets as bequest; if $s_t=1$, move on to the next period.

Denoting the {\em cash\hyp{}on\hyp{}hand} by $x_t\coloneqq c_t+a_{t+1}$, the agent's Bellman equation is
\[
	V_{t}(x,g,h,I,\zeta)=\max_{c,x'} \, u(c,h)+\beta[s\mathbb{E}_tV_{t+1}(x',g,h',I,\zeta')+(1-s)\phi(e)]
\]
subject to $x'=(x-c)+y_{n}(r(x-c)+y',\tau)+b'-m'$, $e=(x-c)-\max\{0,\tilde{\tau}(x-c-\tilde{x})\}$, and $x\geq c\geq\underline{c}$.
The first constraint is the budget constraint; the second the bequest (taxed at rate $\tilde{\tau}$ with deduction $\tilde{x}$); the last the borrowing and consumption constraints.

We also look at two transformations: the {\em marginal propensity to consume at the moment of death} $\text{MPC}\coloneqq(1+r)/(1+r+[\beta\vartheta(1+r)]^{1/\nu})$ and the {\em implied asset floor} $\underline{a}\coloneqq k/[\beta\vartheta(1+r)]^{1/\nu}$ above which individuals get utility from bequeathing.%
\footnote{The {\em marginal propensity to bequeath (MPB)} is defined by $1-\text{MPC}$.}

\subsection{Data}

We use the same data as \citetalias{dfj}, taken from {\em Assets and Health Dynamics Among the Oldest Old (AHEAD)}.
The sample consists of non\hyp{}institutionalized individuals of age 70 and older in 1994. It contains 8,222 individuals in 6,047 households (3,872 singles and 2,175 couples).
The survey took place biyearly from 1994 to 2006.
We focus on 3,259 single retired individuals, 592 of which are men and 2,667 women.%
\footnote{Single individuals are those who were neither married nor cohabiting at any point in the analysis.}
Of those, 884 were alive in 2006.
We drop the first survey in 1994 for reliability, following \citetalias{dfj}.

The survey collects information on age $t$, financial wealth $a_t$, nonasset income $y_t$, medical expenses $m_t$, and health status $h_t$. Financial wealth includes real estate, autos, several other liquid assets, retirement accounts, etc. Nonasset income includes social security benefits, veteran's benefits, and other benefits. Medical expenses are total out\hyp{}of\hyp{}pocket spending; the average yearly expenses are \$3,700 with standard deviation \$13,400. The permanent income is not observed, but we use as a proxy the ranking of individual average income over time.
The health status is a binary variable indicating whether the individual perceives herself as healthy.

\subsection{Estimation} \label{sec:estimation}

Following \citetalias{dfj}, we carry out estimation in two steps: (1) estimate $\pi_H$, $\pi_S$, $m$, $\sigma$, $\rho_m$, $\sigma_\xi$, $\sigma_\epsilon$ (in fact, we borrow numbers from \citetalias{dfj}), (2) estimate $\nu$, $\text{MPC}$, and $k$ using our adversarial approach. The parameters $r$, $\tau$, $\tilde{\tau}$, and $\tilde{x}$ are fixed as in the original paper, and $\beta = 0.971$.
For $\underline{c}$, we fix it at \$4,500 to reflect annual social security payments.%
\footnote{In their preferred specification, \citetalias{dfj} estimate $\beta$ and $\underline{c}$ in addition to $\nu$, MPC, and $k$. Instead, we fix $\beta$ and $\underline{c}$ to reasonable values in the literature. Changing $\underline{c}$ mostly affects estimates of $\nu$.}
After the second step, we can also recover $\vartheta$ and $\underline{a}$.

We consider two different sets of inputs to the discriminator.
The first set consists of the log age of an individual in 1996, permanent income (the aforementioned proxy), the profile (full history) of asset holdings, and the profile of survival indicators,%
\footnote{All individuals are alive in 1996, so we drop $s_{t_{1996}}$.}
\[
	X_1\coloneqq(1,\log t_{1996},I,a_{t_{1996}},\dots,a_{t_{2006}},s_{t_{1998}},\dots,s_{t_{2006}})\in\mathbb{R}^{14}.
\]
This is intended to capture similar identifying variation as \citetalias{dfj}.%
\footnote{\citetalias{dfj} use median assets as moments. Unlike in \cref{sec:toy:smm,sec:toy:case}, median moments cannot be translated into logistic inputs that yield an asymptotically equivalent estimator.}
The second set is augmented with gender and the profile of health status,
\[
	X_2\coloneqq(X_1,g,h_{t_{1996}},\dots,h_{t_{2006}})\in\mathbb{R}^{21},
\]
aiming to capture more variation for the bequest motive as explained in \cref{sec:health}.

We use cross validation to configure the discriminator. %
We focus on feed\hyp{}forward neural networks with sigmoid activation functions with at most two hidden layers.
We fix $\theta$ at a preliminary estimate; split the actual data into sample 1 (80\%) and sample 2 (20\%); estimate $D$ with sample 1, varying the numbers of nodes and layers; evaluate their classification accuracy with sample 2;%
\footnote{We use the classification accuracy provided by Keras's ADAM based on thresholding.}
pick the network configuration with the highest accuracy.
The selected neural network discriminator consists of two hidden layers, the first with 20 nodes and the second 10 nodes.

We compare our estimates with SMM in \citetalias{dfj}.
They use 150 moments of median assets of groups divided by the cohort and permanent income quintile in each year.
The cohort is defined on a four\hyp{}year window; Cohort 1 are those who were 72--6 years old in 1996; Cohort 2 were 77--81; Cohort 3 were 82--6; Cohort 4 were 87--91; Cohort 5 were 92 and older.
Details are in \citetalias{dfj}.
We note that accounting for health and gender is infeasible in SMM since it yields too many moments. %

\subsection{Results}

\Cref{results} gives the parameter estimates from \citetalias{dfj} and our adversarial method with specifications $X_1$ and $X_2$.
Parenthesized numbers are the standard errors; we use \citet{honore2017poor} to compute them for the adversarial estimates.
The first row is the SMM estimates in \citetalias{dfj}.
The second and third rows come from the adversarial estimation; the second uses $X_1$ (14 variables) and the third $X_2$ (21 variables).

\begin{table}[t!]
\caption{Estimates of the structural parameters. The choice of inputs $X_1$ to the discriminator is intended to capture similar identifying variation as \citetalias{dfj}. The inputs $X_2$ contain additional variation in gender and health. Standard errors for the adversarial estimates are obtained by the poor (wo)man's bootstrap.}
\label{results}
\centering
\small
\begin{tabular}{lcccccccc}
\toprule \midrule
& $\beta$ & $\underline{c}$ [\$] & $\nu$ & $\vartheta$ & $k$ [k\$] & MPC & $\underline{a}$ [\$] & Loss \\
\midrule
\citetalias[Table 3]{dfj} & 0.97 & 2,665 & 3.84 & 2,360 & 273 & 0.12 & 36,215 & $-0.67$ \\
& (0.05) & \hphantom{0}(353) & (0.55) & (8,122) & (446) & & & \\
Adversarial $X_1$ & 0.97 & 4,500 & 6.14 & 4,865 & 16.89 & 0.20 & 4,243 & $-0.67$ \\
& & & (.009) & (9.002) & (.030) & (.017) & (19.73) & \\
Adversarial $X_2$ & 0.97 & 4,500 & 5.99 & 192,676 & 10.02 & 0.12 & 1,320 & $-0.78$ \\
& & & (.005) & (8,112) & (.015) & (.014) & (3.66) & \\
\bottomrule
\end{tabular}
\end{table}

A major difference between our estimates and \citetalias{dfj}'s is the curvature of the utility of bequests $k$.
Ours is an order of magnitude smaller, which has an important implication: while \citetalias{dfj} conclude only the super rich would obtain utility from bequeathing, our estimate suggests bequeathing matters across the entire permanent income distribution.
A related number is the implied asset floor $\underline{a}$.
We obtain estimates of \$1,320 and \$4,243, which are on the lower side of the estimates known in the literature. However, they correspond respectively to the 22nd and 24th percentiles of the distribution of assets one period before deaths (see \cref{sec:counterfactual}) in our sample. We interpret these numbers as providing a sensible fit of the data. In contrast, \citetalias{dfj}'s implied asset floor is \$36,215, which corresponds to the 40th percentile. 

Overall, the intensity of the bequest motive is minor in \citetalias{dfj} and $X_1$ but non\hyp{}negligible in $X_2$.
While $k$ is low for both $X_1$ and $X_2$, MPC is almost twice as large in $X_1$ compared to $X_2$.
Consequently, individuals care about bequests less than their own consumption according to $X_1$.
\citetalias{dfj} and adversarial also differ in risk aversion $\nu$.
A large value of risk aversion rationalizes the observed saving patterns when the consumption floor $\underline{c}$ is fixed at \$4,500, a reasonable value in the literature.%
\footnote{\citetalias{dfj}'s risk aversion estimate increases from 3.84 to 6.04 in an alternative specification where $\underline{c}$ is fixed at \$5,000. However, according to their criterion, the fit of the model decreases substantially.}

In line with our theory, the adversarial estimation provides substantial gains in precision relative to \citetalias{dfj}.
The decrease in standard errors reflects that the data is sufficiently powerful to conclude the importance of the bequest motive, especially when exploiting additional variation in gender and health.

The last column reports the cross\hyp{}entropy loss for each method.
To make a fair comparison, we take the estimates of each method and train the discriminator using $X_2$ as the input.
The loss for adversarial $X_1$ does not improve over \citetalias{dfj} but does so substantially for adversarial $X_2$, which is consistent with our observation that gender and health provide useful identification for the bequest motive.
\section*{Appendix}

\appendix

\begin{proof}[Proof of \cref{thm:theta:consistency}]
Observe that $\mathbb{M}_{\hat{\theta}}(D_{\hat{\theta}})-\inf_{\theta\in\Theta}\mathbb{M}_\theta(D_\theta)$ is bounded by
\[
	\bigl[\mathbb{M}_{\hat{\theta}}(\hat{D}_{\hat{\theta}})-\inf_{\theta\in\Theta}\mathbb{M}_\theta(\hat{D}_\theta)\bigr]
	+[\mathbb{M}_{\hat{\theta}}(D_{\hat{\theta}})-\mathbb{M}_{\hat{\theta}}(\hat{D}_{\hat{\theta}})]
	+\sup_{\theta\in\Theta}\,[\mathbb{M}_\theta(\hat{D}_\theta)-\mathbb{M}_\theta(D_\theta)].
\]
The first difference is less than $o_P^\ast(1)$ and the latter two are $o_P^\ast(1)$ by assumption.
Therefore, $\mathbb{M}_{\hat{\theta}}(D_{\hat{\theta}})\leq\inf_{\theta\in\Theta}\mathbb{M}^\theta(D_\theta)+o_P^\ast(1)$.
Let $\mathcal{M}_1\coloneqq\{\log D_\theta:\theta\in\Theta\}$ and $\mathcal{M}_2\coloneqq\{\log(1-D_\theta)\circ T_\theta:\theta\in\Theta\}$.
By the assumption of Glivenko--Cantelli, $\|\mathbb{P}_0-P_0\|_{\mathcal{M}_1}\to0$ and $\|\tilde{\mathbb{P}}_0-\tilde{P}_0\|_{\mathcal{M}_2}\to0$ in outer probability as $n,m\to\infty$.
By \citet[Corollary 3.2.3 (i)]{vw1996}, it follows that $\hat{\theta}\to\theta_0$ in outer probability.
\end{proof}

Let
\(
	\tilde{h}(\theta_1,\theta_2)\coloneqq[\tilde{P}_0(\sqrt{\vphantom{T}\smash{(p_0/p_{\theta_1})\circ T_{\theta_1}}}-\sqrt{\vphantom{T}\smash{(p_0/p_{\theta_2})\circ T_{\theta_2}}})^2]^{1/2}
\).

\begin{proof}[Proof of \cref{thm:5}]
\Cref{asm:neyman} implies $\mathbb{M}_{\hat{\theta}}(D_{\hat{\theta}})\leq\mathbb{M}_{\theta_0}(D_{\theta_0})+O_P^\ast(n^{-1})$, so we apply \citet[Theorem 3.2.5]{vw1996} to $\mathbb{M}_\theta(D_\theta)$.
By \cref{asm:misspec}, $M_\theta(D_\theta)-M_{\theta_0}(D_{\theta_0})\gtrsim h(\theta,\theta_0)^2\wedge c$ for some $c>0$ globally in $\theta\in\Theta$.
By \cref{asm:maximal}, $\tilde{h}(\theta,\theta_0)^2=O(h(\theta,\theta_0))$ as $\theta\to\theta_0$.

Next, we show the convergence of the sample objective function. Note that
\[
	(\mathbb{M}_{\theta_0}-M_{\theta_0})(D_{\theta_0})-(\mathbb{M}_\theta-M_\theta)(D_\theta)
	=(\mathbb{P}_0-P_0)\log\tfrac{D_{\theta_0}}{D_\theta}+(\tilde{\mathbb{P}}_0-\tilde{P}_0)\log\tfrac{(1-D_{\theta_0})\circ T_{\theta_0}}{(1-D_\theta)\circ T_\theta}.
\]
By \cref{lem:distance},
\(
	\|\log\frac{D_{\theta_0}}{D_\theta}\|_{P_0,B}^2\leq4h(\theta,\theta_0)^2
\)
and
\(
	\|\log\frac{(1-D_{\theta_0})\circ T_{\theta_0}}{(1-D_\theta)\circ T_\theta}\|_{\tilde{P}_0,B}^2\leq4\tilde{h}(\theta,\theta_0)^2
\).
For $\delta>0$, define $\mathcal{M}_\delta^1\coloneqq\{\log\frac{D_{\theta_0}}{D_\theta}:h(\theta,\theta_0)\leq\delta\}$ and $\mathcal{M}_\delta^2\coloneqq\{\log\frac{(1-D_{\theta_0})\circ T_{\theta_0}}{(1-D_\theta)\circ T_\theta}:\tilde{h}(\theta,\theta_0)\leq\delta\}$.
By \citet[Lemma 3.4.3]{vw1996},
\[
	\mathbb{E}^\ast\sup_{h(\theta,\theta_0)<\delta}\bigl|\sqrt{n}(\mathbb{P}_0-P_0)\log\tfrac{D_{\theta_0}}{D_\theta}\bigr|\lesssim J_{[]}(2\delta,\mathcal{M}_\delta^1,\|\cdot\|_{P_0,B})\bigl[1+\tfrac{J_{[]}(2\delta,\mathcal{M}_\delta^1,\|\cdot\|_{P_0,B})}{4\delta^2\sqrt{n}}\bigr].
\]
Let $[\ell,u]$ be an $\varepsilon$\hyp{}bracket in $\{p_\theta\}$ with respect to $h$.
Since $u-\ell\geq0$ and $e^{|x|}-1-|x|\leq2(e^{x/2}-1)^2$ for every $x\geq0$,
\begin{align*}
	\bigl\|\log\tfrac{p_0+u}{p_0+p_{\theta_0}}-\log\tfrac{p_0+\ell}{p_0+p_{\theta_0}}\bigr\|_{P_0,B}^2
	\leq4\int\bigl(\sqrt{\tfrac{p_0+u}{p_0+\ell}}-1\bigr)^2p_0
	&\leq4\int(\sqrt{p_0+u}-\sqrt{p_0+{}\smash{\ell}})^2\\
	&\leq4h(u,\ell)^2\leq4\varepsilon^2.
\end{align*}
Thus, $[\log\frac{p_0+\ell}{p_0+p_{\theta_0}},\log\frac{p_0+u}{p_0+p_{\theta_0}}]$ makes a $2\varepsilon$\hyp{}bracket in $\mathcal{M}^1$.
Hence, $N_{[]}(2\varepsilon,\mathcal{M}_\delta^1,\|\cdot\|_{P_0,B})\leq N_{[]}(\varepsilon,\mathcal{P}_\delta,h)\lesssim(\delta/\varepsilon)^r$ by \cref{asm:maximal}.
This induces $J_{[]}(2\delta,\mathcal{M}_\delta^1,\|\cdot\|_{P_0,B})\lesssim\delta$.
Ergo,
\[
	\mathbb{E}^\ast\sup_{h(\theta,\theta_0)<\delta}\bigl|\sqrt{n}(\mathbb{P}_0-P_0)\log\tfrac{D_{\theta_0}}{D_\theta}\bigr|\lesssim\delta+\tfrac{1}{\sqrt{n}}.
\]
Similarly,
\(
	\mathbb{E}^\ast\sup_{\tilde{h}(\theta,\theta_0)<\delta}|\sqrt{m}(\tilde{\mathbb{P}}_0-\tilde{P}_0)\log\frac{1-D_{\theta_0}}{1-D_\theta}|
	\lesssim\delta+\frac{1}{\sqrt{m}}
\).
Then, the result follows by \citet[Theorem 3.2.5]{vw1996}.
\end{proof}

\begin{proof}[Proof of \cref{thm:theta:dist}]
By \cref{thm:5}, $\hat{\theta}$ is consistent and $\sqrt{n}(\hat{\theta}-\theta_0)$ is uniformly tight.
\cref{asm:neyman} implies
\(
	\mathbb{M}_{\hat{\theta}}(D_{\hat{\theta}})\leq\inf_{\theta\in G_n}\mathbb{M}_\theta(D_\theta)+o_P^\ast(n^{-1})
\).
Let $\mathbb{G}_{\theta_0}\dot{\ell}_{\theta_0}\coloneqq\sqrt{n}(\mathbb{P}_0-P_0)(1-D_{\theta_0})\dot{\ell}_{\theta_0}-\sqrt{n}(\mathbb{P}_{\theta_0}-P_{\theta_0})D_{\theta_0}\dot{\ell}_{\theta_0}-\sqrt{n}(\tilde{\mathbb{P}}_0-\tilde{P}_0)\tau_n$.
With \cref{asm:m,asm:misspec,asm:model:smooth,asm:dqm}, \cref{lem:lan:misspec} implies that uniformly in $h\in K$ compact,
\[
	n[\mathbb{M}_{\theta_0+h/\sqrt{n}}(D_{\theta_0+h/\sqrt{n}})-\mathbb{M}_{\theta_0}(D_{\theta_0})]
	=-h^\top\mathbb{G}_{\theta_0}\dot{\ell}_{\theta_0}+\tfrac{h^\top\tilde{I}_{\theta_0}h}{4}+o_P(1+\tfrac{n}{m}).
\]
In particular, this holds for both $\hat{h}\coloneqq\sqrt{n}(\hat{\theta}-\theta_0)$ and $\breve{h}\coloneqq 2\tilde{I}_{\theta_0}^{-1}\mathbb{G}_{\theta_0}\dot{\ell}_{\theta_0}$, so
\begin{gather*}
	n[\mathbb{M}_{\theta_0+\hat{h}/\sqrt{n}}(D_{\theta_0+\hat{h}/\sqrt{n}})-\mathbb{M}_{\theta_0}(D_{\theta_0})]
	=-\hat{h}^\top\mathbb{G}_{\theta_0}\dot{\ell}_{\theta_0}+\tfrac{1}{4}\hat{h}^\top\tilde{I}_{\theta_0}\hat{h}+o_P^\ast(1+\tfrac{n}{m}),\\
	n[\mathbb{M}_{\theta_0+\breve{h}/\sqrt{n}}(D_{\theta_0+\breve{h}/\sqrt{n}})-\mathbb{M}_{\theta_0}(D_{\theta_0})]=-\mathbb{G}_{\theta_0}\dot{\ell}_{\theta_0}^\top\tilde{I}_{\theta_0}^{-1}\mathbb{G}_{\theta_0}\dot{\ell}_{\theta_0}+o_P(1+\tfrac{n}{m}).
\end{gather*}
Since $G_n$ shrinks slower than $1/\sqrt{n}$, $\theta_0+\breve{h}/\sqrt{n}$ is eventually contained in $G_n$.
Since $\hat{h}$ minimizes $\mathbb{M}_\theta(D_\theta)$ up to $o_P^\ast(1/n)$ in $G_n$, the LHS of the first equation is larger than that of the second up to $o_P^\ast(1)$. Subtracting the two,
\[
	\tfrac{1}{4}(\hat{h}-2\tilde{I}_{\theta_0}^{-1}\mathbb{G}_{\theta_0}\dot{\ell}_{\theta_0})^\top\tilde{I}_{\theta_0}(\hat{h}-2\tilde{I}_{\theta_0}^{-1}\mathbb{G}_{\theta_0}\dot{\ell}_{\theta_0})+o_P^\ast(1+\tfrac{n}{m})\leq0.
\]
Since $\tilde{I}_{\theta_0}$ is positive definite, $\hat{h}-2\tilde{I}_{\theta_0}^{-1}\mathbb{G}_{\theta_0}\dot{\ell}_{\theta_0}=o_P^\ast(\sqrt{1+n\smash{/}m})$, proving the expression of $\sqrt{n}(\hat{\theta}-\theta_0)$.
The asymptotic variance is $4\tilde{I}_{\theta_0}^{-1}\Var(\mathbb{G}_{\theta_0}\dot{\ell}_{\theta_0})\tilde{I}_{\theta_0}^{-1}$.
Since $\mathbb{P}_0$ and $\mathbb{P}_{\theta_0}$ are independent,
\begin{align*}
	\Var(\mathbb{G}_{\theta_0}\dot{\ell}_{\theta_0})
	&=P_0(1-D_{\theta_0})^2\dot{\ell}_{\theta_0}\dot{\ell}_{\theta_0}^\top+\lim_{n\to\infty}\tfrac{n}{m}\tilde{P}_0(D_{\theta_0}\dot{\ell}_{\theta_0}\circ T_{\theta_0}+\tau_n)(D_{\theta_0}\dot{\ell}_{\theta_0}\circ T_{\theta_0}+\tau_n)^\top\\
	&=P_{\theta_0}D_{\theta_0}(1-D_{\theta_0})\dot{\ell}_{\theta_0}\dot{\ell}_{\theta_0}^\top+\lim_{n\to\infty}\tfrac{n}{m}P_0 D_{\theta_0}(1-D_{\theta_0})\dot{\ell}_{\theta_0}\dot{\ell}_{\theta_0}^\top\\
	&\hspace{50pt}+\lim_{n\to\infty}\tfrac{n}{m}\tilde{P}_0[(D_{\theta_0}\dot{\ell}_{\theta_0}\circ T_{\theta_0})\tau_n^\top+\tau_n(D_{\theta_0}\dot{\ell}_{\theta_0}^\top\circ T_{\theta_0})+\tau_n\tau_n^\top].
	\tag*\qedhere
\end{align*}
\end{proof}

\begin{singlespacing}
\bibliographystyle{ecta}
\bibliography{reference}
\end{singlespacing}

\title{\MakeUppercase{An Adversarial Approach to Structural Estimation}\\{\bf Online Appendix}}

\setcounter{footnote}{0}
\maketitle

\setcounter{page}{1}

\setcounter{section}{0}
\setcounter{lem}{0}
\setcounter{thm}{0}
\setcounter{exa}{0}
\setcounter{asm}{0}

\renewcommand{\thesection}{S.\arabic{section}}
\renewcommand{\thelem}{S.\arabic{lem}}
\renewcommand{\thethm}{S.\arabic{thm}}
\renewcommand{\theexa}{S.\arabic{exa}}
\renewcommand{\theprop}{S.\arabic{prop}}
\renewcommand{\theasm}{S.\arabic{asm}}

\renewcommand{\theHsection}{Supplement.\thesection}
\renewcommand{\theHlem}{Supplement.\thelem}
\renewcommand{\theHthm}{Supplement.\thethm}
\renewcommand{\theHexa}{Supplement.\theexa}
\renewcommand{\theHasm}{Supplement.\theasm}

\defcitealias{dfj}{DFJ}

\section{Equivalence to SMM When $D$ is Logistic} \label{sec:logistic}

We show that the adversarial estimator with a logistic discriminator is asymptotically equivalent to SMM.
Importantly, we do not assume that the logistic discriminator is ``correctly specified'' so the oracle discriminator $D_\theta$ may not take the form of a logistic classifier.
In turn, we assume that the moments are correctly specified, $\mathbb{E}[X_i]=\mathbb{E}[X_{i,\theta_0}]$; however, the structural model may still be misspecified.
As in \cref{sec:toy}, let $D_\lambda(x)=\Lambda(x^\top\lambda)$ be the logistic discriminator and $\lambda_\theta$ and $\hat{\lambda}_\theta$ be the population parameter and its estimator for each $\theta$, respectively.
We employ the same notation as \cref{sec:asm:neyman}. %

In particular, we show that the adversarial estimator $\hat{\theta}$ with this discriminator is asymptotically equivalent to the following SMM estimator,
\[
	\tilde{\theta}\coloneqq\argmin_{\theta\in\Theta}\,(\mathbb{E}_n[X]-\mathbb{E}_m[X_\theta])^\top\Omega(\mathbb{E}_n[X]-\mathbb{E}_m[X_\theta])
	\ \ \text{for} \ \ 
	\Omega\coloneqq\bigl(\tfrac{\mathbb{E}[X X^\top]+\mathbb{E}[X_{\theta_0}X_{\theta_0}^\top]}{2}\bigr)^{-1}.
\]
This is optimally weighted when $X$ and $X_\theta$ contain a constant term and the second\hyp{}order moments are also correctly specified (viz.\ $\mathbb{E}[X X^\top]=\mathbb{E}[X_{\theta_0}X_{\theta_0}^\top]$), in which case $\Omega$ reduces to $\mathbb{E}[X X^\top]^{-1}$.
For simplicity, we ignore estimation of $\Omega$.
To show their equivalence, we assume the following.
\begin{enumerate}[noitemsep]
	\item (Growing synthetic sample size) $n/m$ converges.
	\item (Smooth model) $T_\theta$ is twice continuously differentiable in $\theta$ for every $x\in\tilde{\mathcal{X}}$.
	\item (Finite moments) $\mathbb{E}[X X^\top]$ is positive definite;
	$\mathbb{E}[\|\dot{X}_\theta\|^2]$ and $\mathbb{E}[\|\ddot{X}_\theta\|]$ are bounded uniformly in $\theta$;
	$\mathbb{E}_m[\|X_\theta\|^2]$ and $\mathbb{E}_m[\|\dot{X}_\theta\|^2]$ converge uniformly in $\theta$.
	\item (Correctly specified moments) $\mathbb{E}[X]=\mathbb{E}[X_{\theta_0}]$. \label{c:spec}
	\item (Identification of $\lambda_{\theta_0}$) $\lambda_{\theta_0}$ is unique. \label{c:id1}
	\item (Smooth discriminator) $\lambda_\theta$ is continuously differentiable in $\theta$.
	\item (Exact maximizer) $\hat{\lambda}_\theta$ is the exact maximizer of $\mathbb{M}_\theta(D_\lambda)$ in that the FOC for $\hat{\lambda}_\theta$ is exactly zero for every $\theta\in\Theta$.
	\item (Uniform convergence rate of discriminator) $\sup_\theta\|\hat{\lambda}_\theta-\lambda_\theta\|=O_P(n^{-1/2})$.
	\item (Identification of $\theta_0$) %
	$\mathbb{E}[\dot{X}_{\theta_0}]$ is of full row rank.
	\item (Exact minimizer) $\hat{\theta}$ is the exact minimizer of $\mathbb{M}_\theta(D_{\hat{\lambda}_\theta})$ in that the FOC for $\hat{\theta}$ is exactly zero.
	\item (Consistency) $\hat{\theta}$ and $\tilde{\theta}$ are consistent for $\theta_0$.
\end{enumerate}

The FOC for $\lambda_{\theta_0}$ gives $\mathbb{E}[(1-\Lambda(X^\top\lambda_{\theta_0}))X]=\mathbb{E}[\Lambda(X_{\theta_0}^\top\lambda_{\theta_0})X_{\theta_0}]$.
Conditions \ref{c:spec} and \ref{c:id1} imply $\lambda_{\theta_0}=0$.
The Taylor expansion of the FOC for $\hat{\lambda}_{\theta_0}$ yields
\(
	\sqrt{n}(\hat{\lambda}_{\theta_0}-0)=\Omega\sqrt{n}(\mathbb{E}_n[X]-\mathbb{E}_m[X_\theta])+o_P(1)
	\leadsto N(0,V_{\lambda})
\)
for
\(
	V_{\lambda}\coloneqq\Omega[\Var(X)+\lim\tfrac{n}{m}\Var(X_{\theta_0})]\Omega
\).
Also, by the same reasoning as \cref{sec:asm:neyman}, $\sup_\theta\|\dot{\hat{\lambda}}_\theta-\dot{\lambda}_\theta\|=O_P(n^{-1/2})$.

Next, the envelope theorem simplifies the FOC for $\hat{\theta}$ to
\(
	\mathbb{E}_m[\Lambda(X_{\hat{\theta}}^\top\hat{\lambda}_{\hat{\theta}})\dot{X}_{\hat{\theta}}^\top\hat{\lambda}_{\hat{\theta}}]=0
\),
whose Taylor expansion gives
\begin{multline*}
	0=\mathbb{E}_m[\Lambda(X_{\theta_0}^\top\hat{\lambda}_{\theta_0})\dot{X}_{\theta_0}^\top\hat{\lambda}_{\theta_0}]
	+\mathbb{E}_m[\Lambda(X_{\theta_0}^\top\hat{\lambda}_{\theta_0})[(1-\Lambda(X_{\theta_0}^\top\hat{\lambda}_{\theta_0}))\dot{X}_{\theta_0}^\top\hat{\lambda}_{\theta_0}\hat{\lambda}_{\theta_0}^\top\dot{X}_{\theta_0}\\
	+A+\dot{X}_{\theta_0}^\top\dot{\hat{\lambda}}_{\theta_0}]](\hat{\theta}-\theta_0)+o_P(n^{-1/2})
\end{multline*}
where
\(
	A=[
	(\frac{\partial}{\partial\theta_1}\dot{X}_{\theta_0})^\top\hat{\lambda}_{\theta_0},
	\cdots,
	(\frac{\partial}{\partial\theta_d}\dot{X}_{\theta_0})^\top\hat{\lambda}_{\theta_0}
	]
\).
As $\hat{\lambda}_{\theta_0}\to 0$ and $\dot{\hat{\lambda}}_{\theta_0}\to\dot{\lambda}_{\theta_0}$, this becomes
\[
	\sqrt{n}(\hat{\theta}-\theta_0)=-\mathbb{E}[\dot{X}_{\theta_0}^\top\dot{\lambda}_{\theta_0}]^{-1}\mathbb{E}[\dot{X}_{\theta_0}^\top]\sqrt{n}(\hat{\lambda}_{\theta_0}-0)+o_P(1).
\]
As in \cref{sec:asm:neyman}, we have
\(
	\dot{\lambda}_{\theta_0}
	=-\Omega\mathbb{E}[\dot{X}_{\theta_0}]
\),
which yields
\(
	\sqrt{n}(\hat{\theta}-\theta_0)\leadsto N(0,V_\theta)
\)
for
\(
	V_\theta\coloneqq
	(\mathbb{E}[\dot{X}_{\theta_0}^\top]\Omega\mathbb{E}[\dot{X}_{\theta_0}])^{-1}
	\mathbb{E}[\dot{X}_{\theta_0}^\top]V_\lambda\mathbb{E}[\dot{X}_{\theta_0}]
	(\mathbb{E}[\dot{X}_{\theta_0}^\top]\Omega\mathbb{E}[\dot{X}_{\theta_0}])^{-1}
\).

Meanwhile, the FOC for SMM,
\(
	\mathbb{E}_m[\dot{X}_{\tilde{\theta}}^\top]\Omega(\mathbb{E}_n[X]-\mathbb{E}_m[X_{\tilde{\theta}}])=0
\),
expands as
\[
	0
	=\mathbb{E}_m[\dot{X}_{\theta_0}^\top]\Omega(\mathbb{E}_n[X]-\mathbb{E}_m[X_{\theta_0}])
	+(B-\mathbb{E}_m[\dot{X}_{\theta_0}^\top]\Omega\mathbb{E}_m[\dot{X}_{\theta_0}])(\tilde{\theta}-\theta_0)+o_P(n^{-1/2})
\]
where
\(
	B=[
	\mathbb{E}_m[\tfrac{\partial\dot{X}_{\theta_0}}{\partial\theta_1}]^\top\Omega(\mathbb{E}_n[X]-\mathbb{E}_m[X_{\theta_0}]),\dots,
	\mathbb{E}_m[\tfrac{\partial\dot{X}_{\theta_0}}{\partial\theta_d}]^\top\Omega(\mathbb{E}_n[X]-\mathbb{E}_m[X_{\theta_0}])]
\).
Thus,
\(
	\sqrt{n}(\tilde{\theta}-\theta_0)=-(\mathbb{E}[\dot{X}_{\theta_0}^\top]\Omega\mathbb{E}[\dot{X}_{\theta_0}])^{-1}\mathbb{E}[\dot{X}_{\theta_0}^\top]\Omega\sqrt{n}(\mathbb{E}_n[X]-\mathbb{E}_m[X_{\theta_0}])+o_P(1)
\),
which shows that $\hat{\theta}$ and $\tilde{\theta}$ are asymptotically equivalent in probability as well as in distribution.

\begin{rem}
If $X$ and $X_\theta$ have a constant term and the second\hyp{}order moments are correctly specified, $V_\theta$ simplifies to
\(
	[1+\lim\frac{n}{m}](\mathbb{E}[\dot{X}_{\theta_0}^\top]\mathbb{E}[X X^\top]^{-1}\mathbb{E}[\dot{X}_{\theta_0}])^{-1}
\).
\end{rem}

\section{Convergence Rates of the Discriminator} \label{supp:theory}

This section establishes the rate of convergence of the discriminator.
In addition to results on a general nonparametric discriminator, we present results specific to a neural network discriminator.

The distance of discriminators is measured by a Hellinger\hyp{}like distance
\[
	d_\theta(D_1,D_2)\coloneqq\sqrt{h_\theta(D_1,D_2)^2+h_\theta(1-D_1,1-D_2)^2}
\]
where
\(
	h_\theta(D_1,D_2)\coloneqq\sqrt{(P_0+P_\theta)(\sqrt{\vphantom{D}\smash{D_1}}-\sqrt{\vphantom{D}\smash{D_2}})^2}
\).

The size of the neural network sieve is usually measured by the uniform and bracketing entropies.
Conceptually, the bracketing entropy gives a stronger bound than the uniform entropy and yields a tighter convergence rate. It also goes nicely with the Bernstein norm that is useful for maximal inequalities for the log likelihood ratio (as well as our discriminators).
For this, we go with the bracketing entropy.
See \cite{vw2011} for more comparison of the two entropy notions.

\begin{defn}[Bracketing number and bracketing entropy integral]
The {\em $\varepsilon$\hyp{}bracketing number} $N_{[]}(\varepsilon,\mathcal{F},d)$ of a set $\mathcal{F}$ with respect to a premetric $d$ is the minimal number of $\varepsilon$\hyp{}brackets in $d$ needed to cover $\mathcal{F}$.
The {\em $\delta$\hyp{}bracketing entropy integral} of $\mathcal{F}$ with respect to $d$ is
\(
	J_{[]}(\delta,\mathcal{F},d)\coloneqq\int_0^\delta\sqrt{1+\log N_{[]}(\varepsilon,\mathcal{F},d)}d\varepsilon
\).
\end{defn}

The results on convergence of the discriminator are stated pointwise in $\theta\in\Theta$, so the discussion is made for fixed $\theta$.
Let $\delta_n$ be a nonnegative sequence.

\subsection{General Nonparametric Discriminator} \label{supp:theory:discriminator}

Let $\mathcal{D}_{\theta,\delta}\coloneqq\{D\in\mathcal{D}_n:d_\theta(D,D_\theta)\leq\delta\}$.
We first assume that the sieve does not grow too fast.

\begin{asm}[Entropy of sieve] \label{asm:sieve:entropy}
The entropy integral satisfies $J_{[]}(\delta_n,$ $\mathcal{D}_{\theta,\delta_n},d_\theta)\lesssim\delta_n^2\sqrt{n}$.
Also, there exists $\alpha<2$ such that $J_{[]}(\delta,\mathcal{D}_{\theta,\delta},d_\theta)/\delta^\alpha$ has a majorant decreasing in $\delta>0$.
\end{asm}

The estimated discriminator need not be the exact maximizer of the loss but is required to maximize it up to some rate.

\begin{asm}[Approximately maximizing discriminator] \label{asm:discriminator}
The trained discriminator $\hat{D}_\theta$ satisfies
\(
	\mathbb{M}_\theta(\hat{D}_\theta)\geq\mathbb{M}_\theta(D_\theta)-O_P(\delta_n^2)
\).
\end{asm}

In a sense, we can interpret \cref{asm:sieve:entropy} as a requirement that the sieve be not too rich and \cref{asm:discriminator} that the sieve be rich enough.
For example, if $\mathcal{D}_{\theta,\delta_n}$ is an empty set, \cref{asm:sieve:entropy} is trivially satisfied, but there is no way to attain \cref{asm:discriminator}. %
On the contrary, if $\mathcal{D}_n$ contains every function, there would exist an element in $\mathcal{D}_n$ that satisfies \cref{asm:discriminator} but \cref{asm:sieve:entropy} will be violated.
Both assumptions collectively require that the sieve is small but good enough for $D_\theta$.
With these, we obtain the rate of convergence of the discriminator.

\begin{thm}[Rate of convergence of discriminator] \label{thm:D:rate}
Under \cref{asm:sieve:entropy,asm:m,asm:discriminator},
$d_\theta(\hat{D}_\theta,D_\theta)=O_P^\ast(\delta_n)$.
\end{thm}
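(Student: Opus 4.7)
The plan is to apply the standard sieve $M$-estimation rate theorem (Theorem 3.2.5 of van der Vaart and Wellner, 1996) to the inner maximization $D\mapsto\mathbb{M}_\theta(D)$ over $\mathcal{D}_n$ with $\theta$ held fixed, taking $D_\theta$ as the true point and $d_\theta$ as the distance on the parameter space. Three ingredients must be assembled: a quadratic lower bound on the population criterion, a modulus-of-continuity bound on the centered empirical process indexed by $\mathcal{D}_{\theta,\delta}$, and the near-maximization of $\hat{D}_\theta$ granted by \cref{asm:discriminator}.

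For the curvature, I claim that $M_\theta(D_\theta)-M_\theta(D)\geq d_\theta(D,D_\theta)^2$ holds globally, by applying the Kullback--Leibler/Hellinger inequality pointwise to the Bernoulli distributions with parameters $D_\theta(x)$ and $D(x)$: for each $x$,
\[
	D_\theta\log(D_\theta/D)+(1-D_\theta)\log((1-D_\theta)/(1-D))\geq(\sqrt{D_\theta}-\sqrt{D})^2+(\sqrt{1-D_\theta}-\sqrt{1-D})^2.
\]
Integrating against $p_0+p_\theta$ and using $(p_0+p_\theta)D_\theta=p_0$ and $(p_0+p_\theta)(1-D_\theta)=p_\theta$ turns the left-hand side into exactly $M_\theta(D_\theta)-M_\theta(D)$ and the right-hand side into $h_\theta(D,D_\theta)^2+h_\theta(1-D,1-D_\theta)^2=d_\theta(D,D_\theta)^2$.

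For the modulus of continuity, split
\[
	(\mathbb{M}_\theta-M_\theta)(D)-(\mathbb{M}_\theta-M_\theta)(D_\theta)=(\mathbb{P}_0-P_0)\log(D/D_\theta)+(\mathbb{P}_\theta-P_\theta)\log((1-D)/(1-D_\theta)).
\]
By \cref{lem:vaart} and the remark following it, the Bernstein ``norms'' $\|\log(D/D_\theta)\|_{P_0,B}$ and $\|\log((1-D)/(1-D_\theta))\|_{P_\theta,B}$ are dominated by constant multiples of $h_\theta(D,D_\theta)$ and $h_\theta(1-D,1-D_\theta)$, hence by $d_\theta(D,D_\theta)$, provided the tail moments in the remark stay uniformly bounded over $\mathcal{D}_{\theta,\delta}$. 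A $d_\theta$-bracketing of $\mathcal{D}_{\theta,\delta}$ then yields a bracketing of the log classes in their respective Bernstein norms of the same order, so \citet[Lemma 3.4.3]{vw1996} gives
\[
	\mathbb{E}^\ast\sup_{D\in\mathcal{D}_{\theta,\delta}}|\sqrt{n}(\mathbb{P}_0-P_0)\log(D/D_\theta)|\lesssim J_{[]}(\delta,\mathcal{D}_{\theta,\delta},d_\theta)\bigl[1+\tfrac{J_{[]}(\delta,\mathcal{D}_{\theta,\delta},d_\theta)}{\delta^2\sqrt{n}}\bigr],
\]
and an analogous bound with $\sqrt{m}$ for the $(\mathbb{P}_\theta-P_\theta)$ term. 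Under \cref{asm:m}, $n/m$ is bounded, so both contributions combine into a common modulus $\phi_n(\delta)$ of the same order; the first part of \cref{asm:sieve:entropy} gives $\phi_n(\delta_n)\lesssim\delta_n^2\sqrt{n}$ and the second part provides an $\alpha<2$ with $\phi_n(\delta)/\delta^\alpha$ eventually decreasing, so $\delta_n$ is an admissible rate in the sense of \citet[Theorem 3.2.5]{vw1996}.

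Combining the curvature bound, the modulus bound, and \cref{asm:discriminator}, \citet[Theorem 3.2.5]{vw1996} delivers $d_\theta(\hat{D}_\theta,D_\theta)=O_P^\ast(\delta_n)$. The main technical obstacle is the Bernstein-norm step: because $D$ and $D_\theta$ may approach the boundary of $[0,1]$, the ratios $D_\theta/D$ and $(1-D_\theta)/(1-D)$ need not be bounded, and so the conditional tail moments in \cref{lem:vaart} must be controlled uniformly over $\mathcal{D}_{\theta,\delta}$. In typical implementations this is handled by restricting the sieve so that $D$ is bounded away from $0$ and $1$, or by a truncation/chaining argument absorbed into the entropy bound of \cref{asm:sieve:entropy}; with this in hand the Bernstein norm is, up to constants, equivalent to $d_\theta$ on $\mathcal{D}_{\theta,\delta}$ and the argument proceeds as above.
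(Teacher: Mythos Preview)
Your curvature argument is fine, but the modulus-of-continuity step has a real gap that you yourself flag at the end without resolving. Working directly with $\log(D/D_\theta)$ and $\log((1-D)/(1-D_\theta))$ forces you to control the Bernstein norms via \cref{lem:vaart}, and that bound carries the extra factor $P_0(D_\theta/D\mid D_\theta/D\geq 25/16)$ (and its mirror), which must be uniformly bounded over $\mathcal{D}_{\theta,\delta}$. That is precisely the content of \cref{asm:sieve:bound}, an assumption the theorem does \emph{not} make. The same issue bites again when you transfer $d_\theta$-brackets to Bernstein-norm brackets of the log classes: for a bracket $[\ell,u]$ you need $\|\log(u/\ell)\|_{P_0,B}\lesssim d_\theta(u,\ell)$, and this again requires the second half of \cref{asm:sieve:bound}. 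So as written, your argument proves the result only under \cref{asm:sieve:entropy,asm:m,asm:discriminator} \emph{and} \cref{asm:sieve:bound}; that is essentially the route the paper takes for \cref{thm:obj:rate}, not for \cref{thm:D:rate}.

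The paper avoids the extra assumption by the standard nonparametric-MLE trick of passing to a convexified pseudo-objective: set $m^D_{D_\theta}=\log\frac{D+D_\theta}{2D_\theta}$ and $\tilde{\mathbb{M}}_\theta(D)=\mathbb{P}_0 m^D_{D_\theta}+\mathbb{P}_\theta m^{1-D}_{1-D_\theta}$. Concavity of $\log$ gives $\tilde{\mathbb{M}}_\theta(\hat{D}_\theta)-\tilde{\mathbb{M}}_\theta(D_\theta)\geq\tfrac{1}{2}[\mathbb{M}_\theta(\hat{D}_\theta)-\mathbb{M}_\theta(D_\theta)]\geq -O_P(\delta_n^2)$, so the near-maximization transfers. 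The point of the averaging is that $m^D_{D_\theta}\geq\log(1/2)$ regardless of how close $D$ is to $0$ or $1$, so the Bernstein norms satisfy $\|m^D_{D_\theta}\|_{P_0,B}^2\leq 4h_\theta(D,D_\theta)^2$ unconditionally, and a $d_\theta$-bracket $[\ell,u]$ yields the bracket $[m^\ell_{D_\theta},m^u_{D_\theta}]$ with $\|m^u_{D_\theta}-m^\ell_{D_\theta}\|_{P_0,B}^2\leq 4h_\theta(u,\ell)^2$, again without any tail condition. The curvature for $\tilde{M}_\theta$ is then $\tilde{M}_\theta(D_\theta)-\tilde{M}_\theta(D)\geq d_\theta(D,D_\theta)^2/(1+\sqrt{2})^2$ (this is \cref{lem:maximal}), and Theorem~3.4.1 of van der Vaart--Wellner applies with the entropy from \cref{asm:sieve:entropy} alone. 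In short: the missing idea is the convexified log, which is what lets the paper drop \cref{asm:sieve:bound} from the hypotheses of \cref{thm:D:rate}.
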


One interesting observation is that \cref{thm:D:rate} does not require convergence of the objective function.
This is reminiscent of the nonparametric maximum likelihood literature.
To prove it without requiring convergence of the objective function, we think in terms of a pseudo\hyp{}objective function.
Let $m^p_q\coloneqq\log\frac{p+q}{2q}$ and
\[
	\tilde{M}_\theta(D)\coloneqq P_0 m^D_{D_\theta}+P_\theta m^{1-D}_{1-D_\theta}, \qquad
	\tilde{\mathbb{M}}_\theta(D)\coloneqq\mathbb{P}_0 m^D_{D_\theta}+\mathbb{P}_\theta m^{1-D}_{1-D_\theta}.
\]

\begin{proof}%
The concavity of the logarithm and \cref{asm:discriminator} imply
\(
	\tilde{\mathbb{M}}_\theta(\hat{D}_\theta)-\tilde{\mathbb{M}}_\theta(D_\theta)\geq\tfrac{1}{2}[\mathbb{M}_\theta(\hat{D}_\theta)-\mathbb{M}_\theta(D_\theta)]\geq-O_P(\delta_n^2)
\).
Then, apply \citet[Theorem 3.4.1]{vw1996} with \cref{lem:maximal,asm:sieve:entropy}.
\end{proof}

The following is a maximal inequality used to prove \cref{thm:D:rate}.
Let $\mathcal{M}_{\theta,\delta}^1\coloneqq\{m^{D}_{D_\theta}:D\in\mathcal{D}_{\theta,\delta}\}$ and $\mathcal{M}_{\theta,\delta}^2\coloneqq\{m^{1-D}_{1-D_\theta}:D\in\mathcal{D}_{\theta,\delta}\}$.

\begin{lem}[Maximal inequality for pseudo\hyp{}cross\hyp{}entropy discriminator] \label{lem:maximal}
For every $D\in\mathcal{D}$,
\(
	\tilde{M}_\theta(D)-\tilde{M}_\theta(D_\theta)\leq-d_\theta(D,D_\theta)^2/(1+\sqrt{2})^2
\).
For every $\delta>0$,
\begin{multline*}
	\mathbb{E}^\ast\sup_{D\in\mathcal{D}_{\theta,\delta}}\sqrt{n}\bigl|(\tilde{\mathbb{M}}_\theta-\tilde{M}_\theta)(D)-(\tilde{\mathbb{M}}_\theta-\tilde{M}_\theta)(D_\theta)\bigr|\\
	\lesssim J_{[]}(\delta,\mathcal{D}_{\theta,\delta},d_\theta)\bigl[1+\sqrt{\tfrac{n}{m}}+(1+\tfrac{n}{m})\tfrac{J_{[]}(\delta,\mathcal{D}_{\theta,\delta},d_\theta)}{\delta^2\sqrt{n}}\bigr].
\end{multline*}
\end{lem}

\begin{proof}%
Since $\log x\leq2(\sqrt{x}-1)$ for every $x>0$,
\begin{multline*}
	P_0\log\tfrac{D}{D_\theta}\leq2P_0\bigl(\sqrt{\tfrac{D}{D_\theta}}-1\bigr)
	=\Bigl[2P_0\tfrac{\sqrt{D(p_0+p_\theta)}}{\sqrt{p_0}}-\int D(p_0+p_\theta)-\int p_0\Bigr]\\
	+(P_0+P_\theta)(D-D_\theta)
	=-h_\theta(D,D_\theta)^2+(P_0+P_\theta)(D-D_\theta).
\end{multline*}
Similarly,
\(
	P_\theta\log\frac{1-D}{1-D_\theta}\leq-h_\theta(1-D,1-D_\theta)^2-(P_0+P_\theta)(D-D_\theta)
\).
Replacing $D$ and $1-D$ with $(D+D_\theta)/2$ and $(1-D+1-D_\theta)/2$ and summing them up yield
\[
	P_0 m^D_{D_\theta}+P_\theta m^{1-D}_{1-D_\theta}\leq-h_\theta\bigl(\tfrac{D+D_\theta}{2},D_\theta\bigr)^2-h_\theta\bigl(\tfrac{1-D+1-D_\theta}{2},1-D_\theta\bigr)^2.
\]
Since $\sqrt{2}h_\theta(\frac{p+q}{2},q)\leq h_\theta(p,q)\leq(1+\sqrt{2})h_\theta(\frac{p+q}{2},q)$ \citep[Problem 3.4.4]{vw1996}, we obtain the first inequality.
For the second inequality, observe that
\[
	\sqrt{n}\bigl[(\tilde{\mathbb{M}}_\theta-\tilde{M}_\theta)(D)-(\tilde{\mathbb{M}}_\theta-\tilde{M}_\theta)(D_\theta)\bigr]
	=\sqrt{n}(\mathbb{P}_0-P_0)m^D_{D_\theta}+\sqrt{n}(\mathbb{P}_\theta-P_\theta)m^{1-D}_{1-D_\theta}.
\]
Therefore, it suffices to separately bound
\[
	\mathbb{E}^\ast\sup_{D\in\mathcal{D}_{\theta,\delta}}\bigl|\sqrt{n}(\mathbb{P}_0-P_0)m^D_{D_\theta}\bigr|
	\quad\text{and}\quad
	\sqrt{\tfrac{n}{m}}\,\mathbb{E}^\ast\sup_{D\in\mathcal{D}_{\theta,\delta}}\bigl|\sqrt{m}(\mathbb{P}_\theta-P_\theta)m^{1-D}_{1-D_\theta}\bigr|.
\]
Since $m^D_{D_\theta},m^{1-D}_{1-D_\theta}\geq\log(1/2)$ and $e^{|x|}-1-|x|\leq4(e^{x/2}-1)^2$ for every $x\geq\log(1/2)$,
\begin{gather*}
	\|m^D_{D_\theta}\|_{P_0,B}^2\leq 8 P_0(e^{m^D_{D_\theta}/2}-1)^2\leq 8 h_\theta(\tfrac{D+D_\theta}{2},D_\theta)^2\leq4h_\theta(D,D_\theta)^2,\\
	\|m^{1-D}_{1-D_\theta}\|_{P_\theta,B}^2\leq4h_\theta(1-D,1-D_\theta)^2.
\end{gather*}
By \citet[Lemma 3.4.3]{vw1996}, the first supremum is bounded by
\(
	J_{[]}(2\delta,\mathcal{M}_{\theta,\delta}^1,\|\cdot\|_{P_0,B})[1+J_{[]}(2\delta,\mathcal{M}_{\theta,\delta}^1,\|\cdot\|_{P_0,B})/(4\delta^2\sqrt{n})].
\)
Let $[\ell,u]$ be an $\varepsilon$\hyp{}bracket in $\mathcal{D}$ with respect to $d_\theta$.
Since $u-\ell\geq0$ and $e^{|x|}-1-|x|\leq2(e^{x/2}-1)^2$ for $x\geq0$,
\begin{align*}
	\bigl\|m^u_{D_\theta}-m^\ell_{D_\theta}\bigr\|_{P_0,B}^2\leq4\int\Bigl(\sqrt{\tfrac{u+D_\theta}{\ell+D_\theta}}-1\Bigr)^2p_0
	&\leq4\int\bigl(\sqrt{\vphantom{D}\smash{u+D_\theta}}-\sqrt{\vphantom{D}\smash{\ell+D_\theta}}\bigr)^2(p_0+p_\theta)\\
	&\leq4h_\theta(u,\ell)^2\leq4\varepsilon^2.
\end{align*}
Thus, $[m^\ell_{D_\theta},m^u_{D_\theta}]$ makes a $2\varepsilon$\hyp{}bracket in $\mathcal{M}_{\theta,\delta}^1$ with respect to $\|\cdot\|_{P_0,B}$, so $J_{[]}(2\delta,\mathcal{M}_{\theta,\delta}^1,\|\cdot\|_{P_0,B})\leq 2 J_{[]}(\delta,\mathcal{D}_{\theta,\delta},d_\theta)$.
Analogous argument for the second supremum yields the second inequality.
\end{proof}

\subsection{Cross\hyp{}Entropy Loss}

To show convergence of the objective function, we need to make an additional assumption that the tails of the discriminators in the sieve are not too thin.
This assumption would be trivial if we assume a compact support for the observables $X_i$ and $X_{i,\theta}$, which is standard in the neural network literature.

\begin{asm}[Support compatibility] \label{asm:sieve:bound}
Define $P(X|A)$ to be $P(X\mathbbm{1}\{A\})/P(A)$ if $P(A)>0$ and $0$ otherwise.
There exists $M$ such that
\[
	\sup_{D\in\mathcal{D}_{\theta,\delta_n}}P_0\bigl(\tfrac{D_\theta}{D}\bigm|\tfrac{D_\theta}{D}\geq\tfrac{25}{16}\bigr)<M, \quad
	\sup_{D\in\mathcal{D}_{\theta,\delta_n}}P_\theta\bigl(\tfrac{1-D_\theta}{1-D}\bigm|\tfrac{1-D_\theta}{1-D}\geq\tfrac{25}{16}\bigr)<M.
\]
Also, the brackets $\{\ell\leq D\leq u\}$ in \cref{asm:sieve:entropy} can be taken so that
\(
	(P_0+P_\theta)(\frac{D_\theta}{\ell}(\sqrt{u}-\sqrt{\ell})^2)
\)
and
\(
	(P_0+P_\theta)(\frac{1-D_\theta}{1-u}(\sqrt{1-\ell}-\sqrt{1-u})^2)
\)
are $O(d_\theta(u,\ell)^2)$.
\end{asm}

With this, we obtain the rate for the estimated cross\hyp{}entropy loss.

\begin{thm}[Rate of convergence of objective function] \label{thm:obj:rate}
Under \cref{asm:sieve:entropy,asm:sieve:bound,asm:m,asm:discriminator},
\(
	\mathbb{M}_\theta(\hat{D}_\theta)-\mathbb{M}_\theta(D_\theta)=O_P^\ast(\delta_n^2)
\).
\end{thm}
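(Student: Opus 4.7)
\cref{asm:discriminator} hands us the lower bound $\mathbb{M}_\theta(\hat D_\theta) - \mathbb{M}_\theta(D_\theta) \geq -O_P^\ast(\delta_n^2)$, so the task is to establish the matching upper bound. The plan is to decompose
\[
\mathbb{M}_\theta(\hat D_\theta) - \mathbb{M}_\theta(D_\theta) = [M_\theta(\hat D_\theta) - M_\theta(D_\theta)] + (\mathbb{P}_0 - P_0)\log\tfrac{\hat D_\theta}{D_\theta} + (\mathbb{P}_\theta - P_\theta)\log\tfrac{1 - \hat D_\theta}{1 - D_\theta}.
\]
The first bracket is nonpositive because $D_\theta$ is the pointwise maximizer of $M_\theta$, so it suffices to bound the two empirical-process remainders uniformly over a shrinking neighborhood of $D_\theta$.

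By \cref{thm:D:rate}, $d_\theta(\hat D_\theta, D_\theta) = O_P^\ast(\delta_n)$, so with probability tending to one $\hat D_\theta$ lies in $\mathcal{D}_{\theta, C\delta_n}$ for some constant $C$. Applying Lemma~3.4.3 of \citet{vw1996} to each of the two empirical processes separately, with the Bernstein ``norm'' $\|\cdot\|_{P_0,B}$ for the $\mathbb{P}_0$ piece and $\|\cdot\|_{P_\theta,B}$ for the $\mathbb{P}_\theta$ piece, and absorbing the $m$-scaling by \cref{asm:m} as in the proof of \cref{lem:maximal}, gives a bound of the form
\[
\tfrac{1}{\sqrt{n}}\bigl(1 + \sqrt{\tfrac{n}{m}}\bigr) J_{[]}(C\delta_n, \mathcal{D}_{\theta,C\delta_n}, d_\theta)\bigl[1 + \bigl(1 + \tfrac{n}{m}\bigr)\tfrac{J_{[]}(C\delta_n, \mathcal{D}_{\theta,C\delta_n}, d_\theta)}{(C\delta_n)^2\sqrt{n}}\bigr],
\]
which under \cref{asm:sieve:entropy} collapses to $O(\delta_n^2)$.

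The main obstacle is bridging between brackets measured in $d_\theta$ and the Bernstein norms in which the maximal inequality is naturally stated; this is precisely what \cref{asm:sieve:bound} is designed to handle. For individual functions in the sieve, the remark following \cref{lem:vaart} together with the first clause of \cref{asm:sieve:bound} yields $\|\tfrac{1}{2}\log(D/D_\theta)\|_{P_0,B}^2 \lesssim h_\theta(D,D_\theta)^2$ and the analogous inequality for $\log((1-D)/(1-D_\theta))$ under $P_\theta$, so envelope sizes scale correctly with $d_\theta$. For brackets $\ell \leq D \leq u$ of $d_\theta$-size $\varepsilon$ taken as in the second clause of \cref{asm:sieve:bound}, the inequality $e^{|x|} - 1 - |x| \leq 2(e^{x/2}-1)^2$ for $x \geq 0$ (used in the proof of \cref{lem:maximal}) yields $\|\log(u/D_\theta) - \log(\ell/D_\theta)\|_{P_0,B}^2 \lesssim (P_0+P_\theta)(\tfrac{D_\theta}{\ell}(\sqrt{u}-\sqrt{\ell})^2) = O(\varepsilon^2)$, and analogously on the $1-D$ side, so $\varepsilon$-brackets in $d_\theta$ transfer to $O(\varepsilon)$-brackets in the appropriate Bernstein norms and the bracketing entropies are comparable up to constants. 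Substituting back into the maximal inequality completes the upper bound and, combined with \cref{asm:discriminator}, finishes the proof.
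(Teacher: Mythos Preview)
Your proposal is correct and follows essentially the same route as the paper. The only cosmetic difference is in handling the population piece: the paper bounds $P_0\log(D/D_\theta)+P_\theta\log((1-D)/(1-D_\theta))$ explicitly via $\log x\leq 2(\sqrt{x}-1)$ to get $-d_\theta(D,D_\theta)^2$, whereas you simply observe $M_\theta(\hat D_\theta)-M_\theta(D_\theta)\leq 0$ since $D_\theta$ is the population maximizer; for the upper bound this is all that is needed, so your version is slightly cleaner. The core of both arguments---localizing via \cref{thm:D:rate}, transferring $d_\theta$-brackets to Bernstein-norm brackets using the two clauses of \cref{asm:sieve:bound} and the remark after \cref{lem:vaart}, and then invoking \citet[Lemma~3.4.3]{vw1996} together with \cref{asm:sieve:entropy,asm:m}---is identical.
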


\begin{proof}%
Since $\mathbb{M}_\theta(\hat{D}_\theta)-\mathbb{M}_\theta(D_\theta)\geq-O_P(\delta_n^2)$ by \cref{asm:discriminator}, we need only to prove the reverse inequality.
With $\log(x)\leq2(\sqrt{x}-1)$ for $x>0$, for every $D$,
\begin{multline*}
	\mathbb{M}_\theta(D)-\mathbb{M}_\theta(D_\theta)\\
	\leq 2 P_0\bigl(\sqrt{\tfrac{D}{D_\theta}}-1\bigr)+2P_\theta\bigl(\sqrt{\tfrac{1-D}{1-D_\theta}}-1\big)
	+(\mathbb{P}_0-P_0)\log\tfrac{D}{D_\theta}+(\mathbb{P}_\theta-P_\theta)\log\tfrac{1-D}{1-D_\theta}.
\end{multline*}
As in \cref{lem:maximal}, the first two terms are equal to $-d_\theta(D,D_\theta)^2$.
Since \cref{thm:D:rate} implies $d_\theta(\hat{D}_\theta,D_\theta)^2=O_P^\ast(\delta_n^2)$, it remains to show that the last two terms are of the same order.
We bound the suprema,
\[
	\mathbb{E}^\ast\sup_{D\in\mathcal{D}_{\theta,\delta_n}}\bigl|\sqrt{n}(\mathbb{P}_0-P_0)\log\tfrac{D}{D_\theta}\bigr|
	\quad\text{and}\quad
	\mathbb{E}^\ast\sup_{D\in\mathcal{D}_{\theta,\delta_n}}\bigl|\sqrt{m}(\mathbb{P}_\theta-P_\theta)\log\tfrac{1-D}{1-D_\theta}\bigr|.
\]
Under \cref{asm:sieve:bound}, it follows from (the remark after) \cref{lem:vaart} that for $D\in\mathcal{D}_{\theta,\delta_n}$,
\[
	\bigl\|\tfrac{1}{2}\log\tfrac{D}{D_\theta}\bigr\|_{P_0,B}^2\leq2(1+M)h_\theta(D,D_\theta)^2, \
	\bigl\|\tfrac{1}{2}\log\tfrac{1-D}{1-D_\theta}\bigr\|_{P_\theta,B}^2\leq2(1+M)h_\theta(1-D,1-D_\theta)^2.
\]
\cref{asm:sieve:bound} also implies that an $\varepsilon$\hyp{}bracket in $\mathcal{M}_{\theta,\delta}^1$ induces
\begin{gather*}
	\bigl\|\log\tfrac{u}{D_\theta}-\log\tfrac{\ell}{D_\theta}\bigr\|_{P_0,B}^2\leq4P_0\bigl(\sqrt{\tfrac{u}{\ell}}-1\bigr)^2=4(P_0+P_\theta)\tfrac{D_\theta}{\ell}(\sqrt{u}-\sqrt{\ell})^2\leq C d_\theta(u,\ell)^2,\\
	\bigl\|\log\tfrac{1-\ell}{1-D_\theta}-\log\tfrac{1-u}{1-D_\theta}\bigr\|_{P_\theta,B}^2\leq4(P_0+P_\theta)\tfrac{1-D_\theta}{1-u}(\sqrt{1-\ell}-\sqrt{1-u})^2\leq C d_\theta(u,\ell)^2,
\end{gather*}
for some $C>0$.
By similar arguments as in the proof of \cref{lem:maximal}, the two suprema are of orders $\sqrt{n}\delta_n^2$ and $\sqrt{m}\delta_n^2$.%
\footnote{We can write $\|\frac{1}{2}\log\frac{D}{D_\theta}\|_{P_0,B}^2\leq[2(1+M)\vee C]h_\theta(D,D_\theta)^2$ and $\|\log\frac{u}{D_\theta}-\log\frac{\ell}{D_\theta}\|_{P_0,B}^2\leq[2(1+M)\vee C]d_\theta(u,\ell)^2$ to apply the same argument as \cref{thm:D:rate}.}
With \cref{asm:m} follows the theorem.
\end{proof}

\subsection{Neural Network Discriminator}

The results in \cref{supp:theory:discriminator} apply to any nonparametric sieve discriminator.
Given a particular sieve, the specific convergence rate is determined by the $\delta_n$ that satisfies \cref{asm:sieve:entropy}.
In the nonparametric estimation literature, it is often observed that $\delta_n$ gets slower as the dimension $d$ of the input $X_i$ increases.
In the context of nonparametric regression, \citet{bk2019} show that a particular type of neural network estimator does not have a rate that slows with $d$ but only with $d^\ast$, the ``underlying dimension'' of the target function.%
\footnote{\citet{bk2019} call $d^\ast$ the {\em order}.}
We believe that the structure they impose on the target function arises very naturally in economic models, and want to incorporate the ``remedy for the curse of dimensionality'' aspect into our theory.%

In light of this, we develop the ``classification counterpart'' of the results in \citet{bk2019}.
Instead of the target regression function, we exploit the low\hyp{}dimensional composite structure on the log likelihood ratio $\log(p_0/p_\theta)$.
We note that our theory does not {\em require} that there is such a low\hyp{}dimensional structure; if there is none, we have $d^\ast=d$ and our result reduces to a regular nonparametric rate with the curse of dimensionality.

Intuitively, the low\hyp{}dimensional composite structure is described as follows.
Note that the log likelihood ratio $\log(p_0/p_\theta)$ takes a $d$\hyp{}dimensional input $X$ as its argument, where $d$ can be large.
We need that this ratio admits a representation as a nested composition of smooth functions, each of which takes a possibly smaller number $d^\ast$ of arguments.
In the first layer of composition, we assume a linear index structure to reduce $d$ arguments into $d^\ast$ intermediate outputs.

To develop a precise definition, we start with the notion of smoothness we use.

\begin{defn}[$(p,C)$\hyp{}smoothness; {\citealp[Definition 1]{bk2019}}]
Let $p=q+s$ for some $q\in\mathbb{N}_0$ and $0<s\leq 1$. A function $m:\mathbb{R}^d\to\mathbb{R}$ is called {\em $(p,C)$\hyp{}smooth} if for every $\alpha=(\alpha_1,\dots,\alpha_d)\in\mathbb{N}_0^d$ with $\sum_{j=1}^d\alpha_j=q$, the partial derivative $\frac{\partial^q m}{\partial x_1^{\alpha_1}\cdots\partial x_d^{\alpha_d}}$ exists and satisfies
\[
	\biggl|\frac{\partial^q m}{\partial x_1^{\alpha_1}\cdots\partial x_d^{\alpha_d}}(x)-\frac{\partial^q m}{\partial x_1^{\alpha_1}\cdots\partial x_d^{\alpha_d}}(z)\biggr|\leq C\|x-z\|^s
\]
for every $x,z\in\mathbb{R}^d$ where $\|\cdot\|$ denotes the Euclidean norm.
\end{defn}

With this, the nested composition structure is defined as follows.

\begin{defn}[Generalized hierarchical interaction model; {\citealp[Definition 2]{bk2019}}]
Let $d \in\mathbb{N}$, $d^\ast \in \{1,\dots,d\}$, and $m : \mathbb{R}^d \rightarrow \mathbb{R}$.
We say that the function $m$ satisfies a {\em generalized hierarchical interaction model of order $d^\ast$ and level $0$}, if there exist $a_1\in\mathbb{R}^d,\ldots, a_{d^\ast}\in\mathbb{R}^d$, and $f : \mathbb{R}^{d^\ast} \rightarrow \mathbb{R}$ such that
\[
m(x) = f(a_1^\top x,\ldots,a_{d^\ast}^\top x)
\]
for every $x \in \mathbb{R}^d$.
We say that $m$ satisfies a {\em generalized hierarchical interaction model of order $d^\ast$ and level $l + 1$ with $K$ components} if there exist $g_k: \mathbb{R}^{d^\ast} \rightarrow \mathbb{R}$ and $f_{1,k},\dots,f_{d^\ast,k}: \mathbb{R}^{d} \rightarrow \mathbb{R}$ $(k = 1,\ldots,K)$ such that $f_{1,k},\ldots,f_{d^\ast,k}$ $(k = 1,\dots,K)$ satisfy a generalized hierarchical model of order $d^\ast$ and level $l$ and 
\[
m(x) = \sum_{k=1}^K g_k(f_{1,k}(x),\dots,f_{d^\ast,k}(x))
\]
for every $x \in \mathbb{R}^d$.
We say that the generalized hierarchical interaction model is {\em $(p,C)$\hyp{}smooth} if all functions occurring in its definition are $(p,C)$\hyp{}smooth.
\end{defn}

For example, a conditional binary choice model yields a log likelihood ratio that satisfies a generalized hierarchical interaction model of order $d^\ast\leq 3$ and level $0$, irrespectively of the dimension of the covariates.

\begin{exa}[Binary choice model]
Let $y_i=\mathbbm{1}\{x_i^\top\alpha+\varepsilon_i>0\}$, $\varepsilon_i\sim P_\varepsilon$, be the true DGP and $y_i=\mathbbm{1}\{x_i^\top\beta+\tilde{\varepsilon}_i>0\}$, $\tilde{\varepsilon}_i\sim\tilde{P}_\varepsilon$, be the structural model.
Then,
\[
	\log\frac{p_0(y,x)}{p_\theta(y,x)}=y\log\frac{1-P_\varepsilon(-x^\top\alpha)}{1-\tilde{P}_\varepsilon(-x^\top\beta)}+(1-y)\log\frac{P_\varepsilon(-x^\top\alpha)}{\tilde{P}_\varepsilon(-x^\top\beta)}.
\]
Therefore, we can write this as $f(a_1^\top z,a_2^\top z,a_3^\top z)$ where $z=(y,x^\top)^\top$, $a_1=(1,0,\dots,0)^\top$, $a_2=(0,-\alpha^\top)^\top$, $a_3=(0,-\beta^\top)^\top$, and %
\(
	f(y,x_1,x_2)=y[\log(1-P_\varepsilon(x_1))-\log(1-\tilde{P}_\varepsilon(x_2))]+(1-y)[\log P_\varepsilon(x_1)-\log\tilde{P}_\varepsilon(x_2)]
\).
\end{exa}

Neural networks approximate functions by a nested composition of activation functions.
For theoretical development, we define the following structure on the neural network estimator.

\begin{defn}[Hierarchical neural network; {\citealp[Section 2]{bk2019}}]
Let $\sigma:\mathbb{R}\to\mathbb{R}$ be a $q$\hyp{}admissible activation function.
For $M^\ast\in\mathbb{N}$, $d\in\mathbb{N}$, $d^\ast\in\{1,\dots,d\}$, and $\alpha>0$, let $\mathcal{F}_{M^\ast,d^\ast,d,\alpha}$ be the class of functions $f:\mathbb{R}^d\to\mathbb{R}$ such that
\[
	f(x)=\sum_{i=1}^{M^\ast}\mu_i\sigma\biggl(\sum_{j=1}^{4d^\ast}\lambda_{i,j}\sigma\biggl(\sum_{v=1}^d\theta_{i,j,v}x_v+\theta_{i,j,0}\biggr)+\lambda_{i,0}\biggr)+\mu_0
\]
for some $\mu_i,\lambda_{i,j},\theta_{i,j,v}\in\mathbb{R}$, where $|\mu_i|\leq\alpha$, $|\lambda_{i,j}|\leq\alpha$, and $|\theta_{i,j,v}|\leq\alpha$.
For $l=0$, define the set of neural networks with two hidden layers by $\mathcal{H}_{M^\ast,d^\ast,d,\alpha}^{(0)}\coloneqq\mathcal{F}_{M^\ast,d^\ast,d,\alpha}$; for $l>0$, define the set of neural networks with $2l+2$ hidden layers by
\begin{multline*}
	\mathcal{H}_{M^\ast,d^\ast,d,\alpha}^{(l)}\coloneqq\\
	\biggl\{h:\mathbb{R}^d\to\mathbb{R}:h(x)=\sum_{k=1}^K g_k(f_{1,k}(x),\dots,f_{d^\ast,k}(x)),
	g_k\in\mathcal{F}_{M^\ast,d^\ast,d^\ast,\alpha},\ f_{j,k}\in\mathcal{H}^{(l-1)}\biggr\}.
\end{multline*}
\end{defn}

Now, we assume that the log likelihood ratio admits a hierarchical representation and that the neural network has a corresponding hierarchical structure.

\begin{asm}[Neural network discriminator] \label{asm:nn}
Let $P_0$ and $P_\theta$ have subexponential tails and finite first moments.%
\footnote{We say that $P$ on $\mathbb{R}^d$ has {\em subexponential tails} if $\log P(\|X\|_\infty>a)\lesssim-a$ for large $a$.}
Let $\log(p_0/p_\theta)$ satisfy a $(p,C)$\hyp{}smooth generalized hierarchical interaction model of order $d^\ast$ and finite level $l$ with $K$ components for $p=q+s$, $q\in\mathbb{N}_0$, and $s\in(0,1]$.
Let $\mathcal{H}_{M^\ast,d^\ast,d,\alpha}^{(l)}$ be the class of neural networks with the Lipschitz activation function with Lipschitz constant $1$
for
\begin{gather*}
	M_\ast=\biggl\lceil{d^\ast+q\choose d^\ast}(q+1)\biggl(\biggl[\frac{(\log\delta_n)^{2(2q+3)}}{\delta_n}\biggr]^{\frac{1}{p}}+1\biggr)^{d^\ast}\biggr\rceil,\\
	\alpha=\biggl[\frac{(\log\delta_n)^{2(2q+3)}}{\delta_n}\biggr]^{\frac{d^\ast+p(2q+3)+1}{p}}\frac{\log n}{\delta_n^2},
\end{gather*}
and
\(
	\delta_n=[(\log n)^{\frac{p+2d^\ast(2q+3)}{p}}/n]^{\frac{p}{2p+d^\ast}}
\).
Denote by $\mathcal{D}_n\coloneqq\{\Lambda(f):f\in\mathcal{H}_{M^\ast,d^\ast,d,\alpha}^{(l)}\}$ the sieve of neural network discriminators for the standard logistic cdf $\Lambda$.
\end{asm}

\cref{asm:nn} gives a sufficient condition for \cref{asm:sieve:entropy}, %
so we use this to derive the rate of convergence of the neural network discriminator.
If, in addition, $d^\ast<2p$, we have $\delta_n=o_P(n^{-1/4})$; this is easier to satisfy if the underlying dimension of the log likelihood ratio is low, regardless of the dimension of the input.

\begin{prop}[Rate of convergence of neural network discriminator] \label{prop:nn:rate}
Under \cref{asm:nn,asm:m,asm:discriminator}, $d_\theta(\hat{D}_\theta,D_\theta)=O_P^\ast(\delta_n)$.
\end{prop}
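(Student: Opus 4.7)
The plan is to deduce this proposition from \cref{thm:D:rate} by verifying its hypothesis, namely \cref{asm:sieve:entropy}, for the hierarchical neural network sieve $\mathcal{D}_n$ specified in \cref{asm:nn}. \Cref{asm:m,asm:discriminator} are already in place, so the only work is to show that the chosen $M^\ast$, $\alpha$, and $\delta_n$ are calibrated so that $J_{[]}(\delta_n,\mathcal{D}_{\theta,\delta_n},d_\theta)\lesssim\delta_n^2\sqrt{n}$ and that $J_{[]}(\delta,\mathcal{D}_{\theta,\delta},d_\theta)/\delta^\alpha$ has a decreasing majorant for some $\alpha<2$. The scaling in \cref{asm:nn} is exactly the one engineered by \citet{bk2019} so that this bound holds, so the argument essentially parallels theirs, with regression risk replaced by the cross-entropy pseudo-loss and the target function being $\log(p_0/p_\theta)$ rather than a regression function.

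First I would bound the sup-norm covering number of $\mathcal{H}^{(l)}_{M^\ast,d^\ast,d,\alpha}$. Since the activation is $1$-Lipschitz and the level $l$ is fixed, composing Lipschitz bounds layer by layer shows that each network is Lipschitz in its parameters with constant polynomial in $\alpha$ and $M^\ast$, on any ball of radius $R$ in input space. Because each of the $P_n\lesssim M^\ast d^{\ast 2}d$ parameters lies in $[-\alpha,\alpha]$, this yields $\log N_\infty(\varepsilon,\mathcal{H}^{(l)}_{M^\ast,d^\ast,d,\alpha},B_R)\lesssim P_n\log(C\alpha R/\varepsilon)$ on the ball $B_R$. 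Passing through $\Lambda$ (which is $1$-Lipschitz) transfers this bound to $\mathcal{D}_n$ in sup norm. Next I would convert sup-norm covers into $d_\theta$-brackets: given a sup-norm $\varepsilon$-cover $\{D_j\}$, the pairs $[(D_j-\varepsilon)\vee 0,(D_j+\varepsilon)\wedge 1]$ form brackets of $d_\theta$-width at most $C\varepsilon$ by the elementary bound $d_\theta(D_1,D_2)\leq C\|D_1-D_2\|_\infty$.

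To handle the fact that $\mathcal{X}$ is unbounded, I would truncate at $R_n\asymp\log n$ and use the subexponential tails of $P_0$ and $P_\theta$ in \cref{asm:nn} to show that the $d_\theta$-mass contributed by $\|X\|_\infty>R_n$ is of order $e^{-R_n}$, which is negligible relative to $\delta_n$. Outside $B_{R_n}$, a single trivial bracket $[0,1]$ suffices; inside $B_{R_n}$, the cover bound above applies with $R=R_n$. Combining,
\[
\log N_{[]}(\varepsilon,\mathcal{D}_{\theta,\delta_n},d_\theta)\lesssim P_n\bigl[\log(C\alpha\log n)+\log(1/\varepsilon)\bigr],
\]
so $J_{[]}(\delta_n,\mathcal{D}_{\theta,\delta_n},d_\theta)\lesssim\delta_n\sqrt{P_n\log(C\alpha\log n/\delta_n)}$. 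Plugging in $P_n\lesssim[(\log\delta_n)^{2(2q+3)}/\delta_n]^{d^\ast/p}$ and the choice of $\alpha$ in \cref{asm:nn}, one verifies $\delta_n^{-(d^\ast+p)/p}(\log n)^{C}\lesssim\delta_n\sqrt{n}$, equivalently $\delta_n^{(2p+d^\ast)/p}\gtrsim(\log n)^{C'}/n$, which is precisely the definition of $\delta_n$ in \cref{asm:nn}. Hence $J_{[]}(\delta_n,\mathcal{D}_{\theta,\delta_n},d_\theta)\lesssim\delta_n^2\sqrt{n}$. The second condition in \cref{asm:sieve:entropy} holds because the above bound on $J_{[]}(\delta,\mathcal{D}_{\theta,\delta},d_\theta)$ grows like $\delta\sqrt{\log(1/\delta)}$ in $\delta$, so dividing by $\delta^\alpha$ with any $\alpha\in(1,2)$ yields a decreasing majorant. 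Invoking \cref{thm:D:rate} then delivers $d_\theta(\hat{D}_\theta,D_\theta)=O_P^\ast(\delta_n)$.

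The main obstacle is the unbounded support: the off-the-shelf covering bounds for neural networks are stated for compact input domains, and the interaction between the Lipschitz constant (which grows polynomially in $\alpha$ and hence in $n$) and the truncation radius $R_n$ must be balanced so that the truncation error remains $o(\delta_n)$ while the log-covering remains $O(P_n\log n)$. This is what forces the specific polylogarithmic factors in the choice of $M^\ast$ and $\alpha$; carrying these factors through the entropy integral is the delicate bookkeeping step. A secondary obstacle is ensuring \cref{asm:discriminator} is consistent with the sieve: we need $D_\theta$ to be approximable in $d_\theta$ to within $o(\delta_n)$ by some element of $\mathcal{D}_n$, which follows from the Bauer--Kohler approximation theorem applied to the $(p,C)$-smooth hierarchical structure of $\log(p_0/p_\theta)$ together with the Lipschitz property of $\Lambda$.
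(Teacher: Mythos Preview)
Your overall strategy matches the paper's: verify \cref{asm:sieve:entropy} for the hierarchical sieve and invoke \cref{thm:D:rate}. The paper likewise uses \citet{bk2019} for the approximation step, truncating at $a_n=(-\log\delta_n)^2$ and exploiting subexponential tails to show $\inf_{D\in\mathcal{D}_n}d_\theta(D,D_\theta)\lesssim\delta_n$, exactly as you sketch under ``secondary obstacle.''

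The technical implementation of the bracketing bound differs. The paper does \emph{not} truncate the input to control entropy. Instead it proves a dedicated bracketing lemma (\cref{lem:bracket}) that gives $N_{[]}(\|\varepsilon F\|_{d_\theta},\mathcal{D}_n,d_\theta)\leq\lceil 2(L+1)(\tilde U C)^{L+1}d/\varepsilon\rceil^S$ with envelope $F(x)=\sigma_0+\|x\|_\infty$, and then observes $\|\varepsilon^2F\|_{d_\theta}^2\leq 2\varepsilon^2(P_0+P_\theta)F<\infty$ using only the finite first moments in \cref{asm:nn}. This yields brackets at \emph{every} scale $\varepsilon>0$ without reference to a truncation radius, so the entropy integral down to $0$ is immediate. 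Your route---sup-norm cover on $B_{R_n}$ glued to the trivial $[0,1]$ bracket outside---is workable, but note that the tail contributes a fixed amount $\asymp(P_0+P_\theta)(\|X\|_\infty>R_n)$ to \emph{every} bracket width, so for $\varepsilon$ below that floor you would need $R_n$ to depend on $\varepsilon$; you should make this explicit. The paper's envelope trick sidesteps the issue entirely.

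One concrete slip: the ``elementary bound'' $d_\theta(D_1,D_2)\leq C\|D_1-D_2\|_\infty$ is false in general, since $D$ may touch $0$ or $1$. What holds is $(\sqrt{D_1}-\sqrt{D_2})^2\leq|D_1-D_2|$, whence $d_\theta(D_1,D_2)^2\leq 2(P_0+P_\theta)|D_1-D_2|\leq 4\|D_1-D_2\|_\infty$, i.e.\ $d_\theta\lesssim\|\cdot\|_\infty^{1/2}$. This is why the paper's bracketing bound has $\varepsilon^2$ in the denominator; after correction your entropy bound becomes $P_n\log(C\alpha R_n/\varepsilon^2)$, which changes only a constant in the integral and does not affect the conclusion.
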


\begin{proof}%
We use \cref{lem:bracket} to bound the bracketing number in \cref{asm:sieve:entropy}.
For now, let us assume that $\mathcal{D}_n$ in \cref{asm:nn} satisfies the network structure of \cref{lem:bracket}; later, we calibrate the constants in reflection of the network structure in \cref{asm:nn}.
Since $D$ is nonnegative, we can extend $d_\theta$ to accommodate arbitrary functions $f_1$ and $f_2$ by $d_\theta(f_1,f_2)\coloneqq d_\theta(0\vee f_1,0\vee f_2)$.
In the notation of \cref{lem:bracket},
\begin{align*}
	\|\varepsilon^2F\|_{d_\theta}^2&=\sup_{D\in\mathcal{D}}d_\theta(D-\varepsilon^2F/2,D+\varepsilon^2F/2)^2\leq h_\theta(0,\varepsilon^2F)^2+h_\theta(0,\varepsilon^2F)^2\\
	&=2\varepsilon^2(P_0+P_\theta)F=2\varepsilon^2[2\sigma_0+(P_0+P_\theta)\|X\|_\infty]\eqqcolon B\varepsilon^2.
\end{align*}
Since $P_0$ and $P_\theta$ have bounded first moments, $B<\infty$.
Replacing $\varepsilon$ with $\varepsilon/\sqrt{B}$ yields $\|\frac{\varepsilon^2}{B}F\|_{d_\theta}\leq\varepsilon$.
Therefore, with \cref{lem:bracket},
\[
	\log N_{[]}(\varepsilon,\mathcal{D}_n,d_\theta)\leq\log N_{[]}(\|\tfrac{\varepsilon^2}{B}F\|_{d_\theta},\mathcal{D}_n,d_\theta)\leq S\log\bigl\lceil\tfrac{2 B(L+1)(\tilde{U}C)^{L+1}d}{\varepsilon^2}\bigr\rceil.
\]
Observe that for $0<\delta\leq e^a$,
\[
	\int_0^\delta\sqrt{1+a-\log\varepsilon}d\varepsilon=\tfrac{\sqrt{\pi}e^a}{2}\erfc\bigl(\sqrt{1+a-\log\delta}\bigr)+\delta\sqrt{1+a-\log\delta}\lesssim\delta\sqrt{1+a-\log\delta}.
\]
Therefore,
\begin{multline*}
	J_{[]}(\delta,\mathcal{D}_n,h_\theta)
	\lesssim\int_0^\delta\sqrt{1+S[\log(2 B(L+1)(\tilde{U}C)^{L+1}d)-2\log\varepsilon]_+}\,d\varepsilon\\
	\lesssim\delta\sqrt{1+S[\log(2 B(L+1)(\tilde{U}C)^{L+1}d)-2\log\delta]_+}
	\lesssim\delta\sqrt{1\vee[SL\log(\tilde{U}C)-S\log\delta]}.
\end{multline*}
Therefore, if we set
\begin{equation} \label{eq:delta}
	\delta_n=O\Bigl(\sqrt{\tfrac{SL\log(\tilde{U}C)+S\log n}{n}}\Bigr),
\end{equation}
$\mathcal{D}_n$ satisfies \cref{asm:sieve:entropy} with $\alpha=1.5$.
Now, we must choose $S$, $L$, $\tilde{U}$, and $C$ so that this rate is attainable and fast.
For the rate to be attainable, we must also have \cref{asm:discriminator}, for which we need that $\mathcal{D}_{\theta,\delta}$ is nonempty.
That is, the sieve $\mathcal{D}_n$ must contain an element in the $\delta_n$\hyp{}neighborhood of $D_\theta$, i.e., $\inf_{D\in\mathcal{D}_n}d_\theta(D,D_\theta)\lesssim\delta_n$.

Since $\mathcal{D}_n=\Lambda(\mathcal{H}^{(l)})$, we use \citet[Theorem 3]{bk2019} to find the network configuration that attains this inequality.
For this, we need to choose ``$N$, $\eta_n$, $a_n$, $M_n$'' in their notation; in doing so, we find ``$S$, $L$, $\tilde{U}$, $C$'' in our notation.
First, we set $N=q$ and $\eta_n=\delta_n^2$.
By subexponentiality, we have $\log P_0(\|X\|_\infty>a)+\log P_\theta(\|X\|_\infty>a)\lesssim-a$ for large $a$.
Therefore, we want $a_n\gg-2\log\delta_n$ so that the remainder term in \citet[Theorem 3]{bk2019} is small enough, that is, $(P_0+P_\theta)(\|X\|_\infty>a_n)\lesssim\delta_n^2$.%
\footnote{If we set $a_n\sim-2\log\delta_n$, we can only say $(P_0+P_\theta)(\|X\|_\infty>a_n)\lesssim\delta_n^c$ for some $c$.}
We can do this by setting, e.g., $a_n=(-\log\delta_n)^2$.
Finally, we want to choose $M_n$ so that $a_n^{N+q+3}M_n^{-p}\sim\delta_n$ since then \citet[Theorem 3]{bk2019} can bound the supremum term that appears below; set $M_n=(\log\delta_n)^{2(N+q+3)/p}/\delta_n^{1/p}$.
Let $A\subset[-a_n,a_n]^d$ be the set for which $(P_0+P_\theta)(A)\leq c\eta_n$ in \citet[Theorem 3]{bk2019}.
Then,
\begin{multline*}
	h_\theta(D,D_\theta)^2\leq\Bigl(\int_{\|x\|_\infty>a_n}+\int_A+\int_{\{\|x\|_\infty\leq a_n\}\setminus A}\Bigr)(\sqrt{D}-\sqrt{\vphantom{D}\smash{D_\theta}})^2(p_0+p_\theta)\\
	\leq(P_0+P_\theta)(\|X\|_\infty>a_n)+(P_0+P_\theta)(A)+\int_{\{\|x\|_\infty\leq a_n\}\setminus A}(\sqrt{D}-\sqrt{\vphantom{D}\smash{D_\theta}})^2(p_0+p_\theta).
\end{multline*}
The first two terms are bounded by $\delta_n^2+c\delta_n^2$.
For $D=\Lambda(f)$,
\begin{multline*}
	\int_{\{\|x\|_\infty\leq a_n\}\setminus A}(\sqrt{D}-\sqrt{\vphantom{D}\smash{D_\theta}})^2(p_0+p_\theta)
	=\int_{\{\|x\|_\infty\leq a_n\}\setminus A}\bigl(\sqrt{\Lambda\smash{(f)}}-\sqrt{\Lambda\smash{(\Lambda^{-1}\circ D_\theta)}}\bigr)^2(p_0+p_\theta)\\
	\leq\tfrac{2}{27}\|f-\Lambda^{-1}\circ D_\theta\|_{\infty,\{\|x\|_\infty\leq a_n\}\setminus A}^2=\tfrac{2}{27}\|f-\log\tfrac{p_0}{p_\theta}\|_{\infty,\{\|x\|_\infty\leq a_n\}\setminus A}^2,
\end{multline*}
since $\sqrt{\Lambda\smash{(\cdot)}}$ is Lipschitz with constant $1/(3\sqrt{3})$.
We may likewise bound $h_\theta(1-D,1-D_\theta)^2$. %
By \citet[Theorem 3]{bk2019}, $\inf_{f\in\mathcal{H}^{(l)}}\|f-\log\frac{p_0}{p_\theta}\|_{\infty,\{\|x\|_\infty\leq a_n\}\setminus A}\lesssim\delta_n$.
Thus, we obtain
\(
	\inf_{D\in\mathcal{D}_n}d_\theta(D,D_\theta)\lesssim\delta_n
\).

These configurations can be translated into our constants as $S=O(dd^\ast M_\ast K^l)\sim M_\ast$, $\tilde{U}=M_\ast\vee(4d^\ast)\vee K\sim M_\ast$, $C=\alpha$, and $L=2+3l=O(1)$, where \citet[Theorem 3]{bk2019} define
\begin{gather*}
	M_\ast={d^\ast+N\choose d^\ast}(N+1)(M_n+1)^{d^\ast}\sim M_n^{d^\ast}=\frac{(\log\delta_n)^{2d^\ast(N+q+3)/p}}{\delta_n^{d^\ast/p}},\\
	\alpha=\frac{M_n^{d^\ast+p(2N+3)+1}}{\eta_n}\log n=\frac{(\log\delta_n)^{2(N+q+3)[d^\ast+p(2N+3)+1]/p}}{\delta_n^{2+[d^\ast+p(2N+3)+1]/p}}\log n.
\end{gather*}
With these, (\ref{eq:delta}) becomes
\(
	\delta_n^2\sim M_\ast\frac{\log(M_\ast\alpha)+\log n}{n}
	\sim[(\log n)^{\frac{p+2d^\ast(N+q+3)}{p}}/n]^{\frac{p}{2p+d^\ast}}
\).
The result follows by substituting $N=q$ and invoking \cref{thm:D:rate}.
\end{proof}

The following lemma bounds the bracketing number of a (possibly sparse) neural network with bounded weights and Lipschitz activation functions.
The notation of the neural network is defined as follows.
Denote the hidden\hyp{}layer activation function by $\sigma:\mathbb{R}\to\mathbb{R}$ and the output activation function by $\Lambda:\mathbb{R}\to\mathbb{R}$.
Let $L$ be the number of hidden and output layers.
Let $w_{\ell ij}$ be the weight for the $i$th node in the $(\ell+1)$th layer on the $j$th node in the $\ell$th layer; for example, the input to the second node in the first layer is $w_{021}x_1+\cdots+w_{02U}x_U$, where $X=(x_1,\dots,x_U)$ is the input to the network.
Let $w_{\ell i}=(w_{\ell i1},\dots,w_{\ell iU})^\top$ be the column vector of weights for the $i$th node in the $(\ell+1)$th layer.
Let $w_\ell=(w_{\ell1},\dots,w_{\ell U})$ be the matrix with columns $w_{\ell i}$; note that for $\ell=L$, $w_L$ is just a column vector as there is only one output.
Let $w$ be the vector of all parameters.
Then, the discriminator is given by%
\footnote{If we include a constant input and a constant node (also known as the ``bias'' term), it is assumed to be already incorporated in $X$ and $w$.}
\[
	D(X;w)=\Lambda(w_L^\top\sigma(w_{L-1}^\top\sigma(\cdots w_1^\top\sigma(w_0^\top X)))),
\]
where $\sigma(\cdot)$ for a vector argument is elementwise application.

\begin{lem}[Bracketing number of neural network with bounded weights] \label{lem:bracket}
Let $\mathcal{F}$ be a class of neural networks defined as above. %
Denote the total number of nonzero weights by $S$ and the maximum number of nonzero weights in each node (except for the first layer taking inputs) by $\tilde{U}$.%
\footnote{The number of nonzero elements in each row of each matrix $w_\ell$, $\ell\geq1$, is bounded by $\tilde{U}$.}
Assume that $\sigma$ and $\Lambda$ are Lipschitz with constant $1$ and $\|w\|_\infty\leq C$ for some $C$.
Assume innocuously that $\tilde{U}C\geq2$ and let $\sigma_0\coloneqq|\sigma(0)|$.
Define an envelope $F:\mathbb{R}^d\to\mathbb{R}$ by $F(x)\coloneqq\sigma_0+\|x\|_\infty$.
Then, for every premetric $d_{\mathcal{F}}$ and $\|f\|_{d_{\mathcal{F}}}\coloneqq\sup_{g\in\mathcal{F}}d_{\mathcal{F}}(g-f/2,g+f/2)$,
\[
	N_{[]}(\|\varepsilon F\|_{d_{\mathcal{F}}},\mathcal{F},d_{\mathcal{F}})\leq\biggl\lceil\frac{2(L+1)(\tilde{U}C)^{L+1}d}{\varepsilon}\biggr\rceil^S.
\]
For a fully connected network, $\tilde{U}=U$ and $S=(LU+1)U+(d-U)U$. 
For a hierarchical network in \citet{bk2019}, $S=O(\tilde{U}^{(L+4)/3}d)$.
\end{lem}

\begin{proof}%
The neural network is expressed as
\(
	f(x;w)=\Lambda(w_L^\top\sigma(w_{L-1}^\top\sigma(\cdots w_1^\top\sigma(w_0^\top x))))
\).
We can bound the outputs of the $\ell$th layer by
\begin{align*}
	\|\sigma(w_{\ell-1}^\top\sigma(\cdots))\|_\infty&\leq\sigma_0+\|w_{\ell-1}^\top\sigma(\cdots)\|_\infty\leq\sigma_0+\tilde{U}C\|\sigma(\cdots)\|_\infty\\
	&\leq[1+\tilde{U}C+\cdots+(\tilde{U}C)^{\ell-1}]\sigma_0+\tilde{U}^{\ell-1}C^\ell d\|x\|_\infty\\
	&\leq\tilde{U}^{\ell-1}C^\ell(\tilde{U}\sigma_0+d\|x\|_\infty)\leq(\tilde{U}C)^\ell d(\sigma_0+\|x\|_\infty),
\end{align*}
where the fourth inequality holds for $\tilde{U}C\geq2$.
For two sets of weights, $w$ and $\tilde{w}$,
\begin{align*}
	|f(x;w)-f(x;\tilde{w})|
	&\leq\tilde{U}\|w_L-\tilde{w}_L\|_\infty(\|\sigma(w_{L-1}^\top\sigma(\cdots))\|_\infty\vee\|\sigma(\tilde{w}_{L-1}^\top\sigma(\cdots))\|_\infty)\\
	&\hspace{80pt}+\tilde{U}C\|\sigma(w_{L-1}^\top\sigma(\cdots))-\sigma(\tilde{w}_{L-1}^\top\sigma(\cdots))\|_\infty\\
	&\leq\tilde{U}^{L+1}C^Ld\|w_L-\tilde{w}_L\|_\infty(\sigma_0+\|x\|_\infty)+\cdots\\
	&{}+\tilde{U}^{L+1}C^Ld\|w_1-\tilde{w}_1\|_\infty(\sigma_0+\|x\|_\infty)+\tilde{U}^LC^Ld\|w_0-\tilde{w}_0\|_\infty\|x\|_\infty\\
	&\leq(L+1)\tilde{U}^{L+1}C^Ld\|w-\tilde{w}\|_\infty(\sigma_0+\|x\|_\infty).
\end{align*}
Let $A\coloneqq(L+1)\tilde{U}^{L+1}C^Ld$.
Partitioning the weight space $[-C,C]^S$ into cubes of length $2\varepsilon/A$ creates $\lceil C A/\varepsilon\rceil^S$ cubes.
Hence, the covering number is bounded as $N(\varepsilon,[-C,C]^S,\|\cdot\|_\infty)\leq\lceil C A/\varepsilon\rceil^S$.
The bound on the bracketing number then follows from \citet[Theorem 2.7.11]{vw1996}, observing that the proof thereof works for a premetric with modification of $2\varepsilon\|F\|$ to $\|2\varepsilon F\|_{d_{\mathcal{F}}}$.

For a fully connected network, the number of all weights is $dU$ (weights for the first layer) plus $(L-1)U^2$ (weights for the remaining hidden layers) plus $U$ (weights in the output layer), summing to $(LU+1)U+(d-U)U$.%
\footnote{If the network has a bias term, the actual variable weights are slightly fewer, but it does not change the order.}
For a network $\mathcal{H}^{(0)}$ in \citet{bk2019} (in their notation), the number of all weights is $A^{(0)}\coloneqq d(4d^\ast M_\ast)+4d^\ast M_\ast+M_\ast=4(1+d)d^\ast M_\ast+M_\ast$.
For $\mathcal{H}^{(1)}$, $A^{(1)}\coloneqq A^{(0)}K+K(4d^\ast M_\ast)+4d^\ast M_\ast+M_\ast=A^{(0)}K+4(1+K)d^\ast M_\ast+M_\ast$.
For $\mathcal{H}^{(l)}$,
\(
	A^{(l)}\coloneqq A^{(l-1)}K+4(1+K)d^\ast M_\ast+M_\ast
	=A^{(0)}K^l+\sum_{j=0}^{l-1}K^j[4(1+K)d^\ast M_\ast+M_\ast]
	=4d^\ast M_\ast[(1+d)K^l+\frac{1-K^l}{1-K}(1+K)]+M_\ast\frac{1-K^{l+1}}{1-K}=O(dd^\ast M_\ast K^l)
\).
Then use $L=2+3l$ and $\tilde{U}=M_\ast\vee(4d^\ast)\vee K$.
\end{proof}

\begin{rem}
\cref{lem:bracket} assumes a Lipschitz property for the activation and output functions, which accommodates ReLU, softplus, and sigmoid, but not perceptron.
\end{rem}

\section{Supporting Lemmas for the Main Text}

The following lemma shows local convergence of the loss needed for \cref{thm:theta:dist}.

\begin{lem}[Asymptotic distribution of objective function] \label{lem:lan:misspec}
Under \cref{asm:m,asm:dqm}, for every compact $K\subset\Theta$, uniformly in $h\in K$,
\begin{multline*}
	\!\!\!\!
	n[\mathbb{M}_{\theta_0+h/\sqrt{n}}(D_{\theta_0+h/\sqrt{n}})-\mathbb{M}_{\theta_0}(D_{\theta_0})]
	=-\sqrt{n}\mathbb{P}_0h^\top\dot{\ell}_{\theta_0}+\sqrt{n}(\mathbb{P}_0+\mathbb{P}_{\theta_0+h/\sqrt{n}})D_{\theta_0+h/\sqrt{n}}h^\top\dot{\ell}_{\theta_0}\\
	+n[(\mathbb{P}_{\theta_0+h/\sqrt{n}}-P_{\theta_0+h/\sqrt{n}})-(\mathbb{P}_{\theta_0}-P_{\theta_0})]\log(1-D_{\theta_0})
	+\tfrac{h^\top\tilde{I}_{\theta_0}h}{4}+o_P(1).
\end{multline*}
With \cref{asm:model:smooth,asm:misspec}, this reduces to
\[
	-\sqrt{n}\mathbb{P}_0 h^\top\dot{\ell}_{\theta_0}+\sqrt{n}(\mathbb{P}_0+\mathbb{P}_{\theta_0})D_{\theta_0}h^\top\dot{\ell}_{\theta_0}
	+\sqrt{n}\tilde{\mathbb{P}}_0 h^\top\tau_n+\tfrac{h^\top\tilde{I}_{\theta_0}h}{4}+o_P(1).
\]
\end{lem}

\begin{proof}%
Let $\theta\coloneqq\theta_0+h/\sqrt{n}$, $W\coloneqq\sqrt{\vphantom{D}\smash{D_\theta/D_{\theta_0}}}-1$, $\tilde{W}\coloneqq\sqrt{\vphantom{D}\smash{p_{\theta_0}/p_\theta}}-1$.
Observe that
\[
	n[\mathbb{M}_\theta(D_\theta)-\mathbb{M}_{\theta_0}(D_{\theta_0})]
	=n(\mathbb{P}_0+\mathbb{P}_\theta)\log\tfrac{D_\theta}{D_{\theta_0}}-n\mathbb{P}_\theta\log\tfrac{p_{\theta_0}}{p_\theta}+n(\mathbb{P}_\theta-\mathbb{P}_{\theta_0})\log(1-D_{\theta_0}).
\]
We examine each term separately.
By \cref{asm:dqm},
\begin{align*}
	n(P_\theta-P_{\theta_0})\log(1-D_{\theta_0})&=n\int(\sqrt{p_\theta}+\sqrt{p_{\theta_0}})(\sqrt{p_\theta}-\sqrt{p_{\theta_0}})\log(1-D_{\theta_0})\\
	&=\int\bigl(\sqrt{n}h^\top\dot{\ell}_{\theta_0}+\tfrac{h^\top\ddot{\ell}_{\theta_0}h}{2}+\tfrac{h^\top\dot{\ell}_{\theta_0}\dot{\ell}_{\theta_0}^\top h}{2}\bigr)p_{\theta_0}\log(1-D_{\theta_0})+o(1).
\end{align*}
The first term is zero since $M_\theta(D_\theta)-M_{\theta_0}(D_{\theta_0})\geq0$ and $M_\theta(D_\theta)-M_{\theta_0}(D_{\theta_0})=2\int D_{\theta_0}(\sqrt{p_\theta}-\sqrt{p_{\theta_0}})^2+o(h(\theta,\theta_0)^2)+(P_\theta-P_{\theta_0})\log(1-D_{\theta_0})$.%
\footnote{The term $P_{\theta_0}h^\top\dot{\ell}_{\theta_0}\log(1-D_{\theta_0})$ is the only term that is linear in $h=h(\theta,\theta_0)$, so if it is not zero, then $M_\theta(D_\theta)-M_{\theta_0}(D_{\theta_0})\geq0$ is violated.}
Therefore,
\(
	n(P_\theta-P_{\theta_0})\log(1-D_{\theta_0})=\frac{1}{2}P_{\theta_0}(h^\top\ddot{\ell}_{\theta_0}h+h^\top\dot{\ell}_{\theta_0}\dot{\ell}_{\theta_0}^\top h)\log(1-D_{\theta_0})+o(1)
\).

Using $\log x=2(\sqrt{x}-1)-(\sqrt{x}-1)^2+(\sqrt{x}-1)^2R(\sqrt{x}-1)$ for $R(x)=O(x)$,
\[
	n(\mathbb{P}_0+\mathbb{P}_\theta)\log\tfrac{D_\theta}{D_{\theta_0}}=2 n(\mathbb{P}_0+\mathbb{P}_\theta)W-n(\mathbb{P}_0+\mathbb{P}_\theta)W^2+n(\mathbb{P}_0+\mathbb{P}_\theta)W^2R(W_n).
\]
Let $\breve{I}_{\theta_0}\coloneqq 2 P_{\theta_0}D_{\theta_0}\dot{\ell}_{\theta_0}\dot{\ell}_{\theta_0}^\top$.
Observe that
\[
	(P_0+P_\theta)\bigl(\sqrt{n}W+\tfrac{h^\top\dot{\ell}_{\theta_0}}{2}(1-D_\theta)\bigr)^2
	=n\int\bigl[\sqrt{p_0+p_{\theta_0}}-\sqrt{p_0+p_\theta}+\tfrac{h^\top\dot{\ell}_{\theta_0}}{2\sqrt{n}}\sqrt{(1-D_\theta)p_\theta}\bigr]^2,
\]
which is $o(\|h\|^2/n)$ by \cref{lem:average,asm:dqm}.
Thus, the RHS converges to zero uniformly over every compact $K\subset\Theta$.
We draw two observations: (i) the mean and variance of $(\sqrt{n}W+(1-D_\theta)h^\top\dot{\ell}_{\theta_0}/2)(X_i)$, $X_i\sim(P_0+P_{\theta_n})/2$, converge to zero and so does the variance of $\sqrt{n}(\mathbb{P}_0+\mathbb{P}_\theta)(\sqrt{n}W+(1-D_\theta)h^\top\dot{\ell}_{\theta_0}/2)$ under \cref{asm:m};%
\footnote{This does not imply that the mean of $\sqrt{n}(\mathbb{P}_0+\mathbb{P}_\theta)(\sqrt{n}W+(1-D_\theta)h^\top\dot{\ell}_{\theta_0}/2)$ converges to zero.}
(ii) $(P_0+P_\theta)|nW^2-(1-D_\theta)^2(h^\top\dot{\ell}_{\theta_0}/2)^2|\to0$, so $n(\mathbb{P}_0+\mathbb{P}_\theta)W^2=(\mathbb{P}_0+\mathbb{P}_\theta)(1-D_\theta)^2(h^\top\dot{\ell}_{\theta_0}/2)^2+o_P(1)\to h^\top I_{\theta_0}h/4-h^\top\breve{I}_{\theta_0}h/8$.
Next,
\begin{gather*}
	n(P_0+P_\theta)W
	=-\tfrac{n}{2}h(p_0+p_{\theta_0},p_0+p_\theta)^2\conv-\tfrac{h^\top I_{\theta_0}h}{8}+\tfrac{h^\top\breve{I}_{\theta_0}h}{16},\\
	\sqrt{n}(P_0+P_\theta)(1-D_\theta)\tfrac{h^\top\dot{\ell}_{\theta_0}}{2}=\sqrt{n}P_\theta\tfrac{h^\top\dot{\ell}_{\theta_0}}{2}=\sqrt{n}(P_\theta-P_{\theta_0})\tfrac{h^\top\dot{\ell}_{\theta_0}}{2}\to\tfrac{h^\top I_{\theta_0}h}{2}.
\end{gather*}
This implies that the mean of
\(
	\sqrt{n}(\mathbb{P}_0+\mathbb{P}_\theta)(\sqrt{n}W+(1-D_\theta)h^\top\dot{\ell}_{\theta_0}/2)
\)
converges to $3h^\top I_{\theta_0}h/8+h^\top\breve{I}_{\theta_0}h/16$.
Combining with (i), we find
\[
	n(\mathbb{P}_0+\mathbb{P}_\theta)W=-\sqrt{n}(\mathbb{P}_0+\mathbb{P}_\theta)(1-D_\theta)\tfrac{h^\top\dot{\ell}_{\theta_0}}{2}+\tfrac{3h^\top I_{\theta_0}h}{8}+\tfrac{h^\top\breve{I}_{\theta_0}h}{16}+o_P(1).
\]
The remainder term $n(\mathbb{P}_0+\mathbb{P}_\theta)W^2R(W_n)$ vanishes by the same logic as \citet[Theorem 7.2]{v1998}.

Next, observe that
\(
	n\mathbb{P}_\theta\log\tfrac{p_{\theta_0}}{p_\theta}
	=2n\mathbb{P}_\theta\tilde{W}-n\mathbb{P}_\theta\tilde{W}^2+n\mathbb{P}_\theta\tilde{W}^2R(\tilde{W})
\)
and
\[
	P_\theta\bigl(\sqrt{n}\tilde{W}+\tfrac{h^\top\dot{\ell}_{\theta_0}}{2}\bigr)^2=n\int\bigl[\sqrt{p_{\theta_0}}-\sqrt{p_\theta}+\tfrac{h^\top\dot{\ell}_\theta}{2\sqrt{n}}\sqrt{p_\theta}\bigr]^2=o\bigl(\tfrac{\|h\|^2}{n}\bigr).
\]
Again, (i) the mean and variance of $(\sqrt{n}\tilde{W}+h^\top\dot{\ell}_{\theta_0}/2)(X_i)$, $X_i\sim P_\theta$, converge to zero and so does the variance of $\sqrt{n}\mathbb{P}_\theta(\sqrt{n}\tilde{W}+h^\top\dot{\ell}_{\theta_0}/2)$ under \cref{asm:m}; (ii) $P_\theta|n\tilde{W}^2-(h^\top\dot{\ell}_{\theta_0}/2)^2|\to0$, so $n\mathbb{P}_\theta\tilde{W}^2\to P_\theta(h^\top\dot{\ell}_{\theta_0}/2)^2\to h^\top I_{\theta_0}h/4$.
Next,
\(
	nP_\theta\tilde{W}=-nh(\theta,\theta_0)^2/2\conv-h^\top I_{\theta_0}h/8
\)
and
\(
	\sqrt{n}P_\theta h^\top\dot{\ell}_{\theta_0}/2\conv h^\top I_{\theta_0}h/2
\).
This implies that the mean of $\sqrt{n}\mathbb{P}_\theta(\sqrt{n}\tilde{W}+h^\top\dot{\ell}_{\theta_0}/2)$ converges to $3h^\top I_{\theta_0}h/8$.
Thus, we find
\[
	n\mathbb{P}_\theta\tilde{W}=-\sqrt{n}\mathbb{P}_\theta\tfrac{h^\top\dot{\ell}_{\theta_0}}{2}+\tfrac{3h^\top I_{\theta_0}h}{8}+o_P(1).
\]
We may once again ignore the remainder term $n\mathbb{P}_\theta\tilde{W}^2R(\tilde{W})$.
Altogether, with $\tilde{I}_{\theta_0}$ defined in \cref{asm:dqm},
\begin{multline*}
	n[\mathbb{M}_\theta(D_\theta)-\mathbb{M}_{\theta_0}(D_{\theta_0})]
	=-\sqrt{n}\mathbb{P}_0 h^\top\dot{\ell}_{\theta_0}+\sqrt{n}(\mathbb{P}_0+\mathbb{P}_\theta)D_\theta h^\top\dot{\ell}_{\theta_0}+\tfrac{h^\top\tilde{I}_{\theta_0}h}{4}\\
	+n[(\mathbb{P}_\theta-\mathbb{P}_{\theta_0})-(P_\theta-P_{\theta_0})]\log(1-D_{\theta_0})+o_P(1).
\end{multline*}
For the second claim, it remains to show that with \cref{asm:model:smooth},
\[
	\sqrt{n}(\mathbb{P}_0+\mathbb{P}_\theta)D_\theta h^\top\dot{\ell}_{\theta_0}-\sqrt{n}(\mathbb{P}_0+\mathbb{P}_{\theta_0})D_{\theta_0}h^\top\dot{\ell}_{\theta_0}=o_P(1).
\]
Note that $(P_0+P_\theta)D_\theta h^\top\dot{\ell}_{\theta_0}-(P_0+P_{\theta_0})D_{\theta_0}h^\top\dot{\ell}_{\theta_0}=0$.
Write
\[
	\sqrt{n}(\mathbb{P}_0+\mathbb{P}_\theta)(D_\theta-D_{\theta_0})h^\top\dot{\ell}_{\theta_0}+\sqrt{n}(\mathbb{P}_\theta-\mathbb{P}_{\theta_0})D_{\theta_0}h^\top\dot{\ell}_{\theta_0}.
\]
The second term converges to $h^\top\breve{I}_{\theta_0}h/2$ under \cref{asm:model:smooth}.
Since $p/(p+x)$ is convex in $x\geq0$ for $p>0$,
\(
	D_{\theta_0}\frac{p_{\theta_0}-p_\theta}{p_0+p_{\theta_0}}\leq D_\theta-D_{\theta_0}\leq D_\theta\frac{p_{\theta_0}-p_\theta}{p_0+p_\theta}
\)
by Taylor's theorem.
Therefore, by \cref{asm:dqm},
\begin{align*}
	{-(\mathbb{P}_0+\mathbb{P}_\theta)}D_{\theta_0}(1-D_{\theta_0})(h^\top\dot{\ell}_{\theta_0})^2+o_P(1)
	&\leq\sqrt{n}(\mathbb{P}_0+\mathbb{P}_\theta)(D_\theta-D_{\theta_0})h^\top\dot{\ell}_{\theta_0}\\
	&\leq-(\mathbb{P}_0+\mathbb{P}_\theta)D_\theta(1-D_\theta)(h^\top\dot{\ell}_{\theta_0})^2+o_P(1).
\end{align*}
Thus, the first term converges to $-P_{\theta_0}D_{\theta_0}(h^\top\dot{\ell}_{\theta_0})^2=-h^\top\breve{I}_{\theta_0}h/2$ in probability.
\end{proof}

The Bernstein ``norm'' of a function $f$ is defined as $\|f\|_{P,B}\coloneqq\sqrt{2P(e^{|f|}-1-|f|)}$; this induces a premetric without the triangle inequality \citep[p.\ 324]{vw1996}.%
\footnote{A {\em premetric} on a class of functions $\mathcal{F}$ is a function $d:\mathcal{F}\times\mathcal{F}\to\mathbb{R}$ that satisfies $d(f,f)=0$ and $d(f,g)=d(g,f)\geq0$ for every $f,g\in\mathcal{F}$.} %
The next lemma bounds the Bernstein ``norm'' of a log likelihood ratio by the Hellinger distance without assuming a bounded likelihood ratio.

\begin{lem}[Bernstein ``norm'' of log likelihood ratio; {\citealp[Lemma 2.1 (iv)]{kr2022}}] \label{lem:vaart}
For any pair of probability measures $P$ and $P_0$ such that $P_0(p_0/p)<\infty$,
\[
	\bigl\|\tfrac{1}{2}\log\tfrac{p}{p_0}\bigr\|_{P_0,B}^2
	\leq2h(p,p_0)^2\bigl[1+P_0\bigl(\tfrac{p_0}{p}\bigm|\tfrac{p_0}{p}\geq\tfrac{25}{16}\bigr)\bigr],
\]
where $P_0(p_0/p\mid p_0/p\geq a)=0$ if $P_0(p_0/p\geq a)=0$.
\end{lem}

\begin{rem}
Similarly, we have
\begin{gather*}
	\bigl\|\tfrac{1}{2}\log\tfrac{D}{D_\theta}\bigr\|_{P_0,B}^2\leq2h_\theta(D,D_\theta)^2\bigl[1+P_0\bigl(\tfrac{D_\theta}{D}\bigm|\tfrac{D_\theta}{D}\geq\tfrac{25}{16}\bigr)\bigr],\\
	\bigl\|\tfrac{1}{2}\log\tfrac{1-D}{1-D_\theta}\bigr\|_{P_\theta,B}^2\leq2h_\theta(1-D,1-D_\theta)^2\bigl[1+P_\theta\bigl(\tfrac{1-D_\theta}{1-D}\bigm|\tfrac{1-D_\theta}{1-D}\geq\tfrac{25}{16}\bigr)\bigr].
\end{gather*}
\end{rem}

\begin{lem}[Bernstein ``norm'' of log discriminator ratio] \label{lem:distance}
For every $\theta_1,\theta_2\in\Theta$,
\[
	\bigl\|\log\tfrac{D_{\theta_1}}{D_{\theta_2}}\bigr\|_{P_0,B}^2\leq 8 h(\theta_1,\theta_2)^2,\qquad
	\bigl\|\log\tfrac{(1-D_{\theta_1})\circ T_{\theta_1}}{(1-D_{\theta_2})\circ T_{\theta_2}}\bigr\|_{\tilde{P}_0,B}^2\leq8\tilde{h}(\theta_1,\theta_2)^2.
\]
\end{lem}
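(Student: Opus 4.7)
The plan is to work directly from the definition of the Bernstein (pseudo-)norm, $\|f\|_{P,B}^2=2P(e^{|f|}-1-|f|)$, together with the elementary scalar inequality $e^{|x|}-1-|x|\le e^{x}+e^{-x}-2$, which holds for every $x\in\mathbb{R}$ since $e^{-x}-1+x\ge 0$ by convexity. This detour---rather than the finer splitting used in \cref{lem:vaart}---turns the integrand into a ratio whose pointwise size is easy to control.

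For the first inequality, substitute $D_\theta=p_0/(p_0+p_\theta)$ to get $D_{\theta_1}/D_{\theta_2}=(p_0+p_{\theta_2})/(p_0+p_{\theta_1})$. With $f\coloneqq\log(D_{\theta_1}/D_{\theta_2})$, the identity $e^f+e^{-f}-2=(e^f-1)^2/e^f$ yields the pointwise bound
\[
	e^{|f|}-1-|f|\le\frac{(p_{\theta_1}-p_{\theta_2})^2}{(p_0+p_{\theta_1})(p_0+p_{\theta_2})}.
\]
Integrating against $P_0$ and factoring $(p_{\theta_1}-p_{\theta_2})^2=(\sqrt{p_{\theta_1}}-\sqrt{p_{\theta_2}})^2(\sqrt{p_{\theta_1}}+\sqrt{p_{\theta_2}})^2$, it then suffices to show the uniform bound
\[
	\frac{p_0\bigl(\sqrt{p_{\theta_1}}+\sqrt{p_{\theta_2}}\bigr)^2}{(p_0+p_{\theta_1})(p_0+p_{\theta_2})}\le 4.
\]
This follows from $(\sqrt{a}+\sqrt{b})^2\le 2(a+b)$ combined with $\frac{p_0 p_{\theta_i}}{(p_0+p_{\theta_1})(p_0+p_{\theta_2})}=\frac{p_0}{p_0+p_{\theta_{3-i}}}\cdot\frac{p_{\theta_i}}{p_0+p_{\theta_i}}\le 1$, applied for $i=1,2$. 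Putting these pieces together gives $\|f\|_{P_0,B}^2\le 2\cdot 4\cdot h(\theta_1,\theta_2)^2=8\,h(\theta_1,\theta_2)^2$.

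For the second inequality, note that $(1-D_\theta)\circ T_\theta=1/(1+u_\theta)$ with $u_\theta\coloneqq(p_0/p_\theta)\circ T_\theta$, so the log-ratio of interest equals $\log((1+u_{\theta_2})/(1+u_{\theta_1}))$ and by definition $\tilde{h}(\theta_1,\theta_2)^2=\tilde{P}_0(\sqrt{u_{\theta_1}}-\sqrt{u_{\theta_2}})^2$. The whole argument of the previous paragraph runs in parallel, with the triple $(p_0,p_{\theta_1},p_{\theta_2})$ replaced by $(1,u_{\theta_1},u_{\theta_2})$, integration against $P_0$ replaced by integration against $\tilde{P}_0$, and the pointwise bound taking the form $(\sqrt{u_{\theta_1}}+\sqrt{u_{\theta_2}})^2/[(1+u_{\theta_1})(1+u_{\theta_2})]\le 4$, which is immediate from $u/(1+u)\le 1$ applied twice. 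The same arithmetic then yields $\|\cdot\|_{\tilde{P}_0,B}^2\le 8\tilde{h}(\theta_1,\theta_2)^2$.

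There is no real obstacle here: the only nontrivial manipulation is recognizing the algebraic identity that converts $e^f+e^{-f}-2$ into a squared difference of the $p_{\theta_i}$'s (respectively $u_{\theta_i}$'s) divided by the product of the two sums, after which all pointwise estimates reduce to the elementary inequality $x/(1+x)\le 1$. In particular, there is no need to invoke \cref{lem:vaart}, which is tailored to the harder situation where the denominator in the ratio need not dominate the integrating measure.
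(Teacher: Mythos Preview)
Your proof is correct and arrives at the same constant $8$, but it proceeds along a genuinely different path from the paper's argument.

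The paper splits on the sign of the log-ratio and applies the scalar inequality $e^{|x|}-1-|x|\le 2(e^{x/2}-1)^2$ (valid for $x\ge 0$) on each half. This yields two terms of the form $P_0\bigl(\sqrt{D_{\theta_i}/D_{\theta_j}}-1\bigr)^2$; substituting $D_\theta=p_0/(p_0+p_\theta)$ and using $p_0\le p_0+p_{\theta_i}$ reduces both to $\int(\sqrt{p_0+p_{\theta_1}}-\sqrt{p_0+p_{\theta_2}})^2$, after which the bound follows from the monotonicity of the Hellinger distance under addition of a common nonnegative mass, i.e.\ $(\sqrt{c+a}-\sqrt{c+b})^2\le(\sqrt{a}-\sqrt{b})^2$. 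The second inequality is handled analogously with an extra step exploiting $(1-D_\theta)\le 1$.

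Your route replaces the sign split by the symmetric bound $e^{|x|}-1-|x|\le e^x+e^{-x}-2=(e^x-1)^2/e^x$, which immediately produces the ratio $(p_{\theta_1}-p_{\theta_2})^2/[(p_0+p_{\theta_1})(p_0+p_{\theta_2})]$. You then isolate the Hellinger factor $(\sqrt{p_{\theta_1}}-\sqrt{p_{\theta_2}})^2$ by bounding the residual multiplicative factor pointwise by $4$. This avoids both the case distinction and the ``common mass'' monotonicity step, and it transfers verbatim to the second inequality via the substitution $(p_0,p_{\theta_1},p_{\theta_2})\mapsto(1,u_{\theta_1},u_{\theta_2})$. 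The paper's approach has the advantage of reusing exactly the same Bernstein-norm machinery that appears in \cref{lem:maximal} and the proof of \cref{thm:theta:rate:mis}; yours is more self-contained and algebraically uniform across the two statements.
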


\begin{proof}%
Since $e^{|x|}-1-|x|\leq2(e^{x/2}-1)^2$ for $x\geq0$,
\begin{align*}
	\bigl\|\log\tfrac{D_{\theta_1}}{D_{\theta_2}}\bigr\|_{P_0,B}^2
	&\leq4P_0\Bigl(\sqrt{\tfrac{D_{\theta_1}}{D_{\theta_2}}}-1\Bigr)^2\mathbbm{1}\{D_{\theta_1}\geq D_{\theta_2}\}+4P_0\Bigl(\sqrt{\tfrac{D_{\theta_2}}{D_{\theta_1}}}-1\Bigr)^2\mathbbm{1}\{D_{\theta_1}<D_{\theta_2}\}\\
	&\leq4P_0\bigl(\sqrt{\tfrac{p_0+p_{\theta_2}}{\smash{p_0+p_{\theta_1}}}}-1\bigr)^2+4P_0\bigl(\sqrt{\tfrac{p_0+p_{\theta_1}}{\smash{p_0+p_{\theta_2}}}}-1\bigr)^2\\
	&\leq8\int(\sqrt{p_0+p_{\theta_1}}-\sqrt{p_0+p_{\theta_2}})^2
	\leq8\int(\sqrt{p_{\theta_1}}-\sqrt{p_{\theta_2}})^2
	\leq8h(\theta_1,\theta_2)^2.
\end{align*}
Similarly,
\[
	\bigl\|\log\tfrac{(1-D_{\theta_1})\circ T_{\theta_1}}{(1-D_{\theta_2})\circ T_{\theta_2}}\bigr\|_{\tilde{P}_0,B}^2
	\leq4\tilde{P}_0\Bigl(\sqrt{\tfrac{(1-D_{\theta_1})\circ T_{\theta_1}}{\smash{(1-{}}D_{\theta_2}\smash{)\circ T_{\theta_2}}}}-1\Bigr)^2+4\tilde{P}_0\Bigl(\sqrt{\tfrac{(1-D_{\theta_2})\circ T_{\theta_2}}{\smash{(1-{}}D_{\theta_1}\smash{)\circ T_{\theta_1}}}}-1\Bigr)^2
	\leq8\tilde{h}(\theta_1,\theta_2)^2
\]
since
\begin{align*}
	\tilde{P}_0\Bigl(\sqrt{\tfrac{(1-D_{\theta_1})\circ T_{\theta_1}}{\smash{(1-{}}D_{\theta_2}\smash{)\circ T_{\theta_2}}}}-1\Bigr)^2
	&\leq\tilde{P}_0\bigl(\tfrac{1}{\sqrt{\vphantom{D^\theta}\smash{(1-D_{\theta_2})\circ T_{\theta_2}}}}-\tfrac{1}{\sqrt{\vphantom{D^\theta}\smash{(1-D_{\theta_1})\circ T_{\theta_1}}}}\bigr)^2\\
	&\leq\tilde{P}_0\bigl(\sqrt{\smash{\tfrac{p_0}{p_{\theta_2}}}\circ T_{\theta_2}}-\sqrt{\smash{\tfrac{p_0}{p_{\theta_1}}}\circ T_{\theta_1}}\bigr)^2=\tilde{h}(\theta_1,\theta_2)^2.
	\tag*\qedhere
\end{align*}
\end{proof}

\begin{lem}[Hellinger distance of sums of densities] \label{lem:average}
For arbitrary densities $p$, $p_0$, $p_1$,
\[
	h(p+p_0,p+p_1)^2=\int\tfrac{p_0}{p+p_0}(\sqrt{p_0}-\sqrt{p_1})^2+o(h(p_0,p_1)^2).
\]
\end{lem}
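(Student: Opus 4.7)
The plan is to reduce the claim to showing that the ratio
\[
	R \;\coloneqq\; \frac{(\sqrt{p_0}+\sqrt{p_1})^2}{(\sqrt{p+p_0}+\sqrt{p+p_1})^2}
\]
differs negligibly from $\frac{p_0}{p+p_0}$ when weighted by $(\sqrt{p_0}-\sqrt{p_1})^2$. First I would apply the elementary identity $(\sqrt{a}-\sqrt{b})^2=(a-b)^2/(\sqrt{a}+\sqrt{b})^2$ twice—once with $a=p+p_0,\,b=p+p_1$ and once (in reverse) to pull the factor $(\sqrt{p_0}+\sqrt{p_1})^2$ out of $(p_0-p_1)^2$—to obtain the exact rewriting
\[
	h(p+p_0,p+p_1)^2=\int (\sqrt{p_0}-\sqrt{p_1})^2 \, R.
\]
At $p_1=p_0$ one has $R=\frac{p_0}{p+p_0}$, so the target equality reduces to showing $\int(\sqrt{p_0}-\sqrt{p_1})^2[R-\frac{p_0}{p+p_0}]=o(h(p_0,p_1)^2)$.

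Next I would unpack $R-\frac{p_0}{p+p_0}$ algebraically. Writing $s_i=\sqrt{p_i}$, $u=\sqrt{p+p_0}$, $v=\sqrt{p+p_1}$ and clearing denominators, the numerator factors as $[us_1-s_0v]\,[2us_0+us_1+s_0v]$, and the first bracket simplifies via
\[
	us_1-s_0 v=\frac{p(s_1-s_0)(s_1+s_0)}{us_1+s_0v}.
\]
Hence $R-\frac{p_0}{p+p_0}$ carries an explicit factor of $(s_1-s_0)$ multiplied by a quotient of nonnegative terms, each bounded above by a multiple of the product $(u+v)(s_0+s_1)$. Multiplied by the weight $(\sqrt{p_0}-\sqrt{p_1})^2=(s_0-s_1)^2$, the integrand is thus of order $|s_0-s_1|^3$, strictly higher order than the $|s_0-s_1|^2$ integrand that generates $h(p_0,p_1)^2$.

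To finish, I would dominate the bracket $|R-\frac{p_0}{p+p_0}|$ by the constant $1$ (both ratios take values in $[0,1]$) and invoke pointwise convergence $R\to\frac{p_0}{p+p_0}$ whenever $\sqrt{p_1}\to\sqrt{p_0}$; a.e.\ convergence along a subsequence is guaranteed by $h(p_0,p_1)\to 0$. Dominated convergence against the integrable weight $(\sqrt{p_0}-\sqrt{p_1})^2$, combined with the order-$(s_1-s_0)$ bound on the bracket from the factorization above, delivers the desired $o(h(p_0,p_1)^2)$ conclusion.

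The hardest part is the last step: dominated convergence on its own yields only $o(1)$, not $o(h(p_0,p_1)^2)$, so one must genuinely use the algebraic factorization to extract the extra $(s_1-s_0)$ factor and turn the integrand into a third-order quantity. Some care is also needed on sets where $p_0$ or $p_1$ vanish, since the common-denominator bound degrades there; in the applications of the lemma (Hellinger approximations under DQM), these sets are controlled by the smoothness of the underlying parametric family, which makes the rate bound immediate.
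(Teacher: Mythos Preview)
Your algebraic route---rewriting $h(p+p_0,p+p_1)^2=\int(\sqrt{p_0}-\sqrt{p_1})^2\,R$ and then analyzing $R-\tfrac{p_0}{p+p_0}$---is different from, and more explicit than, the paper's one-line proof. The paper simply asserts that $x\mapsto\sqrt{p+x^2}$ is ``uniformly differentiable'' with derivative $x/\sqrt{p+x^2}$ and Taylor-expands around $x=\sqrt{p_0}$; your factorization makes the remainder structure visible rather than hiding it in a Taylor remainder.

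There is, however, a genuine gap in your final step, and it cannot be closed at the stated generality: the lemma is actually false for ``arbitrary densities.'' Your factorization gives an extra pointwise factor of $|s_1-s_0|$, but $\int|s_0-s_1|^3\cdot(\text{bounded})$ is not in general $o\bigl(\int(s_0-s_1)^2\bigr)$, and the dominated-convergence idea fails because the weight $(\sqrt{p_0}-\sqrt{p_1})^2$ is not a fixed integrable majorant. Concretely, on $[0,1]$ take $p\equiv1$, $p_0=2\,\mathbbm{1}_{[0,1/2]}$, and $p_1^{(n)}=(2-2\epsilon_n)\mathbbm{1}_{[0,1/2]}+2\,\mathbbm{1}_{[1/2,\,1/2+\epsilon_n/2]}$; then $h(p_0,p_1)^2\sim\epsilon_n$, while the mass $p_1$ places on $\{p_0=0\}$ contributes $(2-\sqrt{3})\epsilon_n$ to the left side and nothing to $\int\tfrac{p_0}{p+p_0}(\sqrt{p_0}-\sqrt{p_1})^2$, so the remainder is $\Theta(h^2)$, not $o(h^2)$. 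The paper's argument has the same defect: $x/\sqrt{p+x^2}$ is not uniformly continuous in $x$ uniformly over $p\geq0$ (near $x=0$ it jumps from $0$ to nearly $1$ as $p\downarrow0$), so the Taylor remainder is not uniformly $o(|s_1-s_0|)$. Your closing diagnosis is therefore exactly right and is the substantive point: the statement should be read under the DQM regularity that governs the paper's applications, which rules out the pathology of $p_1$-mass migrating onto $\{p_0=0\}$.
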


\begin{proof}%
Since $\sqrt{p+x^2}$ is uniformly differentiable in $x$ with derivative $x/\sqrt{p+x^2}$, %
the result follows by expanding $\sqrt{p_1}$ around $\sqrt{p_0}$.
\end{proof}

\section{Additional Notes on the Empirical Application}

\subsection{Identifying Role of Health Status} \label{sec:health}

The health status is a variable that was not used in the moments of \citetalias{dfj}; we argue that this gives additional variation to identify the bequest motive. %
Disentangling the bequest motive from the medical expenditure risk is a challenging task. As the bequest is a luxury good, we may expect that its identifying power comes from wealthy individuals. However, wealthy individuals are also ones with the longest life expectancy, being motivated to save for medical expenses.
Indeed, \citetalias{dfj} document that the medical expenditure for the rich skyrockets after age 95, reaching \$15,000 by age 100.
However, if the health condition diminishes their life expectancy, those with shorter horizons would face much less incentive to save for the coming medical expenses while as much incentive to save for bequests.

\begin{figure}[t!]
\centering
\begin{subfigure}[t]{0.45\textwidth}
\centering \includegraphics[page=1]{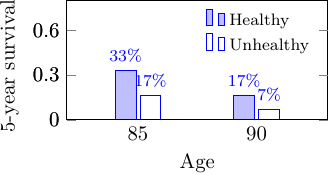}
\caption{Men's five\hyp{}year survival rates.} \label{fig:1g}
\end{subfigure}
\qquad
\begin{subfigure}[t]{0.45\textwidth}
\centering \includegraphics[page=2]{fig_c3s.pdf}
\caption{Women's five\hyp{}year survival rates.} \label{fig:1h}
\end{subfigure}
\\\vspace{6pt}
\begin{subfigure}[t]{0.45\textwidth}
\centering \includegraphics[page=1]{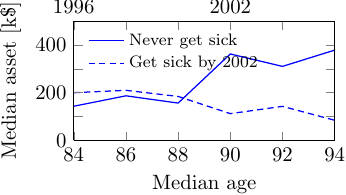}
\caption{Men's asset.} \label{fig:1a}
\end{subfigure}
\qquad
\begin{subfigure}[t]{0.45\textwidth}
\centering \includegraphics[page=5]{fig_c3.pdf}
\caption{Women's asset.} \label{fig:1b}
\end{subfigure}
\\\vspace{6pt}
\begin{subfigure}[t]{0.45\textwidth}
\centering \includegraphics[page=2]{fig_c3.pdf}
\caption{Men's medical expenses.} \label{fig:1e}
\end{subfigure}
\qquad
\begin{subfigure}[t]{0.45\textwidth}
\centering \includegraphics[page=6]{fig_c3.pdf}
\caption{Women's medical expenses.} \label{fig:1f}
\end{subfigure}
\\\vspace{6pt}
\begin{subfigure}[t]{0.45\textwidth}
\centering \includegraphics[page=3]{fig_c3.pdf}
\caption{Men's permanent income.} \label{fig:1c}
\end{subfigure}
\qquad
\begin{subfigure}[t]{0.45\textwidth}
\centering \includegraphics[page=7]{fig_c3.pdf}
\caption{Women's permanent income.} \label{fig:1d}
\end{subfigure}
\caption{Profiles by gender and health. (\subref{fig:1a}) to (\subref{fig:1f}) are for 4--5th PIqs in Cohort 3. Solid lines are for those who stay healthy for the duration of their observation; dashed lines for those who are healthy in 1996 and become unhealthy by 2002.}
\label{fig_health_asset}
\end{figure}

We find some evidence of this in our dataset.
\Cref{fig:1g,fig:1h} are the proportions of individuals who survive for the next five years at ages 85 and 90, conditional on gender and health.
We see that the health status, along with gender, is a strong predictor of life expectancy in years when the medical expenditure soars.

Heterogeneity in the survival materializes as the difference in savings.
\Cref{fig:1a,fig:1b} give the trajectories of the median assets for the 4th and 5th PI quintiles in Cohort 3.
The solid lines are those who were healthy throughout the survey periods and the dashed lines are those who were healthy in 1996 but reported unhealthy in 1998, 2000, or 2002.
We see that men who were exposed to the health shock (hence the survival shock) dig into their savings much more than healthy men. With higher survival rates, women exhibit the trend to a much lesser degree.

Such difference in the asset profiles seems driven neither by the difference in medical expenses nor by survival selection among the rich.
\Cref{fig:1e,fig:1f} show the median medical expenses during the same periods; we observe similar trajectories across gender and health.
\Cref{fig:1c,fig:1d} show the median PI quantiles of the survivors; if there is attrition of rich or poor individuals that affects the median assets, we expect to see a change in the median PI quantiles. However, they do not differ much by at least age 90 while bifurcation of the asset profiles begins at age 90.

These findings suggest that the difference in the asset profiles is attributable to the change in the saving behaviors. 
The health status changes the exposure to the medical expenditure risk through the survival probability, which then induces changes in the saving behavior by shifting the balance between the bequest motive and the medical expenditure risk.

\subsection{Estimation and Inference Procedure}\label{sec:estimation_al}

Estimation of original GAN for a generative model of images is known to be challenging \citep{arjovsky2017towards}.
Two main issues are (i) the mode\hyp{}seeking behavior of the discriminator due to imbalance between $n$ and $m$ and (ii) the flat gradient of the loss in $\theta$ when actual and synthetic samples are easily distinguishable.

Imbalance in the sample sizes arises naturally in our context; in order to reduce the variance of $\hat{\theta}$, we may want to have $m \gg n$.
With this, however, there is a risk that $\hat{D}_\theta$ becomes near zero everywhere.
We follow the literature recommendation to perform data augmentation of the actual data; we resample the histories of assets of individuals with replacement until $n$ and $m$ are even.

The flat gradient seems not nearly as pervasive when the generative model is a typical structural economic model. %
\cite{arjovsky2017towards} show that the flat gradient is related to the support problem in the generative model (see Lemma 1 and Theorem 2.1 therein) where the set of realizable images is of measure zero in the space of all images.
However, economic models are much lower\hyp{}dimensional and economic data often span the same (subset of a) Euclidean space across parameter values.
In our empirical application, outcomes are continuous and disjoint supports are not a first\hyp{}order problem. Nonetheless, gradients of the loss can approach $0$ when the conditional distributions of the simulated outcomes and of the actual outcomes are far apart, slowing the naive gradient descent.
We implement two speeding strategies popular in training neural networks: the Nesterov accelerated gradient (NAG), an accelerated gradient descent featuring momentum 
\citep{nesterov27method}, and resilient propagation (RPROP), an adaptive learning rate algorithm \citep{riedmiller1993direct}.

Finally, we give details on the tuning parameters for the discriminator.
Recall that $\mathcal{D}$ is the set of feedforward neural networks with two hidden layers with 20 and 10 neurons, respectively, with the sigmoid activation function.
We use the R Keras package, namely, the default ADAM optimization algorithm that incorporates the stochastic gradient descent and backpropagation for fast computation of the gradient.
For the stochastic gradient descent, we select a small batch of 120 samples per gradient calculation and a large number of epochs (2000).
In contrast to other implementations of GAN, we train the discriminator ``to completion'' and fix the seed of the stochastic gradient before each training to preserve non\hyp{}randomness of the criterion.
We find that this strategy delivers the most reliable estimates, albeit computationally intensive.
To avoid overfitting, we make use of callback options that track the evolution of out\hyp{}of\hyp{}sample accuracy measures over epochs.

For standard errors, we use poor (wo)man's bootstrap based on 50 replications.
For each replication, we solve nine one\hyp{}dimensional optimization problems in the directions defined in \citet[Corollary 2]{honore2017poor}.
We treat the network configuration as fixed, so we do not repeat cross validation for each bootstrap sample.

\subsection{Fit and Counterfactual Simulations} \label{sec:counterfactual}

Similarly as \citetalias{dfj}, we look at the assets one period before deaths to compare the fit and counterfactuals.
Individuals who passed away during the survey periods are divided into five groups of permanent income quintiles (PIqs).
We take the assets in the last survey when they were alive and sum these across individuals in each group.

\Cref{table_beq} shows the actual and simulated assets one period before deaths.
Adversarial $X_2$ baseline and \citetalias{dfj} baseline rows are the simulations with parameters equal to the estimates of our preferred specification and of \citetalias{dfj}.
Our estimates fit the assets for low PIqs well but overestimates high PIqs, while \citetalias{dfj} show the opposite pattern.%
\footnote{Trimming observations above top 1\% of mean assets significantly decreases discrepancy between observed assets and the predicted assets with $X_2$ of the actual data. Results are available upon request. In addition, the gap in the fit between the poor and the rich might be attributed to the rich doing inter vivos transfers more often than the poor, biasing the assets of the rich downwards toward the end of their lives \citep{mcgarry1999inter}.}

\begin{table}[t!]

\caption{Fit of the savings and counterfactual simulation without bequest motive and medical expense risk. ``No bequest'' rows are the simulation of the model with $\vartheta=0$ (so $\phi\equiv0$). ``No medical risk'' rows are the simulation of the model with $\sigma\equiv0$ (so $\log m_t=m$). Each number is a cross\hyp{}sectional sum of assets of individuals one period before their deaths in the units of k\$, a proxy for their intended bequest. Percentages are relative to the corresponding baselines.}

\label{table_beq}
\centering
\small
\begin{tabular}{lccccc}
\toprule \midrule
& \multicolumn{5}{c}{Permanent income quintile} \\
\cmidrule{2-6}
& 1st & 2nd & 3rd & 4th & 5th \\
\midrule
Actual data & 18,191 & 25,266 & 42,006 & 50,495 & 85,814 \\
[0.4em]
Adversarial $X_2$ baseline & 20,441 & 26,366 & 51,339 & 62,662 & \llap{1}10,385 \\
[0.2em]
\quad No bequest & 17,644 & 21,587 & 42,586 & 50,631 & 95,212 \\
\quad (\% difference) & (13.7\%) & (18.1\%) & (17.1\%) & (19.2\%) & (13.7\%) \\
[0.2em]
\quad No medical risk & 18,890 & 23,252 & 43,789 & 49,385 & 90,204 \\
\quad (\% difference) & (\hphantom{0}7.6\%) & (11.8\%) & (14.7\%) & (21.2\%) & (18.3\%) \\
[0.4em]
\citetalias{dfj} baseline & 16,527 & 19,672 & 38,157 & 42,737 & 83,814 \\
[0.2em]
\quad No bequest & 16,342 & 19,605 & 37,387 & 42,425 & 83,563 \\
\quad (\% difference) & (\hphantom{0}1.1\%) & (\hphantom{0}0.3\%) & (\hphantom{0}2.1\%) & (\hphantom{0}0.7\%) & (\hphantom{0}0.5\%) \\
[0.2em]
\quad No medical risk & 16,440 & 19,242 & 36,157 & 38,053 & 76,080 \\
\quad (\% difference) & (\hphantom{0}0.5\%) & (\hphantom{0}2.2\%) & (\hphantom{0}5.4\%) & (11.0\%) & (\hphantom{0}9.4\%) \\
\bottomrule
\end{tabular}
\end{table}

Next, we perform two counterfactual simulations to measure the elderly's saving motive in terms of (i) bequest and (ii) medical expenditure risk.
We simulate the model with the same parameters except that we kill either the bequest incentive, $\phi\equiv0$, or the medical expenditure risk, $\sigma\equiv0$.
The ``(\% difference)'' rows give the difference of the baseline and counterfactual relative to the baseline.

The contribution of the bequest motive to the savings differs substantially between our estimates and \citetalias{dfj}.
In our estimates, the lack of the bequest motive decreases the savings by 13.7\% to 19.2\%, while \citetalias{dfj} estimates suggest at most 2.1\% decrease.
This is largely due to the difference in the estimates of the curvature $k$.
According to our estimates, the bequest motive is an important and substantial source of savings for both the poor and the rich.
This finding is consistent with \citet{lockwood2018incidental} who uses additional data on annuity takeup to identify the bequest motive.

The contribution of the medical expenditure risk looks much more in line for the two models.
The amount of savings to prepare for uncertain medical expenses is substantial in both predictions.
This is because rich individuals live long and hence are at high risk of large medical expenses.
Poor individuals do not survive long enough and are more likely to be covered by social insurance programs.

To summarize, our adversarial estimates reveal with precision that the bequest motive contributes in similar magnitudes to the slow decrease in the elderly's savings across PIqs. 
The uncertainty in medical expenses contribute less for poor individuals.

\begin{singlespacing}
\bibliographystyle{ecta}
\bibliography{reference}
\end{singlespacing}

\end{document}